\documentclass[letterpaper,twocolumn,11pt,unpublished]{quantumarticle}
\pdfoutput=1
\usepackage{verbatim}
\usepackage[utf8]{inputenc}
\usepackage[english]{babel}
\usepackage[table]{xcolor}
\usepackage[T1]{fontenc}
\usepackage{amsmath}
\usepackage{amssymb}
\usepackage{hyperref}
\usepackage{tikz}
\usepackage{lipsum}
\usepackage{graphicx}
\usepackage{epsfig}
\usepackage{epstopdf}
\usepackage{braket}
\usepackage{hyperref}
\usepackage{amsthm}
\usepackage{enumitem}
\usepackage{dsfont}
\usepackage{bbold}
\usepackage{bm}
\usepackage{eurosym}
\usepackage{diagbox}
\usepackage[numbers,sort&compress]{natbib}
\usepackage[ruled,vlined]{algorithm2e}
\usepackage{comment}

\DeclareMathOperator{\N}{\mathbb{N}}
\DeclareMathOperator{\Z}{\mathbb{Z}}

\newcommand{\be}{\begin{equation}}
\newcommand{\ee}{\end{equation}}
\newcommand{\ba}{\begin{aligned}}
\newcommand{\ea}{\end{aligned}}

\newcommand{\R}{\mathbb{R}}
\newcommand{\bc}{\begin{center}}
\newcommand{\ec}{\end{center}}
\newcommand{\beq}{\begin{equation}}
\newcommand{\eeq}{\end{equation}}
\newcommand{\beqq}{\begin{equation*}}
\newcommand{\eeqq}{\end{equation*}}
\newcommand{\beqa}{\begin{align}}
\newcommand{\eeqa}{\end{align}}
\newcommand{\barr}{\begin{array}}
\newcommand{\earr}{\end{array}}
\newcommand{\bi}{\begin{itemize}}
\newcommand{\ei}{\end{itemize}}

\newcommand{\C}{\mathbb{C}}

\newcommand*{\coloneqq}{\mathrel{\vcenter{\baselineskip0.5ex \lineskiplimit0pt \hbox{\scriptsize.}\hbox{\scriptsize.}}} =}

\newtheorem{lem}{Lemma}
\newtheorem{theo}{Theorem}

\newtheorem{defi}{Definition}
\newtheorem*{lem*}{Lemma}

\DeclareMathOperator{\poly}{poly}

\DeclareMathOperator{\Tr}{Tr}

\begin{document}

\title{Exponentially-improved effective descriptions of physical bosonic systems}

\author{Varun Upreti}
\email{varun.upreti@inria.fr}
\affiliation{DIENS, \'Ecole normale sup\'erieure, PSL University, CNRS, INRIA, 45 rue d’Ulm, Paris, 75005, France}
\author{Nicolás Quesada}
\orcid{0000-0002-0175-1688}
\affiliation{Département de génie physique, École polytechnique de Montréal, Montréal, QC, H3T 1J4, Canada}
\author{Ulysse Chabaud}
\email{ulysse.chabaud@inria.fr}
\orcid{0000-0003-0135-9819}
\affiliation{DIENS, \'Ecole normale sup\'erieure, PSL University, CNRS, INRIA, 45 rue d’Ulm, Paris, 75005, France}

\maketitle

\begin{abstract}
The effective description of a bosonic quantum system identifies the minimum finite dimension required 
to capture its essential dynamics. This effective dimension plays an important role in the complexity of classical and quantum algorithms for learning and simulating bosonic systems. While generic 
bosonic states require a dimension scaling as $1/\epsilon^2$ for a precision of approximation $\epsilon$, 
here we identify a natural energy condition which allows us to improve this scaling exponentially
to $\log(1/\epsilon)$. We then prove that most bosonic quantum states satisfy this condition, and in particular those produced by combining Gaussian dynamics with generic energy-preserving dynamics, which include the output states of universal bosonic quantum circuits. We apply this finding to enhance learning algorithms for bosonic quantum states and we further obtain new classical simulation algorithms for a large class of bosonic systems. Finally, using efficient decompositions of Kerr gates as sums of Gaussian gates, we significantly refine these classical simulation algorithms for universal bosonic quantum circuits. Our results demonstrate that physical bosonic systems are significantly 
more well-behaved than previously assumed, allowing for efficient descriptions even at high precision.
\end{abstract}

\section{Introduction}

The physics of many natural systems is governed by bosonic degrees of freedom, leading to hallmark phenomena such as black-body radiation \cite{bose1924plancks}, superfluidity in liquid Helium \cite{London1938} and the Hong--Ou--Mandel effect \cite{Hong1987}. In addition, bosonic platforms have recently emerged as a promising avenue for building quantum computers \cite{Gottesman2001,Knill2001,Menicucci2006,madsen2022quantum}, given their ability to generate complex quantum states \cite{yokoyama2013ultra,larsen2025integrated} and remarkable error-correction capabilities \cite{michael2016new,terhal2020towards}. In this context, classical or quantum algorithms for simulating bosonic systems become important: not only to better understand the physics of these systems \cite{Yalouz2021encodingstrongly,Tong2022provablyaccurate,hanada2025quantumbosons}, but also to pinpoint the conditions for ``quantum computational speedup'' using them \cite{Mari2012,chabaud2023resources,Dias2024,calcluth2025,upreti2025interplay}. 

However, the classical and quantum simulation of bosonic systems poses a challenge unique to this type of physical system: bosons live in an infinite-dimensional Hilbert space, and this makes their simulation using classical or finite-dimensional qubit-based quantum computers challenging. Therefore, for simulation purposes, one may look for an ``effective dimension'' of the system, for instance by truncating the Fock space (or boson number space) of the bosonic quantum states such that the finite-dimensional truncated state still capture the essential physics of the system \cite{arzani2025}. If the effective dimension of a bosonic quantum system is small enough throughout a given bosonic dynamics, one can then efficiently simulate the dynamics on a classical computer by tracking the description of the truncated state throughout. Efficient descriptions of the initial quantum state evolving under a bosonic Hamiltonian also becomes important when trying to simulate these Hamiltonian dynamics using a qubit-based quantum computer, as underlined in \cite{Tong2022provablyaccurate,hanada2025quantumbosons}.

Beyond classical and quantum simulation, the effective dimension of bosonic quantum states emerges as a key bottleneck for their efficient learning. In particular, energy-based boson number cutoffs for generic bosonic states require an effective dimension scaling as $1/\epsilon^2$ with respect to the desired precision $\epsilon$ between the original and the truncated state, thereby leading to inefficient learning of bosonic states \cite{mele2024learning}. 

Here we show that for most physical systems, we can do exponentially better, with boson number cutoffs scaling only as $\log(1/\epsilon)$. We do so by introducing exponential-energy-based cutoffs for bosonic states which lead to this improvement, and our main technical result is to prove that bosonic states produced by universal bosonic quantum circuits have bounded exponential-energy.

Our results have immediate applications for the learning theory of bosonic quantum states. Specifically, we demonstrate that learning physical bosonic states is significantly more efficient than suggested by previous bounds \cite{mele2024learning}, as these states admit much more efficient effective descriptions than previously known.

\begin{figure*}[t]
    \centering
    \includegraphics[width=\linewidth]{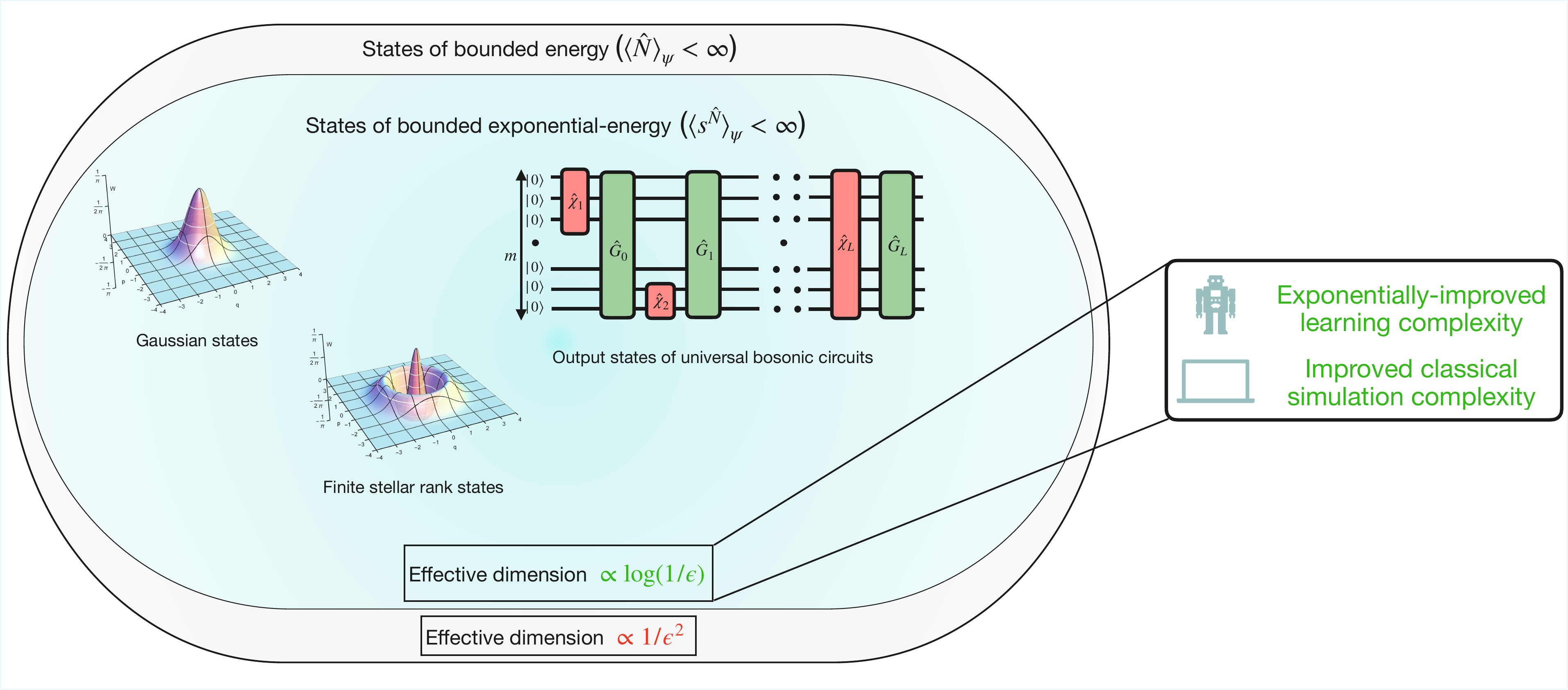}
    \caption{\textbf{Exponentially reduced effective dimension of physical bosonic quantum states.} While energy-based cutoffs for bosonic states lead to an effective dimension scaling as $1/\epsilon^2$ \cite{mele2024learning} in terms of the desired precision $\epsilon$, we consider the set $\mathcal S$ of states with bounded exponential-energy (Definition \ref{defi:bounded_exp_energy_states}) as a physically relevant class of bosonic quantum states, and show that the effective dimension of any $\ket{\psi} \in \mathcal S$ scales as $\log(1/\epsilon)$ (Lemma \ref{lem:efficient_description}). This set is known to include Gaussian states \cite{mele2025communication} and single-mode states of finite stellar rank \cite{cerf2025}, and our main technical contribution is to prove that output states of universal bosonic circuits also belong to $\mathcal S$ (Theorem \ref{theo:exp_energy_bound_simplified}). Combining these results yields an exponential improvement in effective state descriptions of physical bosonic quantum states, with applications to substantially more efficient learning of bosonic states (Theorem \ref{theo:precision_efficient_learning}) and classical simulation of bosonic circuits (Theorems \ref{theo:Fock_state_simulation} and \ref{theo:kerr_gate_simul}).}
    \label{fig:summary}
\end{figure*}

Furthermore, we provide new classical algorithms for simulating bosonic quantum circuits composed of Gaussian and energy-preserving non-Gaussian gates, either by tracking the succinct classical description of the evolved state in the Fock space throughout, made possible by the improved effective description, or, focusing specifically on circuits composed of Gaussian and Kerr gates, by combining our main result with decompositions of self-Kerr and cross-Kerr gates as sums of Gaussian gates. Since bosonic circuits composed of Gaussian and Kerr gates are universal in the standard model of bosonic (continuous-variable) quantum computing \cite{Lloyd1999}, our results highlight the potential of these improved effective descriptions for classical simulation of bosonic computations. 


\section{Results}\label{sec:results}

Our results are structured as follows (see Figure \ref{fig:summary}).
In Section \ref{sec:boundedexpE}, we define the set of states with bounded exponential-energy (Definition \ref{defi:bounded_exp_energy_states}) and we show that these states have exponentially reduced effective dimension in terms of the desired precision (Lemma \ref{lem:efficient_description}). Then, in Section \ref{sec:univ}, we prove that the states produced by universal bosonic quantum circuits have bounded exponential-energy (Theorem \ref{theo:exp_energy_bound_simplified}). Finally, we demonstrate the consequences of our results for learning bosonic quantum states (Theorem~\ref{theo:precision_efficient_learning}) and classically simulating universal bosonic computations (Theorems \ref{theo:Fock_state_simulation} and \ref{theo:kerr_gate_simul}) in Section \ref{sec:applications}.

We start by introducing the necessary preliminary material in the next Section \ref{sec:prelim}. Additional techniques used in the work, including an approximate decomposition of Kerr gates as sums of Gaussian gates based on Diophantine approximation of potential independent interest, are reported in \hyperref[sec:methods]{Methods}.

\subsection{Preliminaries}
\label{sec:prelim}
We refer the reader to \cite{NielsenChuang} for background on quantum information theory and to \cite{Braunstein2005,ferraro2005,Weedbrook2012} for continuous-variable (CV) quantum information material. Hereafter, the sets $\N, \R$, $\Z$ and $\C$ are the set of natural, real, integers and complex numbers respectively, with a * in the exponent when 0 is removed from the set and with a `+' in the subscript to refer to the non-negative subset of that particular set, when applicable.

In CV quantum information, a mode refers to a specific degree of freedom of a CV quantum system, and is the equivalent of a qubit in the CV regime. In this paper, $m \in \N^*$ denotes the number of modes in the system.

The single-mode creation and annihilation operator are denoted by $\hat a^\dagger$ and $\hat a$ respectively, and they satisfy $[\hat a, \hat a^\dagger] = \hat a \hat a^\dagger - \hat a^\dagger \hat a =\mathbb{I}$. The multimode number operator is given by 
\begin{equation}
    \hat N = \sum_{i=1}^m \hat a_i^\dagger \hat a_i = \sum_{i=1}^m \hat N_i.
\end{equation}
We denote by $\ket{\bm n}$ the multimode Fock state, with $\bm n = (n_1,\dots,n_m) \in \mathbb{N}^m$. These are eigenstates of the number operator: $\hat N\ket{\bm n} = |\bm n|\ket{\bm{n}}$, where we denote
\begin{equation}
    |\bm n| := \sum_{i=1}^m n_i.
\end{equation}
The set of Fock states $\{\ket{\bm n}\}_{\bm n \in \N^m}$ forms a basis, therefore any state $\ket{\psi}$ can be written as 
\begin{equation}
    \ket{\psi} = \sum_{\bm n} \psi_{\bm n} \ket{\bm n}.
\end{equation}
Furthermore, the energy of state $\ket{\psi}$ can be written as the expectation value of its multimode number operator as 
\begin{equation}
    E(\psi) = \bra{\psi} \hat N \ket{\psi},
\end{equation}
where we have dropped the state-independent zero point energy factor for simplicity. For any $s>1$, we also refer to the operator $s^{\hat N}$ as an exponential-energy operator.

The following result provides a generic bound on the effective dimension of energy-bounded bosonic quantum states, as a direct consequence of the gentle measurement lemma \cite{Winter1999,AnnaMele2024}:
\begin{lem}[Energy-based truncation of bosonic quantum states] \label{lem:energy_cutoff} Given a quantum state $\ket{\psi}$ such that $\bra{\psi}\hat N\ket{\psi} \leq E$, then by denoting $\ket{\psi_k}$ its normalized projection onto the space with total boson number $\leq k$, it follows that
\begin{equation}
   \frac12 \left\|\ket{\psi}\!\bra{\psi} -\ket{\psi_k}\!\bra{\psi_k}\right\|_1 \leq \sqrt{\frac{E}{k}}.
\end{equation}
Therefore, for boson number cutoff $k = E/\epsilon^2$, we have
\begin{equation}
     \frac12\left\|\ket{\psi}\!\bra{\psi} -\ket{\psi_k}\!\bra{\psi_k}\right\|_1 \leq\epsilon.
\end{equation}
\end{lem}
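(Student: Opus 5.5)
The plan is to reduce the trace distance between two pure states to their overlap, and then bound the overlap with Markov's inequality applied to the number operator. Let $\Pi_k$ denote the projector onto the span of the Fock states $\ket{\bm n}$ with $|\bm n| \leq k$. Set $p := \bra{\psi}\Pi_k\ket{\psi}$, so that $\ket{\psi_k} = \Pi_k\ket{\psi}/\sqrt{p}$; this is well defined as soon as $p>0$, which the Markov bound below guarantees whenever $E<k+1$. For pure states we have the identity
\begin{equation}
\frac12\left\|\ket{\psi}\!\bra{\psi}-\ket{\psi_k}\!\bra{\psi_k}\right\|_1=\sqrt{1-|\braket{\psi|\psi_k}|^2},
\end{equation}
and here $\braket{\psi|\psi_k} = \bra{\psi}\Pi_k\ket{\psi}/\sqrt{p} = \sqrt{p}$. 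The trace distance is therefore exactly $\sqrt{1-p}$, where $1-p = \bra{\psi}(\mathbb{I}-\Pi_k)\ket{\psi}$ is the probability of observing total boson number strictly larger than $k$. This is precisely the content of the gentle measurement lemma \cite{Winter1999,AnnaMele2024} for a projective measurement on a pure state, so I can either cite it or use the explicit computation above.

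Next I bound $1-p$. Since $\hat N$ is diagonal in the Fock basis, the projector satisfies the operator inequality $\mathbb{I}-\Pi_k \leq \hat N/(k+1)$: on each Fock state $\ket{\bm n}$ with $|\bm n|\geq k+1$ the right-hand side has eigenvalue $|\bm n|/(k+1)\geq 1$, and on the remaining Fock states the left-hand side vanishes. Taking the expectation value in $\ket{\psi}$ gives
\begin{equation}
1-p \leq \frac{E}{k+1}\leq\frac{E}{k}.
\end{equation}
Taking square roots yields the first claim. Substituting $k=E/\epsilon^2$ then gives $\sqrt{E/k}=\epsilon$; if $E/\epsilon^2$ is not an integer, one takes the ceiling, which only improves the bound.

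There is no serious obstacle in this argument. The only points needing care are the exact pure-state trace-distance formula, which holds because both states are pure so that their difference has rank at most two, and the well-definedness of $\ket{\psi_k}$. The latter is handled by noting that when $E\geq k$ the bound $\sqrt{E/k}\geq 1$ holds trivially, so we may assume $E<k$, in which case $p>0$.
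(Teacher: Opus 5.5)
Your proof is correct and is essentially the paper's argument written out: the paper simply cites the gentle measurement lemma, whose pure-state projective case is exactly your identity $\frac12\|\ket{\psi}\!\bra{\psi}-\ket{\psi_k}\!\bra{\psi_k}\|_1=\sqrt{1-p}$ (the paper does the same computation in its proof of Lemma~\ref{lem:efficient_description}), combined with the Markov-type tail bound $1-p\le E/k$ that you derive from $\mathbb{I}-\Pi_k\le \hat N/(k+1)$. For non-integer $k$ you do not need the ceiling, since $\Pi_k=\Pi_{\lfloor k\rfloor}$ and $\mathbb{I}-\Pi_k\le \hat N/(\lfloor k\rfloor+1)\le \hat N/k$.
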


This result implies that, for any quantum state with bounded energy, picking an effective dimension scaling as $1/\epsilon^2$ and truncating the state ensures a precision of $\epsilon$ in trace distance. Note that a similar bound holds if the truncation is performed without renormalizing \cite{Winter1999}.

Finally, we introduce standard CV quantum operations. Products of unitary operations generated by Hamiltonians that are quadratic in the creation and annihilation operators of the modes are called Gaussian unitary operations, and states produced by applying a Gaussian unitary operation to the vacuum state are Gaussian states. These include passive Gaussian unitary gates such as beamsplitters and phase-shifters that preserve the boson number, the multimode displacement gate $\hat D(\bm{\alpha})$, with $\bm \alpha = (\alpha_1,\alpha_2,\dots,\alpha_m) \in \mathbb{C}^m$, and the multimode squeezing operator $\hat S(\bm\xi)$ with $\bm\xi = (\xi_1,\xi_2,\dots,\xi_m) \in \mathbb{C}^m$ \cite{BUCCO}. Furthermore, by the Bloch Messiah decomposition \cite{ferraro2005}, any Gaussian unitary $\hat G$ can be decomposed as
\begin{equation}\label{eq:Bloch_Messiah}
    \hat G = \hat V \hat D(\bm{\alpha}) \hat S (\bm r) \hat U,
\end{equation}
where $\bm \alpha \in \C^m$, $\bm r \in \R_+^m$, $\hat V$ and $\hat U$ are passive Gaussian unitary gates.

In their seminal paper on continuous-variable quantum computing \cite{Lloyd1999}, Lloyd and Braunstein suggested Gaussian and cubic phase gates, as well as Gaussian and Kerr gates, as universal gate sets for bosonic computations. While most follow-up work in bosonic computations has focused on circuits composed of Gaussian and cubic phase gates \cite{Sefi2011,Marshall2015,Miyata2016,Alexander2018,zheng2023gaussian,Budinger2024}, recent results have shown that energy growth in these circuits makes them unphysical \cite{chabaud2025energy} and that Kerr gates perform better at implementing logical operations on some encoded qubits~\cite{hastrup2021unsuitability,boudreault2026using}. Hereafter, we thus focus on universal circuits composed of Gaussian and energy-preserving non-Gaussian gates to analyze physically realizable bosonic computations. Prominent examples of energy-preserving (commuting with the number operator) non-Gaussian gates are the self-Kerr gate $\hat \kappa (x) = \exp(i\pi x \hat N^2)$ and the cross-Kerr gate $\hat{c\kappa}(x) = \exp(i2\pi x\hat N_1\otimes\hat N_2)$. 

\subsection{States with bounded exponential-energy}
\label{sec:boundedexpE}

There are several ways of defining ``physical'' bosonic states in the literature, with common definitions including states with bounded energy, states with bounded higher order energy moments, or states in the Schwartz space \cite{hall2013quantum}. In this work, we focus on the following set of bosonic states:
\begin{defi}[Bounded exponential-energy states]\label{defi:bounded_exp_energy_states}
    We define the set of $m$-mode bounded exponential-energy states $\mathcal S$ as
    \begin{equation}
        \mathcal S = \{\ket{\psi}: \bra{\psi}s^{\hat N}\ket{\psi} < \infty  \text{ for some }s > 1\}.
    \end{equation}
\end{defi}
\noindent Note that $\mathcal S$ is a dense subset of the set of bosonic quantum states (for the trace norm), since it contains all finite superpositions of Fock states. Other examples of states belonging to this set include multimode Gaussian states \cite{mele2025communication} and single-mode quantum states of finite stellar rank \cite{cerf2025}. As we prove in the next section, quantum states generated by universal bosonic circuits composed of Gaussian and energy-preserving non-Gaussian gates also belong to $\mathcal S$. 

We now discuss some properties of states in $\mathcal S$. Firstly, the wavefunctions of single-mode states $\ket{\psi} \in \mathcal S$ are particularly well-behaved, as they extend to holomorphic functions over the complex plane \cite{cerf2025}. In addition, since
\begin{equation}
    \bra{\psi}s^{\hat N}\ket{\psi} = \sum_{\bm{n}} s^{|\bm n|} |\psi_{\bm n}|^2,
\end{equation}
where $\bm n = (n_1,n_2,\dots,n_m) \in \mathbb{N}^m$, $|\bm n| = \sum_{k=1}^m n_k$ and $\psi_{\bm n}$ are the Fock state amplitudes of $\ket{\psi}$, the exponential-energy condition $\bra{\psi}s^{\hat N}\ket{\psi} < \infty$ implies an exponential decay in the Fock state amplitudes of a state $\ket{\psi} \in \mathcal S$. A natural way to decompose the set $\mathcal S$ is
\begin{equation}
    \mathcal S = \cup_{s > 1, E \in \R_{+}} \mathcal S_{s,E},
\end{equation}
with the set of states $\mathcal S_{s,E}$ defined as
\begin{equation}\label{eq:set_bounded}
\mathcal S_{s,E} = \{\ket{\psi} :\bra{\psi} s^{\hat N}\ket{\psi} \leq s^{E}\}.
\end{equation}
In particular, $\ket{\psi} \in \mathcal S$ implies that $\ket{\psi} \in \mathcal S_{s,E}$ for some $s > 1, E \in \R_+$. Jensen's inequality further implies that $s^{\bra{\psi}\hat N\ket{\psi}}\le \bra{\psi}s^{\hat N}\ket{\psi}$, so states in $\mathcal S_{s,E}$ also have energy bounded by $E$.
 
The exponential decay of Fock state amplitudes of bosonic states leads to an exponentially smaller effective dimension in terms of the desired precision:
\begin{lem}[Exponential-energy-based truncation of bosonic quantum states]\label{lem:efficient_description}
    Given a quantum state $\ket{\psi} \in \mathcal S_{s,E}$, the normalized Fock state truncation of $\ket{\psi}$ up to total boson number $k$, $\ket{\psi_k}$, is such that
    \begin{equation}
        \frac12 \|\ket{\psi}\!\bra{\psi} - \ket{\psi_k}\!\bra{\psi_k}\|_1 \leq \sqrt{\frac{s^E}{s^k}}.
    \end{equation}
    Therefore, choosing
    \begin{equation}\label{eq:boson_number_cutoff}
        k =E + 2\log_s(1/\epsilon),
    \end{equation} 
    we obtain
    \begin{equation}
        \frac12 \|\ket{\psi}\bra{\psi} - \ket{\psi_k}\bra{\psi_k}\|_1 \leq \epsilon.
    \end{equation}
\end{lem}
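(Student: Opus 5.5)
The plan is to mirror the proof of Lemma~\ref{lem:energy_cutoff}, but to control the tail mass with the exponential-energy operator through a Markov-type inequality instead of the number operator.

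\textbf{Reduction to the tail mass.} Write $\ket{\psi}=\sum_{\bm n}\psi_{\bm n}\ket{\bm n}$ and let $\Pi_k$ be the projector onto total boson number $\le k$. Set $p=\bra{\psi}\Pi_k\ket{\psi}=\sum_{|\bm n|\le k}|\psi_{\bm n}|^2$, so that $\ket{\psi_k}=\Pi_k\ket{\psi}/\sqrt p$. For pure states the trace distance is $\sqrt{1-|\braket{\psi|\psi_k}|^2}$, and $\braket{\psi|\psi_k}=\bra{\psi}\Pi_k\ket{\psi}/\sqrt p=\sqrt p$. Hence
\begin{equation}
\frac12\|\ket{\psi}\!\bra{\psi}-\ket{\psi_k}\!\bra{\psi_k}\|_1=\sqrt{1-p},
\end{equation}
and it suffices to bound the tail mass $1-p=\sum_{|\bm n|>k}|\psi_{\bm n}|^2$. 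Equivalently, one could cite the gentle measurement lemma as in Lemma~\ref{lem:energy_cutoff}.

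\textbf{Tail bound.} Since $s>1$, every term with $|\bm n|>k$ satisfies $s^{|\bm n|}/s^k\ge1$. Therefore
\begin{equation}
1-p\le \sum_{|\bm n|>k}\frac{s^{|\bm n|}}{s^k}|\psi_{\bm n}|^2\le s^{-k}\bra{\psi}s^{\hat N}\ket{\psi}\le \frac{s^E}{s^k},
\end{equation}
where the last step uses $\ket{\psi}\in\mathcal S_{s,E}$. Taking square roots gives the first claim. For the second claim, substitute $k=E+2\log_s(1/\epsilon)$, which gives $s^{E-k}=\epsilon^2$ and hence a trace distance of at most $\epsilon$.

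\textbf{Technical points.} If $k$ is not an integer, the truncation at $\lfloor k\rfloor$ coincides with the one at $k$. The bound only used $|\bm n|>k$, so the estimate still holds. One also needs $p>0$ for the normalization to make sense, and this follows from the bound whenever $s^{E-k}<1$. The only genuine step is the Markov-type inequality with $s^{\hat N}$ in place of $\hat N$, which is immediate, so I expect no serious obstacle.
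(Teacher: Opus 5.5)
Your proposal is correct and takes essentially the same route as the paper: you reduce the trace distance to $\sqrt{1-p}$ with $p=\bra{\psi}\Pi_k\ket{\psi}$, bound the tail mass $\sum_{|\bm n|>k}|\psi_{\bm n}|^2$ by $s^{-k}\bra{\psi}s^{\hat N}\ket{\psi}\le s^{E-k}$, and then substitute $k=E+2\log_s(1/\epsilon)$. Your write-up is slightly cleaner than the paper's, whose tail sum is written over $|\bm n|<k$ where $|\bm n|>k$ is meant and which uses strict inequalities where only non-strict ones are justified.
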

\noindent The proof of Lemma \ref{lem:efficient_description} is given in Section \ref{appsec:efficient_description} of the Supplementary Material, and consists of upper bounding the (in)fidelity $1 - |\braket{\psi_k|\psi}|^2$ in terms of $s,k$ and $E$.

Lemma \ref{lem:efficient_description} leads to an exponential improvement of the effective dimension of states with bounded exponential-energy with respect to the desired precision $\epsilon$, as compared to standard energy-based truncation (see Lemma \ref{lem:energy_cutoff}), from $1/\epsilon^2$ to $\log(1/\epsilon)$.

\subsection{Output states of universal bosonic circuits have bounded exponential-energy}
\label{sec:univ}

In this section, we prove the main technical result of our work, namely that the states produced by universal bosonic circuits have bounded exponential-energy. We first formally define this set of states:
\begin{defi}[Output states of universal bosonic circuits]  \label{defi:G_nG}
    We define the set of output states of universal $m$-mode bosonic circuits as follows:
    \begin{equation}
        \mathcal C_L  = \{\ket{\psi}\!:\!\ket{\psi} = \hat G_L \hat{\chi}_L \hat G_{L-1}\dots\hat{G}_1 \hat{\chi}_1\ket{0}^{\otimes m}\},
    \end{equation}
    where $L\ge1$ denotes the depth of the circuit, $\hat \chi_1,\dots,\hat \chi_L$ are energy-preserving non-Gaussian unitary gates, and $\hat G_1,\dots,\hat G_L$ are Gaussian unitary gates.
\end{defi}
\noindent Note that this class of states encompasses universal circuits composed of Gaussian gates and Kerr gates \cite{Lloyd1999}.
As discussed in Section \ref{sec:prelim}, the focus on universal bosonic circuits consisting of Gaussian and energy-preserving non-Gaussian gates is motivated by recent results showing that the energy growth in bosonic computation consisting of Gaussian and generic non-Gaussian gates have unrealistic energy growth \cite{chabaud2025energy}. 

We now state our main technical result, with a detailed version given in \hyperref[sec:methods]{Methods}:
\begin{theo}[Output states of universal bosonic circuits have bounded exponential-energy]\label{theo:exp_energy_bound_simplified}
For all $L\ge1$, $\mathcal C_L\subset\mathcal S$.
In particular, given an $m$-mode quantum state $\ket{\psi} \in \mathcal C_L$, it is possible to pick $t_L > 1, E_L \in \R_+$ such that $\ket{\psi} \in\mathcal S_{t_L,E_L} \subset \mathcal S$, or equivalently
\begin{equation}\label{eq:exp_energy_bound_informal}
    \bra{\psi}t_L^{\hat N}\ket{\psi} \leq t_L^{E_L}.
\end{equation}
\end{theo}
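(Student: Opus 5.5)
We argue by induction on the depth $L$, tracking how the quantity $\bra{\psi}t^{\hat N}\ket{\psi}$ behaves under each gate. Energy-preserving gates are harmless: since $\hat\chi_\ell$ commutes with $\hat N$, it commutes with $t^{\hat N}$, so $\bra{\phi}\hat\chi_\ell^\dagger t^{\hat N}\hat\chi_\ell\ket{\phi}=\bra{\phi}t^{\hat N}\ket{\phi}$. The same holds for the passive parts $\hat U,\hat V$ in the Bloch--Messiah decomposition \eqref{eq:Bloch_Messiah}. Also, $\hat G_1\hat\chi_1\ket{0}^{\otimes m}$ is a Gaussian state, up to a phase on the vacuum. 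The whole difficulty is therefore the following one-step lemma. \emph{If $\bra{\phi}s^{\hat N}\ket{\phi}\le s^{E}$ with $s>1$, then for any displacement $\hat D(\bm\alpha)$ and squeezing $\hat S(\bm r)$ there exist $t\in(1,s)$ and $E'$, depending explicitly on $s,E,\bm\alpha,\bm r$, with $\bra{\phi}\hat S^\dagger\hat D^\dagger t^{\hat N}\hat D\hat S\ket{\phi}\le t^{E'}$.} Iterating this lemma $L$ times produces the sequence $t_L<\dots<t_1$ and $E_L$.

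The operator $t^{\hat N}$ is unbounded, so I would not try to conjugate it and bound it directly. Instead I would use the factorisation over modes, $t^{\hat N}=\bigotimes_i t^{\hat N_i}$, together with normal-ordering identities.

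\textbf{Displacement.} In normal order, $t^{\hat a^\dagger\hat a}=\,:\!e^{(t-1)\hat a^\dagger\hat a}\!:$. Conjugating by $\hat D(\alpha)$ sends $\hat a\mapsto\hat a+\alpha$, which gives
\begin{equation*}
\hat D^\dagger t^{\hat N}\hat D=e^{(t-1)|\alpha|^2}\,:\!e^{(t-1)(\hat a^\dagger\hat a+\bar\alpha\hat a+\alpha\hat a^\dagger)}\!:.
\end{equation*}
I would bound the cross terms with a Cauchy--Schwarz/AM--GM step, $2|\alpha||z|\le\delta|z|^2+|\alpha|^2/\delta$, and then compare the resulting normally ordered exponentials at the level of Husimi $Q$-functions or Glauber $P$-representations. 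The expected outcome is $\hat D^\dagger t^{\hat N}\hat D\le C\,t'^{\hat N}$ with $t'-1=(1+\delta)(t-1)$, so choosing $t$ slightly smaller than $s$ makes $t'=s$. As a fallback, one can expand $\ket{\phi}$ in Fock amplitudes, bound each $|\langle n|\hat D(\alpha)|k\rangle|$ by Laguerre estimates, and sum geometrically.

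\textbf{Squeezing.} Here I would compute $\langle n|\hat S(r)|k\rangle$ explicitly; these matrix elements are supported on $n\equiv k \pmod 2$ and are built from $(\tanh r)^{(n+k)/2}$ times binomial-type factors. Equivalently, one can use the generating function $\hat S^\dagger t^{\hat N}\hat S$, which is a Gaussian operator whose kernel in the coherent-state representation is an explicit exponential of a quadratic form. Either route should give $\hat S^\dagger t^{\hat N}\hat S\le C(r,t)\,s^{\hat N}$ whenever $t>1$ is close enough to $1$ relative to $s$ and $r$. Concretely, I expect a condition of the form $t\tanh$-type expression $<s$, with $C$ blowing up as $t\to$ threshold. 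This is consistent with Gaussian states themselves having finite exponential energy only for small enough $t$ \cite{mele2025communication}.

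\textbf{Main obstacle.} The squeezing step is where I expect the real difficulty. There is an operator inequality to establish between non-commuting unbounded positive operators, $\hat S^\dagger t^{\hat N}\hat S\le C\,s^{\hat N}$, and it is not implied by the energy bound alone, so a loss $t<s$ is unavoidable. I would first try the single-mode case via the Schur test on the matrix $\bigl(s^{-n/2}\langle n|\hat S^\dagger t^{\hat N}\hat S|k\rangle s^{-k/2}\bigr)_{n,k}$, using weights $w_n=\lambda^n$ and the closed-form generating function of squeezing amplitudes. The multimode case then follows by tensoring over modes, with passive gates absorbed for free. The induction closes by taking $E_L=\log_{t_L}$ of the accumulated constant.
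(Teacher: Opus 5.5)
Your architecture is the paper's. Energy-preserving gates and the passive factors $\hat U_i,\hat V_i$ of the Bloch--Messiah decomposition commute with $t^{\hat N}$, so everything reduces to two per-gate transfer bounds: $\hat D^\dagger t^{\hat N}\hat D\le C_D\,\tilde t^{\hat N}$ and $\hat S^\dagger\tilde t^{\hat N}\hat S\le C_S\,s^{\hat N}$ with $1<t<\tilde t<s$. These are factorised over modes and chained from the vacuum, and $E_L$ is $\log_{t_L}$ of the accumulated constant. Where you differ is in how the two bounds are proved.

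For displacement, the paper simply cites \cite[Lemma 13]{cerf2025}. Your route works, but only through the $P$-representation. Pointwise comparison of normally ordered symbols (that is, of $Q$-functions) does not give an operator inequality. Writing $t^{\hat N}=\frac{1}{\pi t}\int e^{(1-1/t)|\beta|^2}\ket{\beta}\!\bra{\beta}\,d^2\beta$ and applying your AM--GM step gives $1-1/t'=(1+\delta)(1-1/t)$, not $t'-1=(1+\delta)(t-1)$. Optimising over $\delta$ then reproduces exactly the cited exponent $|\alpha|^2(t-1)(t'-1)/(t'-t)$, with prefactor $t'/t$.

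For squeezing, the paper avoids operator-norm analysis altogether. Cauchy--Schwarz on the Fock expansion replaces the operator norm of $t^{\hat N/2}\hat S s^{-\hat N/2}$ by its Hilbert--Schmidt norm. That norm, $\sum_{i,j}t^j s^{-i}|\bra{j}\hat S\ket{i}|^2=(1+\bar n)\sum_j t^j\bra{j}\hat S\rho_{\bar n}\hat S^\dagger\ket{j}$ with $\bar n=1/(s-1)$, is a photon-number generating function of a squeezed thermal state. It is bounded by dominating that state's Gaussian $P$-function by a multiple of a broader thermal one, which leaves a geometric series. The $P$-function is regular only for $s<1/\tanh r$, but this is harmless since $s$ can always be lowered. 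This is much lighter than a Schur test, which needs entrywise control of $|\bra{n}\hat S^\dagger t^{\hat N}\hat S\ket{k}|$, or a Gaussian-kernel computation.

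Your operator-norm route could only improve the constant, not the admissible range of $t$. The paper's threshold $1+\frac{s-1}{e^{2r}+\sinh r\,e^{r}(s-1)}$ equals $\frac{s+\tanh r}{1+s\tanh r}$, and this is sharp. For any larger $t$, take a squeezed vacuum $\hat S(\zeta)\ket{0}$ with $s\tanh\zeta<1<t\tanh(\zeta+r)$: it has finite $s$-exponential energy but infinite $t$-exponential energy after $\hat S(r)$. This is precisely the ``$\tanh$-type'' condition you anticipated.
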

\noindent The proof of Theorem \ref{theo:exp_energy_bound_simplified} is given in the Supplementary Material \ref{appsec:exp_energy_bound energy simplified} and follows from the combination of results tracking the increase of the expectation value of the exponential-energy operator $s^{\hat N}$ of an arbitrary quantum state under $m$-mode squeezing and displacement operators.

From Lemma \ref{lem:efficient_description}, for a state $\ket{\psi} \in \mathcal C_L$ with $\bra{\psi} t_{L}^{\hat N}\ket{\psi} < t_L^{E_L}$, a boson number cutoff $k = E_L + 2\log_{t_L}(1/\epsilon)$ suffices to get an $\epsilon$-precise description of $\ket{\psi}$. The term $E_L$ in the exponent of $t_L$ in Eq.\ \ref{eq:exp_energy_bound_informal} can be understood as a proxy for the energy of $\ket{\psi}$ (see Section \ref{appsec:cutoff_comparisons} of the Supplementary Material for a detailed discussion). Therefore, comparing this with the boson number cutoff one gets for $\ket{\psi}$ through standard energy-based truncation $E_L/\epsilon^2$ (see Lemma \ref{lem:energy_cutoff}), the combination of Theorem \ref{theo:exp_energy_bound_simplified} and Lemma \ref{lem:efficient_description} allows for an exponential improvement in effective dimension from $1/\epsilon^2$ to $\log(1/\epsilon)$ in terms of the desired precision $\epsilon$. In Section \ref{appsec:cutoff_comparisons} of the Supplementary Material, we also provide a detailed comparison of the scaling of the boson number cutoff with respect to $L$, $m$, and $\epsilon$, highlighting the exponential improvement from our result over the bound derived from standard energy-based truncation.

\subsection{Applications}
\label{sec:applications}

In this section, we show the implications of the results from the previous section for the learning theory of bosonic states, and how they lead to new classical simulation algorithms for universal bosonic quantum circuits.

\subsubsection{Learning bosonic quantum states with bounded exponential-energy}
In \cite[Theorem S25]{mele2024learning}, it is shown that for pure bosonic quantum states $\ket{\psi}$ with $\bra{\psi}\hat N \ket{\psi} \leq mE$, a sample complexity (up to logarithmic factors) of
\begin{equation}
    \Theta \left(\frac{1}{\epsilon^2}\left(\frac{E}{\epsilon^2}\right)^m\right)
\end{equation}
is both necessary and sufficient to construct the classical description of an estimate $\ket{\tilde \psi}$ such that
\begin{equation}
    \frac12 \|\ket{\psi}\!\bra{\psi} - \ket{\tilde \psi}\!\bra{\tilde \psi}\|_1 \leq \epsilon.
\end{equation}
This makes learning of bosonic quantum states highly inefficient as compared to qudits, where the sample complexity scales as $\epsilon^{-2}$ as compared to $\epsilon^{-2m-2}$ for bosonic quantum states \cite{anshu2023surveycomplexity}. Conceptually, this happens because the energy bound does not guarantee the fast decay of Fock state amplitudes for higher boson number and as such, the heavy-tailed nature of these states precludes their efficient learning. In addition, in \cite[Theorem S25]{mele2024learning}, an improvement in the efficiency of learning bosonic states is obtained given an upper bound on the expectation value of $\hat N^k$ for the bosonic state, where $k > 1$. Therefore, for bounded exponential-energy states (see Definition \ref{defi:bounded_exp_energy_states}) which feature an exponential decay of Fock state amplitudes, one would intuitively expect that learning should be more efficient. We formalize this intuition with the following result:
\begin{theo}[Learning bosonic states with bounded exponential-energy]\label{theo:precision_efficient_learning}
    Given a state $\ket{\psi}$ belonging to the set $\mathcal S_{s,mE}$ given by Eq.~\ref{eq:set_bounded}, a number of samples 
    \begin{equation}
        \mathcal{O}\left(\frac{1}{\epsilon^2}\left(E + \frac{2\log_s(1/\epsilon)}{m}\right)^m\right),
    \end{equation}
    where $\mathcal O$ represents scaling up to logarithmic factors, is sufficient to construct the classical description of an estimate $\ket{\tilde \psi}$ such that 
    \begin{equation}
        \frac12 \|\ket{\psi}\bra{\psi} - \ket{\tilde \psi}\bra{\tilde \psi}\|_1 \leq \epsilon.
    \end{equation}
\end{theo}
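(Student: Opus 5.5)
The plan is to reduce learning the bosonic state to learning a finite-dimensional pure state, using Lemma \ref{lem:efficient_description} to fix the truncation. Set the boson number cutoff
\begin{equation}
  k = mE + 2\log_s(2/\epsilon),
\end{equation}
so that, since $\ket{\psi}\in\mathcal S_{s,mE}$, Lemma \ref{lem:efficient_description} gives $\frac12\|\ket{\psi}\!\bra{\psi}-\ket{\psi_k}\!\bra{\psi_k}\|_1\le \epsilon/2$. The truncated state $\ket{\psi_k}$ lives in the span of Fock states $\ket{\bm n}$ with $|\bm n|\le k$. The dimension of this space is
\begin{equation}
  d=\binom{k+m}{m}\le \left(\frac{e(k+m)}{m}\right)^m .
\end{equation}
Substituting $k$, this is $\mathcal O\big((E+1+2\log_s(2/\epsilon)/m)^m\big)$. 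Up to the constant $e^m$, which we absorb as the paper's $\mathcal O$ notation allows (or treat $m$ as fixed), this is the bracket in the statement.

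The next step is to learn $\ket{\psi_k}$ without having direct access to it. We first measure the projector $\hat P_k$ onto total boson number $\le k$, which is a number-resolving measurement of $\hat N$. For each copy of $\ket{\psi}$, this succeeds with probability $\bra{\psi}\hat P_k\ket{\psi}=|\braket{\psi_k|\psi}|^2\ge 1-s^{mE-k}\ge 1-\epsilon^2/4$. On success, the post-measurement state is exactly $\ket{\psi_k}$. Hence a constant fraction (in fact nearly all) of the copies yield fresh independent copies of $\ket{\psi_k}$. By a Chernoff bound, using about twice the required number of copies suffices with high probability.

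On these copies we run a standard sample-optimal pure-state tomography algorithm in dimension $d$. Examples are the symmetric-subspace or random-basis tomography schemes surveyed in \cite{anshu2023surveycomplexity}, which need $\mathcal O(d/\epsilon^2)$ copies up to logarithmic factors. This outputs a classical description $\ket{\tilde\psi}$ with $\frac12\|\ket{\psi_k}\!\bra{\psi_k}-\ket{\tilde\psi}\!\bra{\tilde\psi}\|_1\le\epsilon/2$. The triangle inequality then gives total error $\epsilon$, with sample complexity $\mathcal O\big(\epsilon^{-2}(E+2\log_s(1/\epsilon)/m)^m\big)$, where $\log_s(2/\epsilon)$ versus $\log_s(1/\epsilon)$ and additive constants are absorbed.

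The main subtlety is that the tomography must be applied to genuine copies of $\ket{\psi_k}$ rather than to $\ket{\psi}$. The filtering step above handles this cleanly. Alternatively, one could run tomography on the truncated mixed state $\hat P_k\ket{\psi}\!\bra{\psi}\hat P_k$ and use the gentle measurement lemma, but that route is less clean.

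A second point to check is the dimension estimate. The claimed form $(E+2\log_s(1/\epsilon)/m)^m$ hides the factor $e^m$ and the $+1$ from $\binom{k+m}{m}$. I would state explicitly that these are treated as constants, consistent with how \cite{mele2024learning} presents the energy-based $\Theta\big(\epsilon^{-2}(E/\epsilon^2)^m\big)$ bound.
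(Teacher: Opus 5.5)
Your proposal is correct and follows the paper's proof essentially step by step. Both use the cutoff $k = mE + 2\log_s(2/\epsilon)$, filter copies with the projective measurement $\{\Pi_k, 1-\Pi_k\}$ (success probability at least $1-\epsilon^2/4$; the paper invokes \cite[Lemma S5]{mele2024learning} where you use a Chernoff bound), run pure-state tomography in dimension $\binom{m+k}{m}$, and finish with the triangle inequality. Your remark that the $e^m$ factor from $\binom{m+k}{m}\le \left(e(m+k)/m\right)^m$ must be absorbed into the $\mathcal O$ is more careful than the paper, which drops that factor without comment in its dimension estimate.
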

\noindent The proof of Theorem \ref{theo:precision_efficient_learning} is given in Section \ref{appsec:precision_efficient_learning} of the Supplementary Material, and follows by finding the effective dimension of states with an exponential-energy bound, and using \cite[Theorem 1]{anshu2023surveycomplexity}.

From Theorem \ref{theo:precision_efficient_learning}, we note that while learning remains inefficient due to the exponential scaling of sample complexity with the number of modes $m$, the sample complexity is exponentially reduced from $\mathcal O(1/\epsilon^{2m+2})$ to $\mathcal{O}(1/\epsilon^2\times \log(1/\epsilon)^m)$, bringing the sample complexity of learning bosonic quantum states in $\mathcal S$ closer to that of qudits. Moreover, Theorem \ref{theo:exp_energy_bound_simplified} shows that the sets $\mathcal C_L$ (Definition \ref{defi:G_nG}) of output states of universal bosonic circuits are contained in $\mathcal S$. As such, our findings indicate that the complexity of learning physical bosonic quantum states in terms of desired precision is substantially more efficient than previously identified.


\subsubsection{Classical simulation of universal bosonic quantum circuits}
The exponentially reduced effective dimension for states in $\mathcal C_L$, produced by $L$ layers of Gaussian and energy-preserving non-Gaussian gates, allows for classical simulation by simply tracking the Fock state amplitudes of the evolved state throughout. One such simulation result is given in the following theorem, with a detailed version given in \hyperref[sec:methods]{Methods}:
\begin{theo}[Simulating bosonic computations with bounded exponential-energy]\label{theo:Fock_state_simulation}
Let $n$ be a size parameter. Consider an $m$-mode vacuum state $\ket{0}^{\otimes m}$ evolving through $L$ layers of Gaussian and energy-preserving non-Gaussian gates, followed by a Gaussian measurement. With $m = \mathcal{O}(\log(n))$ and $L = \mathcal{O}(\poly(n))$, and assuming that exponential-energy remains bounded as $\mathcal{O}(\poly(n))$ throughout the computation, output probabilities and marginal probabilities can be estimated classically up to $\mathcal{O}(1/\poly(n))$ additive precision in time $\mathcal{O}(\poly(n))$.
\end{theo}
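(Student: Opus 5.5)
Our plan is to simulate the circuit directly in a truncated Fock basis and control the errors with Lemma \ref{lem:efficient_description}. Write $\ket{\psi_\ell}$ for the state after the $\ell$-th layer. By assumption $\bra{\psi_\ell}s^{\hat N}\ket{\psi_\ell}\le s^{E}$ for some fixed $s>1$ and with $s^E=\mathcal O(\poly(n))$, for every $\ell$; we apply the same bound to the intermediate states inside a layer, before and after each gate. We fix a per-step precision $\delta=\epsilon/(cL)$, where $c$ is the constant number of gates per layer after the Bloch--Messiah decomposition \eqref{eq:Bloch_Messiah}. We then choose the cutoff
\[
k=E+2\log_s(1/\delta),
\]
which is $\mathcal O(\log n)$ because $\log(s^E)=\mathcal O(\log n)$ and $\log(1/\delta)=\mathcal O(\log n)$. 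The truncated space has dimension $\binom{k+m}{m}\le (k+1)^m$. With $m=\mathcal O(\log n)$ this is $2^{\mathcal O(\log n\,\log\log n)}$, quasi-polynomial rather than polynomial. To get a genuinely polynomial bound we would use $\binom{k+m}{m}\le 2^{k+m}$, which is $\poly(n)$ when both $k$ and $m$ are $\mathcal O(\log n)$. This is the dimension count we commit to.

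\textbf{Gate-by-gate update.} Given the truncated vector $\ket{\tilde\psi}$ in the space $\mathcal H_k$ of total boson number at most $k$, we apply each gate and then renormalize the projection onto $\mathcal H_k$.
\begin{itemize}
\item Energy-preserving gates $\hat\chi$ commute with $\hat N$, so they act block-diagonally on $\mathcal H_k$ and introduce no truncation error. We assume their matrix elements on each number sector are efficiently computable, as for Kerr gates, which are diagonal.
\item Passive Gaussian gates $\hat U,\hat V$ preserve each number sector. Their matrix elements are permanents of submatrices of the $m\times m$ unitary, each of size at most $k=\mathcal O(\log n)$, so they cost $\poly(n)$ time.
\item Displacements and squeezers are applied mode by mode. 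Their Fock matrix elements $\bra{\bm j}\hat D(\bm\alpha)\ket{\bm l}$ and $\bra{\bm j}\hat S(\bm r)\ket{\bm l}$ are products of single-mode matrix elements, which have closed forms via Laguerre or Hermite polynomials. They can be computed to precision $1/\poly(n)$ in $\poly(n)$ time.
\end{itemize}
The output of each step is again projected to $\mathcal H_k$.

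\textbf{Error accounting.} Each projection-and-renormalization contributes trace-distance error at most $\delta$ by Lemma \ref{lem:efficient_description}, applied to the exact intermediate state, which lies in $\mathcal S_{s,E}$ by assumption. The difficulty is that we apply the gate to the already-truncated $\ket{\tilde\psi}$, not to the exact state. We handle this with a hybrid argument. Unitaries preserve trace distance, so $\|U\tilde\rho U^\dagger-U\rho U^\dagger\|_1=\|\tilde\rho-\rho\|_1$. Projection onto $\mathcal H_k$ with renormalization is not a contraction in general. We therefore bound its effect using fidelities instead: the truncated state's overlap with $\mathcal H_k$ differs from the exact one by at most the accumulated distance. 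Telescoping then gives total error $\mathcal O(L\,c\,\sqrt{\delta'})$, or linear in $\delta$ if we keep unnormalized truncations, following the remark after Lemma \ref{lem:energy_cutoff}. Choosing $\delta$ to be an appropriate inverse polynomial makes the final error $1/\poly(n)$ while keeping $k=\mathcal O(\log n)$.

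\textbf{Measurement.} A Gaussian measurement, homodyne or heterodyne followed by a Gaussian unitary, is absorbed by first applying the Gaussian unitary as above. The output density or probabilities, including marginals on subsets of modes, are then computed from the $\poly(n)$ Fock amplitudes by contracting against Hermite-function or coherent-state overlaps, which are efficiently computable. Additive $\mathcal O(1/\poly(n))$ precision follows because trace distance bounds the error of every measurement probability.

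\textbf{Main obstacle.} The hardest step is the non-contractivity of renormalized projection combined with the accumulation of errors over $L=\poly(n)$ gates. We would first try working with unnormalized truncations $\Pi_k\ket{\cdot}$ and the bound $\|(I-\Pi_k)\ket{\phi}\|^2\le s^{E-k}$ to get an additive error of $s^{(E-k)/2}$ per step. This error is linear in the number of steps and avoids renormalization entirely.
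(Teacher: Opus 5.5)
Your strategy is the same as the paper's: a Fock-truncated state vector with cutoff $k=E+2\log_s(1/\delta)=\mathcal O(\log n)$ from Lemma~\ref{lem:efficient_description}, errors telescoped over layers, and finite Gaussian overlaps at the end. Your dimension count $\binom{k+m}{m}\le 2^{k+m}$ is the right one; the paper's $\binom{m+k}{k}\le\binom{2m}{m}$ tacitly needs $k\le m$.

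The gap is in \emph{where} you truncate. You truncate after every elementary Bloch--Messiah gate, mode by mode. To control those truncations you need the exponential-energy bound for states such as $\hat S(\bm r_i)\hat U_i\hat\chi_i\ket{\psi_{i-1}}$, for partially squeezed or displaced states, and for the state after the measurement's Gaussian unitary. You obtain it only by extending the hypothesis to these states. They are not layer outputs, and the bound does not follow in any obvious way from the bounds on the $\ket{\psi_i}$. For example, going backwards from $\hat V_i^\dagger\ket{\psi_i}$ with the displacement lemma costs a factor depending on $|\bm\alpha_i|$, which is not assumed bounded.

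The paper only uses the hypothesis at layer outputs, and you can do the same. $\hat\chi_i$, $\hat U_i$ and $\hat V_i$ preserve $\mathcal H_k$ exactly. So apply $\hat D(\bm\alpha_i)\hat S(\bm r_i)$, or all of $\hat G_i$, as one operator and truncate once per layer. At that point the exact state is $\hat V_i^\dagger\ket{\psi_i}$, which has the same exponential energy as $\ket{\psi_i}$. This needs Fock matrix elements of a general Gaussian unitary. The paper gets them as loop hafnians of size $|\bm p|+|\bm q|=\mathcal O(\log n)$ \cite{yao2024}, computable in $\poly(n)$ time. Likewise, at the end, compute the overlaps of the truncated vector with the Gaussian measurement states directly, as in \cite[Theorem 2]{chabaud2021corestates]}, rather than applying the measurement unitary and truncating again.

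On error accounting, your normalized, fidelity-based route is not justified as written. The obvious estimate $\bigl\|\Pi_k\psi/\|\Pi_k\psi\|-\Pi_k\phi/\|\Pi_k\phi\|\bigr\|\le 2\|\psi-\phi\|/\|\Pi_k\psi\|$ roughly doubles the accumulated error at each step, which gives a $2^L$ blow-up. Your unnormalized route is sound and slightly simpler than the paper's. Set $\ket{\tilde\phi_\ell}=\Pi_k\hat G_\ell\hat\chi_\ell\ket{\tilde\phi_{\ell-1}}$. Contractivity of $\Pi_k$ and unitarity give $\|\ket{\psi_\ell}-\ket{\tilde\phi_\ell}\|\le\|(I-\Pi_k)\ket{\psi_\ell}\|+\|\ket{\psi_{\ell-1}}-\ket{\tilde\phi_{\ell-1}}\|$. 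Only the tail of the \emph{exact} state appears here, and it is at most $s^{(E-k)/2}$. Any measurement probability then changes by at most $2Ls^{(E-k)/2}$.

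The paper keeps the renormalization and instead uses the following fact: for normalized $\ket{\hat\psi}\in\mathcal H_k$, $|\braket{\hat\psi|\Pi_k\phi}|^2/\|\Pi_k\phi\|^2\ge|\braket{\hat\psi|\phi}|^2$. In words, renormalized projection can only move $\phi$ closer to the truncation of the exact state. This yields an additive per-layer error of $2\sqrt{S/s^k}$ in trace distance. Either route works, but you should commit to one rather than leaving the normalized version as a sketch.
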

\noindent The proof of Theorem \ref{theo:Fock_state_simulation} is given in Section \ref{appsec:Fock_space_simulation} of the Supplementary Material and consists in combining Theorem \ref{theo:exp_energy_bound_simplified} with Lemma \ref{lem:efficient_description} to find the effective dimension of the evolved state throughout the computation, and computing a Fock basis description of the output state. \cite[Theorem 2]{chabaud2021corestates} then allows for computing the probabilities generated by Gaussian state projections on the output state up to inverse polynomial precision in polynomial time. We also detail why a similar method (tracking Fock state amplitudes of the evolved state) would not be efficient with a standard energy-based cutoff (see Lemma \ref{lem:energy_cutoff}), because it would incur a super-polynomial overhead for the effective dimension of the state.

While Theorem \ref{theo:Fock_state_simulation} holds for generic Gaussian and energy-preserving non-Gaussian gates, it restricts the computations to be on the subsystems of a larger system for the classical simulation to be efficient. Efficient classical simulation algorithms can be devised without this restriction by exploiting the structure of specific gates. 

Hereafter, we focus on universal bosonic circuits where the energy-preserving non-Gaussian gates are either self-Kerr gates
\begin{equation}
    \hat \kappa(x) = \exp(i\pi x\hat N_1^2)
\end{equation}
or cross-Kerr gates
\begin{equation}
    \hat {c\kappa}(x) = \exp(i2\pi x \hat N_1 \otimes \hat N_2).
\end{equation}
We assume self-Kerr (resp.\ cross-Kerr) gates to be acting on the first mode (resp.\ the first two modes) without loss of generality up to to SWAP gates, which can be subsequently absorbed in Gaussian gate layers. 
Exploiting the specific structure of Kerr gates, we obtain the following result, with its formal version given in \hyperref[sec:methods]{Methods}:
\begin{theo}[Simulating universal bosonic computations with Kerr gates]\label{theo:kerr_gate_simul}
Consider an $m$-mode vacuum state $\ket{0}^{\otimes m}$ evolving through $L$ layers of Gaussian and Kerr gates, followed by a Gaussian measurement. With $L = \mathcal{O}(\poly(m))$, output probabilities and marginal probabilities can be computed: (i) exactly in polynomial time if all the Kerr gates have rational parameters, (ii) up to inverse-polynomial additive precision in quasi-polynomial time, for generic Kerr gates.
\end{theo}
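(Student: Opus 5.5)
The plan is to remove each Kerr gate by writing it as a finite (rational case) or approximate (generic case) linear combination of Gaussian unitaries. This turns the circuit into a sum of Gaussian circuits, and each term can be handled with Gaussian-state techniques.

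\textbf{Rational case.} For $x=p/q$, the self-Kerr phase $e^{i\pi p n^2/q}$ depends only on $n \bmod 2q$, so it is a periodic function of $n$. It can therefore be expanded by a discrete Fourier transform over $\Z_{2q}$ as
\begin{equation}
e^{i\pi x n^2}=\sum_{j=0}^{2q-1} c_j e^{2\pi i jn/(2q)},
\end{equation}
which gives $\hat\kappa(p/q)=\sum_j c_j e^{i\pi j\hat N_1/q}$. This is a sum of $2q$ phase shifters, which are passive Gaussian gates. The cross-Kerr gate is handled the same way, with a two-dimensional Fourier expansion over $\Z_q^2$ giving products of phase shifters on the two modes.

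Substituting these expansions into the circuit writes the output state as $\sum_{\bm j} c_{\bm j}\hat G_{\bm j}\ket{0}^{\otimes m}$, a superposition of Gaussian states. An output probability of a Gaussian measurement is then a double sum $\sum_{\bm j,\bm j'} c_{\bm j}\bar c_{\bm j'}\bra{0}\hat G_{\bm j'}^\dagger \Pi \hat G_{\bm j}\ket{0}$. Each term is an overlap, or Gaussian integral, computable exactly in $\poly(m)$ time. Marginals are handled the same way by integrating out the unmeasured modes.

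The number of terms is $\prod_\ell (2q_\ell)^{2}$ over all Kerr gates. This is polynomial only if the number of Kerr gates, or the denominators, are suitably bounded. I would therefore state the formal version with the running time polynomial in the product of denominators, treating gate count and parameters as constants or logarithmic, as the paper's formal version presumably specifies.

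\textbf{Generic case.} I will use Diophantine (Dirichlet) approximation: choose $p/q$ with $|x-p/q|\le 1/(qQ)$ and $q\le Q$. The error this introduces is $|e^{i\pi x n^2}-e^{i\pi pn^2/q}|\le \pi n^2/(qQ)$, which is small only when $n$ is bounded.

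Theorem \ref{theo:exp_energy_bound_simplified} supplies that bound. Each intermediate state $\hat G_\ell\hat\chi_\ell\cdots\ket{0}$ lies in some $\mathcal S_{t_\ell,E_\ell}$. By Lemma \ref{lem:efficient_description}, its weight above boson number $k=E_\ell+2\log_{t_\ell}(1/\delta)$ is at most $\delta$ in trace distance. On the low-boson-number subspace, the operator-norm error of replacing $\hat\kappa(x)$ by $\hat\kappa(p/q)$ is at most $\pi k^2/(qQ)$. On the high-boson-number subspace, the contribution is at most $2\delta$.

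A telescoping (hybrid) argument over the $L$ layers then bounds the total error by $\sum_\ell(\pi k_\ell^2/Q+2\delta)$. Taking $\delta=1/\poly$ and $Q=\poly\cdot\max_\ell k_\ell^2$ makes this inverse-polynomial.

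The resulting running time is roughly $Q^{O(\#\text{Kerr})}$. Here $k_\ell$ depends on $E_\ell$ and $\log_{t_\ell}$, where $t_\ell\to1$ and $E_\ell$ grows with depth; the detailed version of Theorem \ref{theo:exp_energy_bound_simplified} should give these as polynomial in $m$ and $L$, which yields the quasi-polynomial count.

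\textbf{Main obstacle.} The hardest step is controlling how $t_\ell$ and $E_\ell$ degrade through the layers, so that $k_\ell$ stays $\poly(m)$ and $\log(1/\delta)$ enters only polynomially. This requires the quantitative bounds for the growth of $\bra{\psi}s^{\hat N}\ket{\psi}$ under squeezing and displacement, taken from the detailed version of Theorem \ref{theo:exp_energy_bound_simplified}.

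A secondary obstacle is keeping the term count down. I would try the two-dimensional Fourier decomposition for cross-Kerr, and possibly a sign-free rescaling to avoid a $2q$ versus $q$ blowup. I would also check that each term's Gaussian overlap is computed stably to the needed precision.
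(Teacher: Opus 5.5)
Your proposal is correct and follows essentially the paper's route. You Fourier-decompose each rational Kerr gate into phase shifters (the paper's even/odd trick uses $q$ rather than $2q$ terms) and evaluate Gaussian-measurement probabilities of the resulting superposition of Gaussian states term by term. For generic parameters you use a Diophantine approximation whose per-gate error is controlled by the exponential-energy truncation of Lemma~\ref{lem:efficient_description} and accumulated layer by layer. You are also right that the $q^{\mathcal O(L)}$ term count forces logarithmic depth and bounded denominators. This matches the paper's formal version ($L=\mathcal O(\log\poly(m))$, runtime $\mathcal O(m^4q^{2L})$) despite the $\poly(m)$ in the informal statement.

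One correction: the detailed version of Theorem~\ref{theo:exp_energy_bound_simplified} gives $E_L$ and $1/\log t_L$ growing like $e^{\mathcal O(L)}$ for fixed $r$ and $|\alpha|$, not polynomially in $L$. So $k_\ell=\poly(m)$, and hence the quasi-polynomial count, holds only because $L=\mathcal O(\log m)$ turns $e^{\mathcal O(L)}$ into $\poly(m)$.
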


\noindent The proof of Theorem \ref{theo:kerr_gate_simul} is given in Section \ref{appsec:kerr_gate_simul} of the Supplementary Material and is based on (approximate) decompositions of self-Kerr and cross-Kerr gates as sums of Gaussian gates. Self-Kerr gates with rational parameters, i.e.\ $\kappa(\frac pq)$ for $p \in \Z,q \in \Z_+^*$, can be decomposed exactly as a sum of $q$ (Gaussian) phase-shifters \cite{Tara1993,Jun-wei1996}. In Lemma \ref{lem:kerr_decomp} in \hyperref[sec:methods]{Methods}, we extend these decompositions to cross-Kerr gates with rational parameters, and we further give approximate decompositions of generic Kerr gates (with possibly irrational parameters) as sums of phase shifters, using Diophantine approximation \cite{hurwitz1891ueber} to approximate irrational parameters by rational ones. The precision of these approximate decompositions is governed by the effective dimension of the state that the self-Kerr or cross-Kerr gate is being applied to, where we can leverage Theorem \ref{theo:exp_energy_bound_simplified} and Lemma \ref{lem:efficient_description} (see Section \ref{methodsec:phase_shifter_simul} in \hyperref[sec:methods]{Methods}). 

In what follows, we detail how these decompositions leads to efficient classical simulation of circuits with only self-Kerr gates, and we note that the results also apply to cross-Kerr gates. Given a bosonic circuit with $L$ layers of Gaussian and self-Kerr gates
\begin{equation}
    \hat U = \hat G_L \hat \kappa (x_L) \dots \hat G_1 \hat \kappa (x_1),
\end{equation}
by decomposing each of the self-Kerr gates as a sum of phase-shifters, the output state can be written as a superposition of Gaussian states, and the techniques from \cite{Dias2024} can be used for computing (marginal) probabilities for Gaussian measurements of the output state.
This gives an algorithm of classical simulation of circuits composed of Gaussian and Kerr gates which is not restricted to computations over subsystems of larger systems, as in Theorem \ref{theo:Fock_state_simulation}.

While we have focused on classical simulation algorithms of circuits with Gaussian and Kerr gates, we further detail in \hyperref[sec:methods]{Methods}  how phase-shifter decompositions of Kerr gates allows for further classical and quantum simulation through circuit cutting.

\section{Discussion}\label{sec:discussion}
In this work, we have introduced a novel characterization of ``physical'' bosonic states as those with bounded exponential-energy. We have shown that these states admit an exponential improvement in effective dimension with respect to a desired precision of approximation, and further established that output states of universal bosonic quantum circuits have bounded exponential-energy. Taken together, these results demonstrate an exponential enhancement in the effective descriptions of a large class of quantum bosonic states.

We expect these results to have a significant impact on the growing field of bosonic learning theory \cite{mele2024learning,zhao2025complexity,iosue2025highermomenttheory,chen2026optimalgaussianlearning}, as the techniques utilized in this field are heavily dependent on the effective descriptions of the systems. In this direction, it would be interesting to investigate whether our learning result (Theorem~\ref{theo:precision_efficient_learning}) can be combined with the compression Lemma of \cite{mele2024learning} to reduce the exponential dependence on the number of modes $m$ to an exponential dependence on the depth $L$ when the energy-preserving non-Gaussian unitary gates are local. 

It would also be of interest to analyze the implications of this result for learning unitary dynamics. We suspect that efficiency gains can be realized in learning bosonic unitaries if the input states are restricted to those with bounded exponential-energy---a restriction that our results prove is physically well-motivated. 

Our findings significantly reduce the complexity of classical simulation of bosonic systems. While the improvement is not always exponential, due to additional factors beyond the effective dimension determining the complexity of classical simulation, we identified a regime where the reduction in computational complexity becomes exponential. Our results also provide a rigorous justification as well as possible efficiency gains to ad-hoc numerical schemes where finite boson space truncations are employed for the classical simulation of CV systems \cite{blair2025faulttolerant}. Additionally, the introduction of approximate decompositions of Kerr gates as sums of phase-shifters (Lemma~\ref{lem:kerr_decomp}) yields a novel simulation algorithm and enables circuit cutting methods for classical and quantum simulation, as demonstrated in the \hyperref[sec:methods]{Methods} section. 

Finally, we note that the simulation of Hamiltonian dynamics of bosonic systems via qubit-based or hybrid quantum computers \cite{Tong2022provablyaccurate, crane2024hybridoscillator} relies heavily on the effective description of the initial state of the system. In a similar vein, quantum algorithms for state synthesis in photonic platforms also rely on effective descriptions of the target state \cite{Arrazola_2019}. In both cases, our results could improve the efficiency of such quantum algorithms.

Overall, our results show that physically relevant bosonic systems have exponentially smaller effective dimensions than previously assumed, enabling efficient descriptions at very high precision.

\section{Methods}\label{sec:methods}
In this section, we provide the formal versions of Theorems \ref{theo:exp_energy_bound_simplified} and \ref{theo:Fock_state_simulation}, as well as detail the decompositions of Kerr gates as a sum of phase-shifters, and how they lead to new efficient classical simulation algorithms as well as circuit cutting methods.
\subsection{Bound on exponential-energy of output states of universal bosonic circuits}
Here, we formalize Theorem \ref{theo:exp_energy_bound_simplified} from the main text, giving the expression for the upper bound on the exponential-energy for states $\ket{\psi} \in \mathcal C_L$ (Definition \ref{defi:G_nG}), as well as detail how to pick $t_L > 1$:
\begin{theo}[Output states of universal bosonic circuits have bounded exponential-energy]\label{methodtheo:exp_energy_bound_simplified}
Given an $m$-mode quantum state $\ket{\psi} \in \mathcal C_L$ (Definition \ref{defi:G_nG}), with its associated Gaussian unitary gates $\hat G_i$ characterized by displacement vector $\bm{\alpha}_i$ and squeezing vector $\bm{r}_i$, $\forall i \in (1,\dots,L)$, then defining
    \begin{eqnarray}
        |\alpha|&:=& \max_{i,k} |\alpha_{ik}|,\hspace{5mm}
        r := \max_{i,k} r_{ik},
    \end{eqnarray} 
    $\forall i \in (1,\dots,L),k \in (1,\dots,m)$, if we choose $t_L > 1$ according to the scheme
    \begin{equation}\label{eq:choice_t}
        t_i =1 +  \frac{t_{i-1} - 1}{e^{2r} + t_{i-1}}, \forall i \in (1,\dots,L),
    \end{equation}
    with $t_0 = 1 + 2/e^{2r}$, then it follows that
    \begin{eqnarray}\label{eq:exp_energy_bound}
        \bra{\psi}t_L^{\hat N}\ket{\psi} &<& e^{mL^2(|\alpha|^2 + 28 r + 9)} \nonumber \\&<&  t_L^{mL^2 e^{2(2r+1)L} (|\alpha|^2 + 28 r +9)}.
    \end{eqnarray}
\end{theo}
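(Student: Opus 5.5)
We argue by induction on the depth $L$. The key fact is that each ingredient of a layer maps a bound on $\bra{\phi}s^{\hat N}\ket{\phi}$ to a bound on $\bra{\phi'}t^{\hat N}\ket{\phi'}$ with a slightly smaller base $t<s$.

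\emph{Passive and energy-preserving gates.} Energy-preserving non-Gaussian gates $\hat\chi_i$ and passive Gaussian unitaries $\hat U,\hat V$ commute with $\hat N$. They therefore leave $\bra{\phi}s^{\hat N}\ket{\phi}$ unchanged for every $s$. By the Bloch--Messiah decomposition (Eq.~\eqref{eq:Bloch_Messiah}), each layer $\hat G_i\hat\chi_i$ acts on the exponential energy only through $\hat D(\bm\alpha_i)\hat S(\bm r_i)$. Since $s^{\hat N}=\bigotimes_k s^{\hat N_k}$ and $\hat D(\bm\alpha_i)\hat S(\bm r_i)$ is a tensor product of single-mode operations, it suffices to prove two single-mode transfer lemmas and to control the multimode expectation through them.

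\emph{Squeezing lemma.} For $\hat S(r)$ and any state $\ket\phi$, I would show
\[\bra{\phi}\hat S(r)^\dagger t^{\hat N}\hat S(r)\ket\phi\le C(r)\,\bra{\phi}s^{\hat N}\ket\phi,\qquad t=1+\frac{s-1}{e^{2r}+s},\]
or a close variant. The route is normal ordering: $t^{\hat N}={:}e^{(t-1)\hat a^\dagger\hat a}{:}$, and $\hat S^\dagger\hat a\hat S=\cosh r\,\hat a-\sinh r\,\hat a^\dagger$. The Gaussian operator $\hat S^\dagger t^{\hat N}\hat S$ is then $\lambda\, e^{\mu\hat a^{\dagger2}}t'^{\hat N}e^{\bar\mu\hat a^2}$ with explicit $\lambda,\mu,t'$. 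Equivalently, one can work in the Bargmann representation, where $\bra\phi s^{\hat N}\ket\phi$ is a Gaussian-weighted $L^2$ norm and squeezing is a linear change of variables plus a Gaussian factor. The contraction map $s\mapsto t$ in Eq.~\eqref{eq:choice_t} is exactly what makes the resulting Gaussian weight dominated by $e^{-|z|^2/s}$. The prefactor $C(r)$ is bounded by $e^{c r}$, which is the source of the $28r$ term.

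\emph{Displacement lemma.} For $\hat D(\alpha)$, $\bra{\phi}\hat D^\dagger t^{\hat N}\hat D\ket\phi$ can be computed in the Bargmann picture, where $\hat D$ acts as a shift $z\mapsto z-\alpha$ times $e^{\bar\alpha z-|\alpha|^2/2}$. Completing the square gives a bound $e^{c(|\alpha|^2+1)}\bra{\phi}s^{\hat N}\ket\phi$ for any $t<s$, where $c$ blows up as $s/t\to1$. This step needs care: the per-step ratio $t_i/t_{i-1}$ is controlled by the choice of $t_i$, and the gap $s-t$ enters the constant. Using $s-1\le 2e^{-2r}$, I expect to absorb this into constants of the form $(|\alpha|^2+28r+9)$ per layer and per mode, up to a factor linear in $i$. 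Summing over $i\le L$ layers and $m$ modes then yields the exponent $mL^2(\cdots)$.

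\emph{Final inequality.} Starting from the vacuum, $\bra0 t_0^{\hat N}\ket0=1$. Chaining the two lemmas $L$ times gives the first inequality. For the second, rewrite $e^{X}=t_L^{X/\ln t_L}$ and lower bound $\ln t_L\gtrsim t_L-1$. The recursion gives $t_i-1\ge (t_{i-1}-1)/(e^{2r}+3)$, so $t_L-1\ge 2e^{-2r}(e^{2r}+3)^{-L}$, and therefore $1/\ln t_L\le e^{2(2r+1)L}$ after crude estimates.

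The main obstacle is the squeezing lemma. One must find a base $t$ for which $\hat S^\dagger t^{\hat N}\hat S\le C\,s^{\hat N}$ holds as an operator inequality, or at least on all states, while avoiding an unbounded $\hat a^{\dagger2}$ term. I would first try to establish it on coherent/Bargmann kernels via Schur's test.
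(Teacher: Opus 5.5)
The outline of your argument is the paper's, and the final conversion $e^X=t_L^{X/\ln t_L}$ with $t_L-1\ge 2e^{-2r}(e^{2r}+3)^{-L}$ is fine. The rest of the skeleton also matches: strip off $\hat\chi_i,\hat U_i,\hat V_i$ because they commute with $\hat N$, use one transfer lemma for $\hat S$ and one for $\hat D$, and chain from $\bra{0}t_0^{\hat N}\ket{0}=1$.

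The genuine gap is the squeezing transfer lemma. It is the paper's main new ingredient and the source of all the explicit constants. You call it ``the main obstacle'' and only list routes to try (normal ordering, Bargmann kernels, Schur's test), so the constants $28r+9$ are anticipated rather than derived.

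The paper closes this step by giving up on an operator-norm estimate, and with it the unbounded $\hat a^{\dagger 2}$ factor. Expand $\hat S(\bm r)\ket\psi$ in the Fock basis, insert $s^{|\bm i|/2}s^{-|\bm i|/2}$, and apply Cauchy--Schwarz. This gives $\bra\psi\hat S^\dagger t^{\hat N}\hat S\ket\psi\le\bra\psi s^{\hat N}\ket\psi\prod_k\sum_j t^j\sum_i s^{-i}|\bra{j}\hat S(r_k)\ket{i}|^2$, a Hilbert--Schmidt-type quantity that factorizes over modes. Since $\sum_i s^{-i}\ket{i}\bra{i}=(1+\bar n)\rho_{\bar n}$ with $\bar n=1/(s-1)$, each factor is a $t^j$-weighted photon-number moment of the squeezed thermal state $\hat S\rho_{\bar n}\hat S^\dagger$.

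For $s<1/\tanh r$ this state has a positive Gaussian $P$-function. It is dominated pointwise by $\bigl(\frac{\bar ne^{2r}+s_re^{r}}{\bar ne^{-2r}-s_re^{-r}}\bigr)^{1/2}$ times the $P$-function of a thermal state of mean $\bar ne^{2r}+s_re^{r}$. Pairing with the nonnegative $Q$-function of $\ket{j}$ gives a geometric bound on $\bra{j}\hat S\rho_{\bar n}\hat S^\dagger\ket{j}$, summable exactly when $t<f(r,s)$; this yields $g(r,s,t)$. It also explains the choice $t_0=1+2e^{-2r}$: it sits below $1/\tanh r$ with margin $e^{-r}-s_r(t_0-1)=e^{-3r}$, which produces the factor $e^{6r}$.

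Even granting both lemmas, your bookkeeping would not land on the statement as written. There are three problems.

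\textbf{Splitting the recursion.} You assign the whole contraction $s\mapsto 1+(s-1)/(e^{2r}+s)$ to squeezing. That leaves nothing for the displacement step, which needs a strictly smaller base. Since $\bra\psi t^{\hat N}\ket\psi$ increases with $t$, a bound at a smaller base does not give one for the prescribed $t_L$. The paper splits the map as $s\mapsto\tilde t=1+(s-1)/(e^{2r}+1)$ for squeezing and $\tilde t\mapsto 1+(\tilde t-1)/\tilde t$ for displacement. These compose exactly to the recursion in the statement.

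\textbf{The $|\alpha|^2$ coefficient.} In the displacement bound $(s/(s-t))^m\exp\bigl(m|\alpha|^2\frac{(s-1)(t-1)}{s-t}\bigr)$ of \cite[Lemma 13]{cerf2025}, the choice $t=1+(s-1)/s$ makes the coefficient of $|\alpha|^2$ exactly $1$, even though $s/t\to1$. Only the $\alpha$-independent prefactor $(s/(s-1))^{2m}$ diverges. Lumping both into one divergent constant multiplying $|\alpha|^2+1$, as in your sketch, does not by itself keep $|\alpha|^2$ at order $mL^2$.

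\textbf{Where the $L^2$ comes from.} In the paper, the extra factor of $L$ and most of the $28r$ come from the prefactor $\bigl((e^{2r}+t_{i-1})/(t_{i-1}-1)\bigr)^{3m}$. One power comes from the $s/(s-1)$ in $g$ and two from displacement. It is evaluated at $t_{i-1}-1\ge(t_0-1)(e^{2r}+t_0)^{-L}$. Your expectation that the squeezing constant is $e^{O(r)}$ uniformly in $s$ would only sharpen this if proved, but it is not what the Hilbert--Schmidt argument delivers, and it needs its own proof.
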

The proof of Theorem \ref{methodtheo:exp_energy_bound_simplified} is given in Section \ref{appsec:exp_energy_bound energy simplified} of the Supplementary Material and relies on tracking the increase in the exponential-energy by displacement and squeezing gate. While the increase induced by displacement gates follows from \cite[Lemma 13]{cerf2025}, in Section \ref{appsec:exp_energy_bound energy simplified} of the Supplementary Material, we prove the following result on the increase in exponential-energy by the squeezing gate. With $s_r\coloneqq\sinh r$:

\begin{lem}[Exponential-energy operator under multimode squeezing] Given an $m$-mode quantum state $\ket{\psi}$ evolving under the $m$-mode squeezing operator $\hat S(\bm{r})$, where $\bm{r} = (r_1,r_2,\dots,r_m) \in \R_+^m, \forall i \in (1,\dots,m)$, and the number operator $\hat N$, then it follows that
\begin{align}
    \bra{\psi}\hat S^\dagger(\bm{r}) t^{\hat{N}}\hat S(\bm{r})\ket{\psi}< g(r,s,t)^m \bra{\psi}s^{\hat N}\ket{\psi},
\end{align}
where
\begin{eqnarray}
    g(r,s,t)&:=& s\times\left(\frac{e^{2r} + s_{r} e^{r}(s-1)}{e^{-2r} - s_{r} e^{-r}(s-1)}\right)^{1/2 } \nonumber \\&\times& \frac{1}{(s-1) - (t-1)(e^{2r} + s_{r} e^{r}(s-1))}, \nonumber \\
\end{eqnarray}
and $r= \max_i r_i ,\forall i \in (1,2,\dots,m)$, provided that
\begin{equation}
    1 < t < 1 + \frac{s-1}{e^{2r} + s_{r} e^{r} (s-1)} < s < \frac{1}{\tanh(r)}.
\end{equation}
\end{lem}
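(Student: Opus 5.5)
The plan is to reduce the lemma to a single-mode operator inequality and then prove that inequality with phase-space (Bargmann) integrals.

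\textbf{Reduction to one mode.} Since $\hat S(\bm r)=\bigotimes_{i}\hat S(r_i)$ and $t^{\hat N}=\bigotimes_i t^{\hat N_i}$, it suffices to prove the single-mode operator inequality
\begin{equation}
\hat S^\dagger(r_i)\, t^{\hat N_i}\, \hat S(r_i)\ \le\ g(r_i,s,t)\, s^{\hat N_i},
\end{equation}
on a suitable dense domain (e.g.\ finite Fock superpositions). Both sides are positive operators. Hence for positive $A_j\le g_j B_j$ we can chain
\begin{equation}
A_1\otimes A_2\le g_1 B_1\otimes A_2\le g_1g_2\, B_1\otimes B_2,
\end{equation}
and iterating over the $m$ modes gives the $m$-mode inequality with $\prod_i g(r_i,s,t)$. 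Taking the expectation in $\ket\psi$ then gives the claim, provided $g(\cdot,s,t)$ is nondecreasing in $r$. I will check this monotonicity directly: the numerator ratio increases and the denominator $(s-1)-(t-1)(e^{2r}+s_re^r(s-1))$ decreases in $r$. This replaces each $g(r_i,s,t)$ by $g(r,s,t)$ with $r=\max_i r_i$. The stated constraint $t<1+\frac{s-1}{e^{2r}+s_re^r(s-1)}$, imposed at the maximal $r$, then holds for every $r_i$.

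\textbf{Single-mode bound via a coherent-state representation.} For $t>1$ I write $t^{\hat N}$ as a Gaussian integral over coherent-state projectors, or equivalently use the normal-ordered form $t^{\hat N}={:}e^{(t-1)\hat a^\dagger\hat a}{:}$. Conjugating by $\hat S(r)$ turns $\langle\psi|\hat S^\dagger t^{\hat N}\hat S|\psi\rangle$ into a Gaussian integral over $\alpha\in\C$ of $|\langle \alpha|\hat S(r)|\psi\rangle|^2$ against a Gaussian weight. The overlap $\langle\alpha|\hat S(r)|\psi\rangle$ is then expressed through the Bargmann function $\psi(z)=\sum_n\psi_n z^n/\sqrt{n!}$. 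Using the known Bargmann kernel of $\hat S(r)$ (Gaussian with entries $\tanh r$ and $1/\cosh r$), this gives an integral of $\psi(z)$ against an explicit complex Gaussian.

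\textbf{Controlling $\psi$ by $\bra\psi s^{\hat N}\ket\psi$.} By Cauchy--Schwarz,
\begin{equation}
|\psi(z)|^2\le\Big(\sum_n s^n|\psi_n|^2\Big)\Big(\sum_n \frac{|z|^{2n}}{s^n n!}\Big)=\bra\psi s^{\hat N}\ket\psi\, e^{|z|^2/s}.
\end{equation}
A sharper variant is to apply Cauchy--Schwarz inside the integral with a Gaussian splitting, in Schur-test fashion, so that $\bra\psi s^{\hat N}\ket\psi$ appears exactly. Substituting this bound leaves a purely Gaussian integral in $(z,\alpha)$. Its value is a determinant factor, which should produce the square root $\big(\frac{e^{2r}+s_re^r(s-1)}{e^{-2r}-s_re^{-r}(s-1)}\big)^{1/2}$, times a one-dimensional radial factor, which should produce $\frac{s}{(s-1)-(t-1)(e^{2r}+s_re^r(s-1))}$.

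\textbf{Convergence conditions and main obstacle.} The convergence conditions of these Gaussian integrals should reproduce the hypotheses. Positivity of $e^{-2r}-s_re^{-r}(s-1)$ is equivalent to $s<1/\tanh r$, and positivity of the final denominator is the upper constraint on $t$. The strict inequality in the lemma comes from the Cauchy--Schwarz step, which is strict unless $\psi$ is a specific Gaussian vector. The main obstacle is this Gaussian bookkeeping: completing the square in the mixed quadratic form coming from the squeezing kernel, where $\tanh r$ couples $z^2$ and $\bar z^2$. I would first diagonalize the form in real quadratures $(x,p)$, where squeezing acts by $e^{\pm r}$, which should yield the $e^{\pm 2r}$ and $s_re^{\pm r}$ combinations directly. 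If a crude Cauchy--Schwarz loses the exact constant, I would fall back to a Schur test with Gaussian test functions, tuned so that the exponents match $g$.
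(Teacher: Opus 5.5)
\textbf{The gap: the Cauchy--Schwarz step is too lossy.} Your reduction to one mode, by tensoring the operator inequalities $\hat S^\dagger(r_i)t^{\hat N_i}\hat S(r_i)\le g(r_i,s,t)s^{\hat N_i}$, is fine, and so is the monotonicity of $g$ in $r$. The single-mode estimate, however, has a genuine gap. The only concrete mechanism you give is the pointwise bound $|\psi(z)|^2\le\bra{\psi}s^{\hat N}\ket{\psi}e^{|z|^2/s}$, or an $L^2$ splitting against $|K|^2$. To insert it into $(\hat S(r)\psi)(\alpha)=\frac1\pi\int K_r(\alpha,\bar z)\psi(z)e^{-|z|^2}d^2z$, you must move the absolute value inside the integral. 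That destroys the phase cancellation of the complex Gaussian kernel, and the result cannot give $g$.

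Already at $r=0$ this fails. There $K_0=e^{\alpha\bar z}$ is the reproducing kernel, and the procedure gives $|\psi(\alpha)|^2\le\kappa^{-2}\bra{\psi}s^{\hat N}\ket{\psi}e^{|\alpha|^2/(2\kappa)}$ with $\kappa=1-\frac{1}{2s}$. Inserting this into $\bra{\psi}t^{\hat N}\ket{\psi}=\frac{1}{\pi t}\int e^{-|\alpha|^2/t}|\psi(\alpha)|^2d^2\alpha$ gives something finite only for $t<2-1/s$, whereas the lemma allows every $t<s$. Where it is finite, the constant $\frac{2}{\kappa(2\kappa-t)}$ is always larger than $g(0,s,t)=\frac{s}{s-t}$, by a factor tending to $\frac{4s}{2s-1}$ as $t\to1$. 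By continuity the same divergence occurs at admissible points with $r>0$, e.g.\ $r=0.01$, $s=3$, $t=2$. Your Schur-test fallback also works with $|K|$, and it is not specified well enough to show its output stays below $g$ on the whole hypothesis range.

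\textbf{The missing idea: weight the Cauchy--Schwarz in Fock space.} Apply Cauchy--Schwarz in Fock space with the weight $s^{\hat N}$, which keeps the holomorphic structure:
\[
|\bra{\alpha}\hat S(r)\ket{\psi}|^2=|\langle s^{-\hat N/2}\hat S^\dagger(r)\alpha|s^{\hat N/2}\psi\rangle|^2\le\bra{\alpha}\hat S(r)s^{-\hat N}\hat S^\dagger(r)\ket{\alpha}\,\bra{\psi}s^{\hat N}\ket{\psi}.
\]
Integrate this against your representation $t^{\hat N}=\frac{1}{\pi t}\int e^{(1-1/t)|\alpha|^2}\ket{\alpha}\!\bra{\alpha}d^2\alpha$. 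Everything then reduces to $\Tr[t^{\hat N}\hat S s^{-\hat N}\hat S^\dagger]=\frac{s}{s-1}\Tr[t^{\hat N}\hat S\rho_{\bar n}\hat S^\dagger]$ with $\bar n=1/(s-1)$. This is exactly the quantity the paper reaches after its Cauchy--Schwarz over the Fock index.

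\textbf{Your expectation about the Gaussian integrals is also off.} You expect the bookkeeping to produce $g$, with $s<1/\tanh r$ appearing as a convergence condition. Write $A_+=\bar ne^{2r}+s_re^r$ and $A_-=\bar ne^{-2r}-s_re^{-r}$ for the two widths of the P-function of $\hat S\rho_{\bar n}\hat S^\dagger$. The exact value of the trace is then $\frac{s}{s-1}\big[(1-(t-1)A_+)(1-(t-1)A_-)\big]^{-1/2}$. This is finite as soon as $(t-1)A_+<1$, i.e.\ under the $t$-constraint alone.

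The factor $\sqrt{A_+/A_-}$ in $g$, and the hypothesis $s<1/\tanh r$ (equivalently $A_->0$), come from a deliberately lossy step in the paper. It dominates the anisotropic P-function by $\sqrt{A_+/A_-}$ times the P-function of a thermal state with mean $A_+$. It then pairs this with the nonnegative Q-function of $\ket{j}$ and sums the geometric series in $t$.

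If you evaluate the trace exactly instead, you must add a final comparison with $g$. The squared ratio of your bound to $g$ is $\frac{A_-(1-(t-1)A_+)}{A_+(1-(t-1)A_-)}$, which is $<1$ because $0<A_-<A_+$. With that step, your route becomes a valid proof that matches the paper in its first step and is slightly sharper in its second.
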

\subsection{Classical simulation of universal bosonic quantum circuits by tracking Fock space amplitudes}
Here we give the formal statement of Theorem \ref{theo:Fock_state_simulation}, where we show how we can efficiently classically simulate a class of bosonic computational problems involving Gaussian and energy-preserving non-Gaussian gate circuits by keeping track of the classical description of the state in the Fock space throughout the computation:
\begin{theo}[Simulating bosonic computations with bounded exponential-energy]
    Let $n$ be a size parameter.  Given an $m$-mode vacuum state $\ket{0}^{\otimes m}$ evolving through the unitary gate
    \begin{equation}
        \hat U = \hat G_L \hat \chi_L \dots \hat G_1 \hat \chi_1,
    \end{equation}
    where $\hat G_1, \dots, \hat G_L$ are arbitrary Gaussian gates, and $\hat \chi_1, \dots, \hat \chi_L$ are energy-preserving non-Gaussian gates such that the action of $\chi_1,\dots,\chi_L$ on Fock states can be computed in time at most exponential in the boson number of the corresponding Fock state. Then, given that $m = \mathcal O(\log(n)), L = \mathcal O(\poly(n))$ and there exists a constant $s > 1$ such that the expectation value of $s^{\hat N}$ is bounded by $\mathcal{O}(\poly(n))$ throughout the computation, then the probabilities generated by $k$-local Gaussian state projections
    \begin{equation}
        P = \left|\bra{G_1\dots G_k}\otimes \mathbb{I}_{m-k} \hat U \ket{0}^{\otimes m}\right|^2
    \end{equation}
    can be computed up to $\mathcal{O}(1/\poly(n))$ additive precision in time $\mathcal O(\poly(n))$, $\forall k \in (1,\dots,m)$, which allows for approximate strong simulation and in particular, approximate sampling.
\end{theo}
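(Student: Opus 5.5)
The plan is to maintain, throughout the circuit, a truncated Fock-basis description of the evolving state, and to use the assumed exponential-energy bound together with Lemma~\ref{lem:efficient_description} to keep the truncation dimension polynomial.

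Concretely, let $\ket{\psi_j}$ denote the state after the first $j$ gates. By assumption, $\bra{\psi_j}s^{\hat N}\ket{\psi_j}\le s^{E}$ with $s^E=\mathcal O(\poly(n))$, so $E=\mathcal O(\log n)$. Set the cutoff to
\begin{equation}
k = E + 2\log_s(2L/\epsilon)=\mathcal O(\log n),
\end{equation}
for a target precision $\epsilon=1/\poly(n)$. The number of $m$-mode Fock states with total boson number at most $k$ is $\binom{k+m}{m}\le 2^{k+m}$, and since $m,k=\mathcal O(\log n)$ this is $\poly(n)$.

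The algorithm stores the amplitudes of the truncated, renormalized state. At each step:
\begin{itemize}
\item[(i)] For an energy-preserving gate $\hat\chi_j$, the truncated subspace is invariant. By hypothesis, the action of $\hat\chi_j$ on each Fock state of boson number $\le k$ costs $\exp(\mathcal O(k))=\poly(n)$, so applying it is exact and efficient.
\item[(ii)] For a Gaussian gate $\hat G_j$, decompose it via Bloch--Messiah, Eq.~\eqref{eq:Bloch_Messiah}, and compute the matrix elements $\bra{\bm n}\hat G_j\ket{\bm n'}$ for $|\bm n|,|\bm n'|\le k$. These are known closed-form expressions (loop hafnians/Hermite polynomials, or products of single-mode squeezing and displacement elements combined with passive transformations), computable in time $\exp(\mathcal O(k+m))=\poly(n)$. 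The result is then projected back onto boson number $\le k$ and renormalized.
\end{itemize}

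Error control uses a telescoping/triangle-inequality argument in trace distance. Let $\ket{\phi_j}$ be the tracked state. Then
\begin{equation}
D(\phi_{j+1},\psi_{j+1})\le D(U\phi_j U^\dagger, U\psi_jU^\dagger) + D(\Pi_k U\psi_j,\ U\psi_j) + \dots
\end{equation}
The first term equals $D(\phi_j,\psi_j)$ by unitary invariance. The issue is that we truncate $U\phi_j$, not the true $U\psi_j$. To handle this, use a purified bound: for projectors, $\|\Pi(\ket{a}-\ket{b})\|\le\|\ket a-\ket b\|$, working with unnormalized vectors and Euclidean norms. Truncation of the true state costs at most $\sqrt{s^E/s^k}\le\epsilon/(2L)$ by Lemma~\ref{lem:efficient_description}, applied to $\ket{\psi_{j+1}}$, which satisfies the exponential-energy hypothesis. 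Renormalization then at most doubles the error. Summing over $2L$ steps gives total error $\mathcal O(\epsilon)$ in vector norm, and hence in trace distance. The total cost is $\mathcal O(L\cdot\poly(n))=\poly(n)$.

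Finally, the output probability $P$ is obtained from the truncated output state, whose probabilities differ from the true ones by at most the trace distance. Here I invoke \cite[Theorem 2]{chabaud2021corestates}, since the truncated state is a finite core state with polynomially many Fock components on $m=\mathcal O(\log n)$ modes. That result computes overlaps with Gaussian projections, and marginals on the remaining $m-k$ modes, to inverse-polynomial precision in polynomial time. Marginal computation for all prefixes then gives approximate strong simulation, and hence sampling by the chain rule.

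The main obstacle I expect is the Gaussian step (ii). One needs matrix elements efficiently, with numerical stability at precision $1/\poly(n)$. One also needs to ensure that the truncation error is charged to the true state's exponential energy rather than the tracked state's; the vector-norm telescoping above is designed to handle exactly that.
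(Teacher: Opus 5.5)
Your proposal is correct and follows essentially the same route as the paper: track Fock amplitudes with an $\mathcal O(\log n)$ cutoff from Lemma~\ref{lem:efficient_description}, charged to the true state after each Gaussian gate; apply the energy-preserving gates exactly on the invariant truncated subspace; compute Gaussian matrix elements as loop hafnians of $\mathcal O(\log n)$-size matrices; and finish with \cite[Theorem 2]{chabaud2021corestates}. The only difference is the error bookkeeping: the paper adds errors directly in pure-state trace distance, using $\Tr[\Pi_k \hat G_i\hat\chi_i\ket{\phi_{i-1}}\bra{\phi_{i-1}}\hat\chi_i^\dagger\hat G_i^\dagger]\le 1$ to drop the tracked state's normalization, whereas your vector-norm telescoping works provided the factor-$2$ from renormalization is applied once at the end (legitimate, since by linearity per-step renormalization yields the same final state), because doubling the accumulated error at every step would compound to $2^L$.
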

\noindent Theorem \ref{theo:exp_energy_bound_simplified} combined with Lemma \ref{lem:efficient_description} guarantee that the effective Fock space throughout the computation is small enough that this algorithm remains efficient. We detail the proof in Section \ref{appsec:Fock_space_simulation} of the Supplementary Material.
\subsection{Phase-shifter decompositions of self-Kerr and cross-Kerr gates}\label{methodsec:phase_shifter_simul}
In this section, we introduce gate decompositions for Kerr gates. These decompositions were obtained in \cite{Tara1993,Jun-wei1996} for self-Kerr gates, and we give a generalisation to cross-Kerr gates:
\begin{lem}[Decomposition of Kerr/cross-Kerr gate as phase shifters]\label{lem:kerr_decomp}
    Given integers $p\in \Z, q \in \Z_+^*$, the self-Kerr gate $\hat \kappa (- p/q)$ can be expressed as
    \begin{equation}
    \hat \kappa\left(- \frac{p}{q}\right) = \sum_{j=0}^{q-1} g_{j,p,q}^{(e)} \exp\left(-i \frac{2\pi j}{q} \hat N_1\right)
\end{equation}
if $q$ is even, and as
\begin{equation}
    \hat\kappa\left(-\frac{p}{q}\right) = \sum_{j=0}^{q-1} g_{j,p,q}^{(o)} \exp\left(-i \frac{\pi}{q} (2j+p) \hat N_1\right)
\end{equation}
if $q$ is odd. Furthermore, for all integers $p\in \Z, q \in \Z_+^*$, the cross-Kerr gate can be decomposed as 
\begin{equation}
    \hat{c\kappa} \left(\frac{p}{q}\right) = \sum_{j,k = 0}^{q-1} g_{j,k,p,q} \exp\left(i \frac{2\pi}{q}(j \hat N_1 + k \hat N_2)\right).
\end{equation}
The expressions for $g_{j,p,q}^{(e)}, g_{j,p,q}^{(o)}$ and $g_{j,k,p,q}$ are given in the Supplementary Material.
\end{lem}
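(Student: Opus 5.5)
The plan is to work entirely in the Fock basis. Every operator in the statement, on both sides, is diagonal in $\{\ket{n}\}$ (resp.\ $\{\ket{n_1,n_2}\}$), so each identity reduces to an identity between scalar functions of $n\in\N$ (resp.\ $(n_1,n_2)\in\N^2$). A phase shifter $\exp(-i\theta \hat N_1)$ acts on $\ket{n}$ as multiplication by $e^{-i\theta n}$. It therefore suffices to show that the eigenvalue function of the Kerr gate, possibly after multiplication by a linear phase, is periodic in $n$ with period $q$. A $q$-periodic function on $\Z$ is a finite linear combination of the characters $e^{-2\pi i jn/q}$, $j=0,\dots,q-1$, by the discrete Fourier transform on $\Z_q$, and this gives the decomposition with explicit coefficients.

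\emph{Self-Kerr, $q$ even.} Let $f(n)=e^{-i\pi p n^2/q}$. Then
\begin{equation}
    f(n+q)/f(n)=e^{-i\pi p(2n+q)}=1,
\end{equation}
since $p(2n+q)$ is even when $q$ is even. So $f$ is $q$-periodic, and setting
\begin{equation}
    g^{(e)}_{j,p,q}=\frac1q\sum_{n=0}^{q-1}e^{-i\pi p n^2/q}e^{2\pi i jn/q}
\end{equation}
gives $f(n)=\sum_j g^{(e)}_{j,p,q}e^{-2\pi i jn/q}$ for all $n$. Applying this to each Fock state proves the first identity.

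\emph{Self-Kerr, $q$ odd.} Now $f$ is only $2q$-periodic, so I first remove a linear phase. Let $h(n)=f(n)e^{i\pi pn/q}=e^{-i\pi p n(n-1)/q}$. Then
\begin{equation}
    h(n+q)/h(n)=e^{-i\pi p(2n+q-1)}=1,
\end{equation}
because $q-1$ is even. Hence $h(n)=\sum_j c_j e^{-2\pi i jn/q}$ with $c_j$ given by the DFT of $h$. Multiplying back by $e^{-i\pi pn/q}$ gives $f(n)=\sum_j c_j e^{-i\pi(2j+p)n/q}$, so we take $g^{(o)}_{j,p,q}=c_j$.

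\emph{Cross-Kerr.} Let $F(n_1,n_2)=e^{2\pi i p n_1n_2/q}$. This function is $q$-periodic separately in $n_1$ and in $n_2$ for every integer $p$ and $q\ge1$, with no parity condition. The two-dimensional DFT on $\Z_q^2$ then gives $F=\sum_{j,k}g_{j,k,p,q}e^{2\pi i(jn_1+kn_2)/q}$, where
\begin{equation}
    g_{j,k,p,q}=q^{-2}\sum_{n_1,n_2}F(n_1,n_2)e^{-2\pi i(jn_1+kn_2)/q}.
\end{equation}
The inner sum over $n_2$ is a geometric sum, equal to $q\,\delta_{pn_1\equiv k \ (\mathrm{mod}\ q)}$, which yields a simpler closed form if desired. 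Since $\exp(i\frac{2\pi}{q}(j\hat N_1+k\hat N_2))$ factors as a tensor product of phase shifters, the claim follows.

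The main obstacle is the parity subtlety in the odd-$q$ self-Kerr case, namely finding the right linear phase shift that restores $q$-periodicity. The choice $e^{i\pi pn/q}$ works because $n(n-1)$ is always even. Beyond that, one only needs to check that the DFT inversion holds for all $n\in\N$ and not just $0\le n<q$, which follows from periodicity. Finally, convergence or operator-domain issues do not arise, because each sum is finite and each term is a bounded diagonal unitary.
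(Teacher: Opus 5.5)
Your proposal is correct and follows essentially the same route as the paper: you establish $q$-periodicity of the Kerr eigenvalue function in the Fock basis (for odd $q$, after removing the linear phase $e^{i\pi p n/q}$, i.e.\ passing to $n(n-1)$), then invert via the discrete Fourier transform on $\mathbb{Z}_q$ (resp.\ $\mathbb{Z}_q^2$), which yields exactly the coefficients $g^{(e)}_{j,p,q}$, $g^{(o)}_{j,p,q}$ and $g_{j,k,p,q}$. Your explicit treatment of the odd-$q$ phase factor is cleaner than the paper's one-line remark about it; note only that, as your own computation $e^{-i\pi p(2n+q-1)}=1$ shows, what makes it work is that $q-1$ is even, not merely that $n(n-1)$ is even.
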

\noindent The proof of Lemma \ref{lem:kerr_decomp} is given in Section \ref{appsec:kerr_decomp} of the Supplementary Material, and follows from analysing the periodicity of the Kerr gates with respect to $q$, which allows us to write the Kerr gates as sums of phase-shifters that are periodic in $q$. The coefficients then follow from the inverse Fourier transform of the expressions. We expect a similar decomposition to hold for non-Gaussian gates of the form $\exp(iy\hat N^k)$ for $k>2$, as long as $y$ has the right periodicity. 

Now we detail the application of Lemma \ref{lem:kerr_decomp} in efficient classical simulation and circuit cutting in circuits composed of Gaussian and Kerr gates. For simplicity, we give our result for Gaussian and self-Kerr gates and note that the analysis extends to circuits including cross-Kerr gates as well.
\subsection{Classical simulation of Gaussian and Kerr circuits}
Given an $m$-mode vacuum state evolving through a circuit described by the unitary gate
\begin{equation}
    \hat U = \hat G_L \hat \kappa (x_L) \dots \hat G_1 \hat \kappa (x_1),
\end{equation}
if all the self-Kerr gates have rational parameters, i.e.
\begin{equation}
    x_i = p_i/q_i, p_i \in \Z, q_i \in \Z_+,
\end{equation}
then by Lemma \ref{lem:kerr_decomp}, the self-Kerr gate $\hat \kappa (x_i)$ can be written as a sum of $q_i$ phase-shifters, and the output state can be written as a sum of Gaussian states. The probabilities generated by Gaussian measurements on the output state can then be computed using techniques from \cite{Dias2024}.

However, if some of the self-Kerr gates do not have rational parameters, i.e.\ their parameters $x$ cannot be put in the form $ p/q$, we use a Diophantine approximation \cite{hurwitz1891ueber}: given an irrational number $x$, we can use continued fractions to find a rational number $p/q$ such that
\begin{equation}
    \left|x- \frac{p}{q}\right| < \frac{1}{\sqrt{5}q^2}.
\end{equation}
The trace distance error induced by this approximation is quantified by the following Lemma, which leverages an upper bound on the exponential-energy:
\begin{lem}[Error induced by ignoring a self-Kerr gate of small strength]\label{lem:eps_kerr_gate_td}
    Given a multimode state $\ket{\psi}$ such that  $\bra{\psi} e^{-i\epsilon \hat N_1^2} s^{\hat N} e^{i\epsilon \hat N_1^2} \ket{\psi} = \bra{\psi}  s^{\hat N}\ket{\psi} \leq s^E$, the trace distance between $\ket{\psi}$ and $e^{i\epsilon \hat N_1^2}\ket{\psi}$ is upper bounded by $\mathcal O(\epsilon E^2 \log_s^2 (1/\epsilon^2))$.
\end{lem}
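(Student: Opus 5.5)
The plan is to split $\ket{\psi}$ into a low-boson-number part, on which the phase $e^{i\epsilon \hat N_1^2}$ is close to the identity, and a high-boson-number tail, which is exponentially small because $\ket{\psi}\in\mathcal S_{s,E}$. Write $\ket{\psi} = \Pi_k\ket{\psi} + (\mathbb{I}-\Pi_k)\ket{\psi}$, where $\Pi_k$ is the projector onto total boson number $\le k$. We have
\begin{equation*}
\tfrac12\|\ket{\psi}\!\bra{\psi}-\ket{\phi}\!\bra{\phi}\|_1=\sqrt{1-|\braket{\psi|\phi}|^2}\le \|\ket{\psi}-\ket{\phi}\|
\end{equation*}
for pure normalized states, with $\ket{\phi}=e^{i\epsilon\hat N_1^2}\ket{\psi}$. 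It therefore suffices to bound $\|(e^{i\epsilon \hat N_1^2}-\mathbb{I})\ket{\psi}\|$.

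\textbf{Step 1 (low-number part).} Since $e^{i\epsilon\hat N_1^2}$ commutes with $\Pi_k$, the contribution from the low-number part is
\begin{equation*}
\|(e^{i\epsilon \hat N_1^2}-\mathbb{I})\Pi_k\ket{\psi}\| \le \max_{n_1\le k}|e^{i\epsilon n_1^2}-1| \le \epsilon k^2,
\end{equation*}
using $|e^{ix}-1|\le |x|$ and $n_1\le|\bm n|\le k$.

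\textbf{Step 2 (tail).} The operator $e^{i\epsilon\hat N_1^2}-\mathbb{I}$ has norm at most $2$, so the tail contributes at most $2\|(\mathbb{I}-\Pi_k)\ket{\psi}\|$. By Markov's inequality applied to $s^{\hat N}$,
\begin{equation*}
\|(\mathbb{I}-\Pi_k)\ket{\psi}\|^2=\sum_{|\bm n|>k}|\psi_{\bm n}|^2\le s^{-k}\bra{\psi}s^{\hat N}\ket{\psi}\le s^{E-k}.
\end{equation*}
This is the same estimate as in the proof of Lemma~\ref{lem:efficient_description}, which we may reuse directly. The hypothesis that $s^{\hat N}$ has the same expectation before and after the gate is automatic, since the Kerr gate commutes with $\hat N$. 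It guarantees that the same cutoff works for the rotated state, although we only need it for $\ket{\psi}$.

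\textbf{Step 3 (choice of cutoff).} Combining the two steps gives a total error of at most $\epsilon k^2 + 2 s^{(E-k)/2}$. Following Eq.~\eqref{eq:boson_number_cutoff}, take $k=E+\log_s(1/\epsilon^2)$, so that $s^{(E-k)/2}=\epsilon$. The error is then at most
\begin{equation*}
\epsilon\bigl(E+\log_s(1/\epsilon^2)\bigr)^2+2\epsilon .
\end{equation*}
Bound $(a+b)^2\le 2a^2+2b^2$, or, assuming $E\ge1$ and $\log_s(1/\epsilon^2)\ge1$ (the relevant regime of small $\epsilon$), bound $(E+\ell)^2\le 4E^2\ell^2$. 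This yields the claimed $\mathcal O(\epsilon E^2\log_s^2(1/\epsilon^2))$.

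\textbf{Main obstacle.} The only delicate point is Step 3. The stated form is a product $E^2\log^2$ rather than a sum, so one must check that the constants are absorbed for $E\ge1$ and $\epsilon$ small. One may also want a non-integer cutoff, which is handled by taking the ceiling of $k$, costing only a constant factor. If a tighter constant were needed, one could use the unit-norm reduction on $\ket{\psi}$ directly instead of the triangle inequality, but this is not necessary for the $\mathcal O$ statement.
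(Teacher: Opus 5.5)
Your proof is correct and follows essentially the same route as the paper. Both arguments truncate at total boson number $k$, bound the low-number part by $\epsilon k^2$ and the tail by $s^{(E-k)/2}$ using the exponential-energy Markov estimate, and choose $k=E+\log_s(1/\epsilon^2)$, arriving at $2\epsilon+\mathcal O(\epsilon k^2)$.

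The only difference is bookkeeping. You bound $\|(e^{i\epsilon\hat N_1^2}-\mathbb{I})\ket{\psi}\|$ directly, whereas the paper normalizes both truncated states, computes their fidelity via $1-\cos x\le x^2/2$, and applies the triangle inequality with Lemma~\ref{lem:efficient_description} twice. Your explicit note that $E\ge1$ and small $\epsilon$ are needed to absorb the sum into the product form $E^2\log_s^2(1/\epsilon^2)$ is an assumption the paper leaves implicit.
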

\noindent The proof of Lemma \ref{lem:eps_kerr_gate_td} is given in Section \ref{appsec:eps_kerr_gate_td} of the Supplementary Material and follows by first truncating $\ket{\psi}$ and $e^{i\epsilon \hat N_1^2} \ket{\psi}$ to the Hilbert space with total bosons less than equal to $k$ (with truncation error quantified by Lemma \ref{lem:efficient_description}) and then calculating the trace distance between the truncated states. We optimize the cutoff boson number $k$ to find the upper bound.

From Theorem \ref{theo:exp_energy_bound_simplified} and the subsequent discussion, we know that for a circuit composed of Gaussian and self-Kerr gates, after each layer of Gaussian and self-Kerr gate, we can find $t_i > 1$ and $E_i \in \R_+$ such that $\bra{\psi_i}t_i^{\hat N}\ket{\psi_i} < t_i^{E_i}$, where $E_i$ can be interpreted as energy of the circuit. Therefore, for a circuit composed of $L$ layers of Gaussian and self-Kerr gates (with possibly irrational parameters), we first approximate the irrational Kerr gates with rational gates using the Diophantine approximation where the precision with which we approximate the irrational parameter is determined according to Lemma \ref{lem:eps_kerr_gate_td}. This allows us to approximate the output state of the Gaussian and self-Kerr gate circuit as a superposition of Gaussian states, and we can again use the techniques from \cite{Dias2024} to approximately compute the output probabilities. This gives the following result on simulation of Gaussian and self-Kerr gate circuits:
\begin{theo}[Simulating universal bosonic computations with self-Kerr gates]
         Given the $m$-mode vacuum state $\ket{0}^{\otimes m}$ evolving through a quantum circuit characterized by the unitary 
    \begin{equation}
        \hat U = \hat G_L \hat \kappa(x_L) \dots \hat G_1 \hat \kappa(x_1),
    \end{equation}
    where $\hat G_1,\dots,\hat G_L$ are arbitrary Gaussian gates, $\hat \kappa (x_1),\dots,\hat \kappa (x_L)$ are self-Kerr gates, then for any $k \in (1,\dots,m)$, probabilities for heterodyne measurements on the first $k$ modes of the output state
    \begin{equation}
        P (\bm\alpha_k) = \frac{1}{\pi^k}\left|\bra{\alpha_1 \alpha_2 \dots \alpha_k}\otimes \mathbb{I}_{m-k}\hat U\ket{0}^{\otimes m}\right|^2
    \end{equation}
    \begin{itemize}
        \item can be computed exactly in polynomial time for log depth circuits $L = \mathcal O(\log(\poly(m)))$ if all the self-Kerr gates have rational parameters, i.e.
        \begin{equation}
            x_i =  p_i/q_i, \; p_i \in \Z,q_i \in \Z_+^*, \;\forall i \in (1,\dots,L);
        \end{equation}
        \item can be computed up to additive precision $\mathcal O(1/\poly(m))$ in quasipolynomial time for log depth circuits $L = \mathcal O(\log(\poly(m)))$ for generic self-Kerr gates.
    \end{itemize}
\end{theo}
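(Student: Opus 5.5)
The plan is to reduce the circuit to a linear combination of Gaussian circuits by replacing each self-Kerr gate with its phase-shifter decomposition (Lemma \ref{lem:kerr_decomp}), and then to evaluate each heterodyne amplitude with the Gaussian overlap techniques of \cite{Dias2024}.

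\textbf{Rational parameters.} Write each $x_i = p_i/q_i$. Lemma \ref{lem:kerr_decomp} gives
\[
\hat\kappa(x_i)=\sum_{j=0}^{q_i-1} c_{i,j}\, e^{-i\theta_{i,j}\hat N_1},
\]
where the $\theta_{i,j}$ are explicit phases. (For the sign convention, note that $\hat\kappa(p/q)=\hat\kappa(-(-p)/q)$.)

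Substituting this into $\hat U$ and expanding the product gives
\[
\hat U=\sum_{\bm j} \Big(\prod_i c_{i,j_i}\Big)\, \hat G_L e^{-i\theta_{L,j_L}\hat N_1}\cdots \hat G_1 e^{-i\theta_{1,j_1}\hat N_1}.
\]
Each term is a Gaussian unitary, and there are $\prod_i q_i \le q_{\max}^L$ terms. For $L=\mathcal O(\log \poly(m))$ and $q_{\max}$ constant (or polynomial), this count is $\poly(m)$.

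The amplitude $\bra{\alpha_1\dots\alpha_k}\otimes\mathbb I\,\hat U\ket{0}$ is a vector on the unmeasured modes. It is therefore a sum of partially projected Gaussian vectors, and $P(\bm\alpha_k)$ is a double sum of overlaps between these unnormalized Gaussian vectors. Each overlap is computable exactly in $\poly(m)$ time by Gaussian integral formulas, via the covariance/displacement and phase tracking of \cite{Dias2024}. The total cost is $(q_{\max}^{L})^2\poly(m)=\poly(m)$.

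The coefficients $c_{i,j}$ are inverse discrete Fourier transforms of quadratic phases, so they have modulus at most $1$. This matters only for precision; in the exact case it is harmless.

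\textbf{Generic parameters.} For each $x_i$, choose a Hurwitz convergent $p_i/q_i$ with $|x_i-p_i/q_i|<1/(\sqrt5 q_i^2)$. Replacing $\hat\kappa(x_i)$ by $\hat\kappa(p_i/q_i)$ changes the state just before that gate by the factor $e^{i\pi\delta_i\hat N_1^2}$ with $|\delta_i|\le 1/q_i^2$.

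Since Kerr gates commute with $\hat N$, the exponential-energy hypothesis of Lemma \ref{lem:eps_kerr_gate_td} holds at layer $i$. We take $t_i, E_i$ from Theorem \ref{methodtheo:exp_energy_bound_simplified}, which applies equally to the true and the rationalized circuits, since the bound depends only on $r$ and $|\alpha|$. Lemma \ref{lem:eps_kerr_gate_td} then bounds the per-layer trace-distance error by $\mathcal O(\delta_i E_i^2\log_{t_i}^2(1/\delta_i))$.

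By the triangle inequality and unitary invariance, the errors add, giving a total error $\sum_i \mathcal O(\delta_i E_L^2 \log^2_{t_L}(1/\delta_i))$. The heterodyne density error is bounded accordingly: its integral over $\bm\alpha$ is controlled by trace distance, and pointwise it is bounded since $|\braket{\alpha|\cdot}|\le 1$.

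To reach $1/\poly(m)$ precision, we need $\delta_i\lesssim 1/(L\,\poly(m)E_L^2\log^2)$. Hence $q_i=\poly(m)\cdot\poly(E_L,1/\log t_L)$.

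\textbf{Main obstacle.} The hard step is bookkeeping the constants, because the number of Gaussian branches is $\prod_i q_i\approx(\poly(m)\,\mathrm{stuff})^{L}$.

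With $L=\mathcal O(\log m)$, $E_L$ from Theorem \ref{methodtheo:exp_energy_bound_simplified} is $mL^2e^{2(2r+1)L}(|\alpha|^2+28r+9)=\poly(m)$ when $r,|\alpha|$ are constant. Likewise $t_L-1\gtrsim e^{-\mathcal O(rL)}=1/\poly(m)$, so $1/\log t_L=\poly(m)$. Therefore each $q_i=\poly(m)$, and the branch count is $\poly(m)^{\mathcal O(\log m)}=m^{\mathcal O(\log m)}$, which is quasipolynomial. The squared count for the probability stays quasipolynomial.

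Finally, we must check that the coefficients $|c_{i,j}|\le1$ do not amplify the error, since we compute the rationalized circuit exactly. The only error is then the approximation error already controlled above. I expect the delicate part to be the recursion for $t_i$, which shows that $t_L-1$ stays inverse-polynomial and not smaller.
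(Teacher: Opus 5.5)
Your proposal is correct and follows essentially the same route as the paper. For rational parameters, you use the exact phase-shifter decomposition of Lemma \ref{lem:kerr_decomp}, which yields at most $q^{L}$ Gaussian branches evaluated with \cite{Dias2024} in time $\mathcal O(q^{2L}\poly(m))$. For generic parameters, you use a Hurwitz approximation whose per-gate error is controlled by Lemma \ref{lem:eps_kerr_gate_td} together with Theorem \ref{methodtheo:exp_energy_bound_simplified}, which forces $q=\poly(m)$ and hence quasipolynomial cost.

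One small slip is the parenthetical ``(or polynomial)'' in the rational case. For $L=\Theta(\log m)$ the branch count $q_{\max}^{2L}=m^{\Theta(\log q_{\max})}$ is polynomial only when $q_{\max}=\mathcal O(1)$, which the paper tacitly assumes. Your own generic-case count $\poly(m)^{\mathcal O(\log m)}$ shows exactly this.
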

\noindent Note that here we are considering heterodyne measurements without loss of generality, since arbitrary single-mode Gaussian measurements can be written as a single-mode Gaussian gate followed by a heterodyne measurement \cite{chabaud2021corestates,giedke2002characterization}. We can then absorb the single-mode Gaussian gates into the final Gaussian gate $\hat G_L$.
\subsection{Circuit cutting for bosonic circuits composed of Gaussian and Kerr gates}

Quantum circuit cutting refers to a method to simulate a large quantum computation by combining simpler quantum computations together with classical post-processing (see \cite{barral2025review} for a review of circuit cutting in the context of qubit-based computations). At a high level, this method partitions an initial circuit into simpler chunks by decomposing a quantum channel as a sum of simpler channels via a so-called quasi-probability decomposition.
While circuit-cutting is typically used for replacing entangling gates by sums of separable ones, the decomposition of Kerr gates as sums of phase-shifters (Lemma \ref{lem:kerr_decomp}) allows us to derive a similar circuit cutting method to simulate bosonic circuits composed of Gaussian and Kerr gates using circuits with fewer Kerr gates, and even Gaussian circuits. 

Concretely, we write the circuit described by the unitary 
\begin{equation}
    \hat U = \hat G_L \hat \kappa (p_l/q_l) \dots \hat G_1 \hat \kappa (p_1/q_1)
\end{equation}
(for simplicity, we are assuming that the self-Kerr gates are all rational, using the Diophantine approximation, we can always make sure this is the case and quantify the error using Lemma \ref{lem:eps_kerr_gate_td}) as a sum of Gaussian unitary gates 
\begin{equation}
    \hat U = \sum_{j_1=0}^{q_1 - 1}\dots\sum_{j_L=0}^{q_L - 1} g_{j_1,p_1,q_1}^{(e/o)} \dots g_{j_1,p_1,q_1}^{(e/o)} \hat G_{j_1,\dots,j_L}.
\end{equation}
Therefore, given the $m$-mode vacuum state as the input state, the probabilities generated by global Gaussian projections on the output state,
\begin{equation}
    P(\bm G) = \left|\bra{G_1\dots G_m}\hat U\ket{0}^{\otimes m}\right|^2,
\end{equation}
can be estimated as follows. We have
\begin{eqnarray}\label{eq:sampled_expression}
    P(\bm G)  &=& \sum_{J,J'} g_J^{(e/o)}g_{J'}^{(e/o)} \braket{\bm G| G_J} \braket{ G_{J'}|\bm G} \nonumber \\
&=& \|g\|_1 \sum_{J,J'} \frac{g_J^{(e/o)}g_{J'}^{(e/o)} \braket{\bm G| G_J} \braket{ G_{J'}|\bm G}}{\|g\|_1}, \nonumber \\
\end{eqnarray}
where in the first line, we have defined $J:= (j_1,\dots, j_L)$, $J'= (j_1',\dots,j_L')$, $g_J^{(e/o)} = \prod_{i=1}^L g_{j_i,p_1,q_1}^{(e/o)}$ and similarly for $g_J$. Note that each of the $j_i (j'_i)$ runs from 0 to $q_i - 1$. In the second line, we have defined
\begin{eqnarray} 
   \|g\|_1 &=& \sum_{J,J'} |g_{J}^{(e/o)} \dots  g_{J'}^{(e/o)}| \nonumber \\ 
   &=& \sum_{j_1=0}^{q_1 - 1} |g_{j_1,p_1,q_1}^{(e/o)}|\dots \sum_{j_L'=0}^{q_L - 1} |g_{j_L',p_L,q_L}^{(e/o)}| \nonumber \\
   &=& \prod_{i = 1}^L \|g_{q_i}\|_1.
\end{eqnarray}
We estimate $|\bra{G_1\dots G_m}\hat U\ket{0}^{\otimes m}|^2$ by statistically sampling from the expression in Eq.~\ref{eq:sampled_expression} $N$ times and calculating the two Gaussian overlaps at the sampled points either classically or by building the appropriate Gaussian circuit (that are comparatively easier to implement) and averaging the value. Let us call the obtained statistical average $\mu$. Then by Hoeffding's inequality, it follows that
\begin{eqnarray}
    &&\mathrm{Pr}\left[\left|P(\bm G) - \mu\right| \geq \epsilon\right] \nonumber \leq 2\exp\left(-\frac{N\epsilon^2}{2\|g\|_1^2 }\right).
\end{eqnarray}
Therefore, for a success probability $1-\delta$, it is sufficient to take a number of samples
\begin{equation}
    N =\frac{2\|g\|_1^2}{\epsilon^2}\log\left(\frac2\delta\right).
\end{equation}
As we prove in Section \ref{appsec:two-norm_bounds} of the Supplementary Material, $\|g_{q_i}\|_1 \leq \sqrt{q_i}$. Therefore, $\|g\|_1^2 \leq q^{2L}$, where $q = \max_i q_i$ and we obtain a  sample complexity of probability estimation up to additive precision $\epsilon$ with success probability $1-\delta$ by circuit cutting to be
\begin{equation}
    N = \frac{2q^{2L}}{\epsilon^2}\log\left(\frac2\delta\right).
\end{equation}
The same approach also allows us to simulate bosonic circuits composed of Gaussian and Kerr gates using circuits with fewer Kerr gates by decomposing a subset of the Kerr gates as sums of Gaussian gates.


\medskip

\noindent\textit{Note added.} During the preparation of this work, we became aware of the independent work \cite{Guseynov2026} which derives classical simulation algorithms for bosonic circuits composed of Gaussian and Kerr gates, using coherent state propagation methods.

\section{Acknowledgements}
V.U.\ thanks S.\ A.\ Sater and H.\ Thomas for insightful discussions. N.Q.\ thanks A.J.\ Boon for pointers to valuable references. U.C.\ thanks D.\ Rudolph, S.\ Mehraban, M.\ Fanizza and C.\ Rouzé for interesting discussions. We acknowledge support from the PSL Global Seed Fund 2025. U.C.\ and V.U.\ acknowledge funding from the European Union’s Horizon Europe Framework Programme (EIC Pathfinder Challenge project Veriqub) under Grant Agreement No.~101114899.

\bibliographystyle{linksen}
\bibliography{biblio}

\providecommand{\href}[2]{#2}\begingroup\raggedright\begin{thebibliography}{10}

\bibitem{bose1924plancks}
S.~N. Bose, ``Plancks Gesetz und Lichtquantenhypothese,'' \href{http://dx.doi.org/https://doi.org/10.1007/BF01327326}{{\em Zeitschrift f{\"u}r Physik} {\bfseries 26}, 178--181 (1924)}.

\bibitem{London1938}
F.~London, ``The $\lambda$-Phenomenon of Liquid Helium and the Bose-Einstein Degeneracy,'' \href{http://dx.doi.org/10.1038/141643a0}{{\em Nature} {\bfseries 141}, 643--644 (1938)}. \url{https://doi.org/10.1038/141643a0}.

\bibitem{Hong1987}
C.~K. Hong, Z.~Y. Ou, and L.~Mandel, ``Measurement of subpicosecond time intervals between two photons by interference,'' \href{http://dx.doi.org/10.1103/PhysRevLett.59.2044}{{\em Phys. Rev. Lett.} {\bfseries 59}, 2044--2046 (1987)}.

\bibitem{Gottesman2001}
D.~Gottesman, A.~Kitaev, and J.~Preskill, ``Encoding a qubit in an oscillator,'' \href{http://dx.doi.org/10.1103/PhysRevA.64.012310}{{\em Phys. Rev. A} {\bfseries 64}, 012310 (2001)}.

\bibitem{Knill2001}
E.~Knill, R.~Laflamme, and G.~Milburn, ``A scheme for efficient quantum computation with linear optics,'' \href{http://dx.doi.org/10.1038/35051009}{{\em Nature} {\bfseries 409}, 46--52 (2001)}.

\bibitem{Menicucci2006}
N.~C. Menicucci, P.~van Loock, M.~Gu, C.~Weedbrook, T.~C. Ralph, and M.~A. Nielsen, ``Universal Quantum Computation with Continuous-Variable Cluster States,'' \href{http://dx.doi.org/10.1103/PhysRevLett.97.110501}{{\em Phys. Rev. Lett.} {\bfseries 97}, 110501 (2006)}.

\bibitem{madsen2022quantum}
L.~S. Madsen, F.~Laudenbach, M.~F. Askarani, F.~Rortais, T.~Vincent, J.~F. Bulmer, F.~M. Miatto, L.~Neuhaus, L.~G. Helt, M.~J. Collins, {\em et al.}, ``Quantum computational advantage with a programmable photonic processor,'' \href{http://dx.doi.org/10.1038/s41586-022-04725-x}{{\em Nature} {\bfseries 606}, 75--81 (2022)}.

\bibitem{yokoyama2013ultra}
S.~Yokoyama, R.~Ukai, S.~C. Armstrong, C.~Sornphiphatphong, T.~Kaji, S.~Suzuki, J.-i. Yoshikawa, H.~Yonezawa, N.~C. Menicucci, and A.~Furusawa, ``Ultra-large-scale continuous-variable cluster states multiplexed in the time domain,'' \href{http://dx.doi.org/10.1038/nphoton.2013.287}{{\em Nature Photonics} {\bfseries 7}, 982 (2013)}.

\bibitem{larsen2025integrated}
M.~Larsen, J.~Bourassa, S.~Kocsis, J.~Tasker, R.~Chadwick, C.~Gonz{\'a}lez-Arciniegas, J.~Hastrup, C.~Lopetegui-Gonz{\'a}lez, F.~Miatto, A.~Motamedi, {\em et al.}, ``Integrated photonic source of Gottesman--Kitaev--Preskill qubits,'' \href{http://dx.doi.org/https://doi.org/10.1038/s41586-025-09044-5}{{\em Nature} 1--5 (2025)}.

\bibitem{michael2016new}
M.~H. Michael, M.~Silveri, R.~Brierley, V.~V. Albert, J.~Salmilehto, L.~Jiang, and S.~M. Girvin, ``New class of quantum error-correcting codes for a bosonic mode,'' \href{http://dx.doi.org/10.1103/PhysRevX.6.031006}{{\em Physical Review X} {\bfseries 6}, 031006 (2016)}.

\bibitem{terhal2020towards}
B.~M. Terhal, J.~Conrad, and C.~Vuillot, ``Towards scalable bosonic quantum error correction,'' \href{http://dx.doi.org/10.1088/2058-9565/ab98a5}{{\em Quantum Science and Technology} {\bfseries 5}, 043001 (2020)}.

\bibitem{Yalouz2021encodingstrongly}
S.~Yalouz, B.~Senjean, F.~Miatto, and V.~Dunjko, ``Encoding strongly-correlated many-boson wavefunctions on a photonic quantum computer: application to the attractive {B}ose-{H}ubbard model,'' \href{http://dx.doi.org/10.22331/q-2021-11-08-572}{{\em {Quantum}} {\bfseries 5}, 572 (2021)}.

\bibitem{Tong2022provablyaccurate}
Y.~Tong, V.~V. Albert, J.~R. McClean, J.~Preskill, and Y.~Su, ``Provably accurate simulation of gauge theories and bosonic systems,'' \href{http://dx.doi.org/10.22331/q-2022-09-22-816}{{\em {Quantum}} {\bfseries 6}, 816 (2022)}.

\bibitem{hanada2025quantumbosons}
M.~Hanada, S.~Matsuura, E.~Mendicelli, and E.~Rinaldi, ``Exponential improvement in quantum simulations of bosons,'' \href{http://arxiv.org/abs/2505.02553}{{\ttfamily arXiv:2505.02553 [quant-ph]}}.

\bibitem{Mari2012}
A.~Mari and J.~Eisert, ``Positive {W}igner Functions Render Classical Simulation of Quantum Computation Efficient,'' \href{http://dx.doi.org/10.1103/PhysRevLett.109.230503}{{\em Physical Review Letters} {\bfseries 109}, 230503 (2012)}.

\bibitem{chabaud2023resources}
U.~Chabaud and M.~Walschaers, ``Resources for bosonic quantum computational advantage,'' \href{http://dx.doi.org/10.1103/PhysRevLett.130.090602}{{\em Physical Review Letters} {\bfseries 130}, 090602 (2023)}.

\bibitem{Dias2024}
B.~Dias and R.~K\"onig, ``Classical simulation of non-Gaussian bosonic circuits,'' \href{http://dx.doi.org/10.1103/PhysRevA.110.042402}{{\em Phys. Rev. A} {\bfseries 110}, 042402 (2024)}.

\bibitem{calcluth2025}
C.~Calcluth, O.~Hahn, J.~Bermejo-Vega, A.~Ferraro, and G.~Ferrini, ``Classical Simulation of Circuits with Realistic Odd-Dimensional Gottesman-Kitaev-Preskill States,'' \href{http://dx.doi.org/10.1103/xmtw-g54f}{{\em Phys. Rev. Lett.} {\bfseries 135}, 010601 (2025)}.

\bibitem{upreti2025interplay}
V.~Upreti and U.~Chabaud, ``Interplay of resources for universal continuous-variable quantum computing,'' \href{http://dx.doi.org/https://doi.org/10.48550/arXiv.2502.07670}{{\em arXiv preprint arXiv:2502.07670} (2025)}.

\bibitem{arzani2025}
F.~Arzani, R.~I. Booth, and U.~Chabaud, ``Effective descriptions of bosonic systems can be considered complete,'' \href{http://dx.doi.org/10.1038/s41467-025-64872-3}{{\em Nature Communications} {\bfseries 16}, 9744 (2025)}.

\bibitem{mele2024learning}
F.~A. Mele, A.~A. Mele, L.~Bittel, J.~Eisert, V.~Giovannetti, L.~Lami, L.~Leone, and S.~F. Oliviero, ``Learning quantum states of continuous-variable systems,'' \href{http://dx.doi.org/https://doi.org/10.1038/s41567-025-03086-2}{{\em Nature Physics} 1--7 (2025)}.

\bibitem{mele2025communication}
F.~A. Mele, G.~Barbarino, V.~Giovannetti, and M.~Fanizza, ``Achievable rates in non-asymptotic bosonic quantum communication,'' \href{http://arxiv.org/abs/2502.05524}{{\ttfamily arXiv:2502.05524 [quant-ph]}}.

\bibitem{cerf2025}
S.~Cerf, C.~Wassner, J.~Davis, F.~Arzani, and U.~Chabaud, ``On the complex zeros of the wavefunction,'' \href{http://arxiv.org/abs/2507.23468}{{\ttfamily arXiv:2507.23468 [quant-ph]}}.

\bibitem{Lloyd1999}
S.~Lloyd and S.~L. Braunstein, ``Quantum Computation over Continuous Variables,'' \href{http://dx.doi.org/10.1103/PhysRevLett.82.1784}{{\em Phys. Rev. Lett.} {\bfseries 82}, 1784--1787 (1999)}.

\bibitem{NielsenChuang}
M.~A. Nielsen and I.~L. Chuang, ``Quantum Computation and Quantum Information: 10th Anniversary Edition,''.
\newblock \href{http://dx.doi.org/10.1017/CBO9780511976667}{Cambridge University Press}, New York, NY, USA, 2011.

\bibitem{Braunstein2005}
S.~L. Braunstein and P.~van Loock, ``Quantum information with continuous variables,'' \href{http://dx.doi.org/10.1103/RevModPhys.77.513}{{\em Rev. Mod. Phys.} {\bfseries 77}, 513--577 (2005)}.

\bibitem{ferraro2005}
A.~Ferraro, S.~Olivares, and M.~G.~A. Paris, ``Gaussian states in continuous variable quantum information,'' \href{http://arxiv.org/abs/quant-ph/0503237}{{\ttfamily arXiv:quant-ph/0503237 [quant-ph]}}.

\bibitem{Weedbrook2012}
C.~Weedbrook, S.~Pirandola, R.~Garc\'{\i}a-Patr\'on, N.~J. Cerf, T.~C. Ralph, J.~H. Shapiro, and S.~Lloyd, ``Gaussian quantum information,'' \href{http://dx.doi.org/10.1103/RevModPhys.84.621}{{\em Rev. Mod. Phys.} {\bfseries 84}, 621--669 (2012)}.

\bibitem{Winter1999}
A.~Winter, ``Coding theorem and strong converse for quantum channels,'' \href{http://dx.doi.org/10.1109/18.796385}{{\em IEEE Transactions on Information Theory} {\bfseries 45}, 2481--2485 (1999)}.

\bibitem{AnnaMele2024}
F.~A. Mele, A.~A. Mele, L.~Bittel, J.~Eisert, V.~Giovannetti, L.~Lami, L.~Leone, and S.~F.~E. Oliviero, ``Learning quantum states of continuous variable systems,'' 2024.
\newblock \url{https://arxiv.org/abs/2405.01431}.

\bibitem{BUCCO}
A.~Serafini, ``Quantum Continuous Variables: A Primer of Theoretical Methods,''.
\newblock \href{http://dx.doi.org/10.1201/9781315118727}{CRC Press, Taylor \& Francis Group}, Boca Raton, USA, 2017.

\bibitem{Sefi2011}
S.~Sefi and P.~van Loock, ``How to Decompose Arbitrary Continuous-Variable Quantum Operations,'' \href{http://dx.doi.org/10.1103/PhysRevLett.107.170501}{{\em Phys. Rev. Lett.} {\bfseries 107}, 170501 (2011)}.

\bibitem{Marshall2015}
K.~Marshall, R.~Pooser, G.~Siopsis, and C.~Weedbrook, ``Repeat-until-success cubic phase gate for universal continuous-variable quantum computation,'' \href{http://dx.doi.org/10.1103/PhysRevA.91.032321}{{\em Phys. Rev. A} {\bfseries 91}, 032321 (2015)}.

\bibitem{Miyata2016}
K.~Miyata, H.~Ogawa, P.~Marek, R.~Filip, H.~Yonezawa, J.-i. Yoshikawa, and A.~Furusawa, ``Implementation of a quantum cubic gate by an adaptive non-Gaussian measurement,'' \href{http://dx.doi.org/10.1103/PhysRevA.93.022301}{{\em Phys. Rev. A} {\bfseries 93}, 022301 (2016)}.

\bibitem{Alexander2018}
R.~N. Alexander, S.~Yokoyama, A.~Furusawa, and N.~C. Menicucci, ``Universal quantum computation with temporal-mode bilayer square lattices,'' \href{http://dx.doi.org/10.1103/PhysRevA.97.032302}{{\em Phys. Rev. A} {\bfseries 97}, 032302 (2018)}.

\bibitem{zheng2023gaussian}
Y.~Zheng, A.~Ferraro, A.~F. Kockum, and G.~Ferrini, ``Gaussian conversion protocol for heralded generation of generalized Gottesman-Kitaev-Preskill states,'' \href{http://dx.doi.org/10.1103/PhysRevA.108.012603}{{\em Phys. Rev. A} {\bfseries 108}, 012603 (2023)}.

\bibitem{Budinger2024}
N.~Budinger, A.~Furusawa, and P.~van Loock, ``All-optical quantum computing using cubic phase gates,'' \href{http://dx.doi.org/10.1103/PhysRevResearch.6.023332}{{\em Phys. Rev. Res.} {\bfseries 6}, 023332 (2024)}.

\bibitem{chabaud2025energy}
U.~Chabaud, S.~Gharibian, S.~Mehraban, A.~Motamedi, H.~R. Naeij, D.~Rudolph, and D.~Sambrani, ``Energy, Bosons and Computational Complexity,'' \href{http://arxiv.org/abs/2510.08545}{{\ttfamily arXiv:2510.08545 [quant-ph]}}.

\bibitem{hastrup2021unsuitability}
J.~Hastrup, M.~V. Larsen, J.~S. Neergaard-Nielsen, N.~C. Menicucci, and U.~L. Andersen, ``Unsuitability of cubic phase gates for non-Clifford operations on Gottesman-Kitaev-Preskill states,'' \href{http://dx.doi.org/10.1103/PhysRevA.103.032409}{{\em Physical Review A} {\bfseries 103}, 032409 (2021)}.

\bibitem{boudreault2026using}
J.~Boudreault, R.~Shillito, J.-B. Bertrand, and B.~Royer, ``Using a Self-Kerr Nonlinearity for Magic State Preparation in Grid Codes,'' {\em Physical Review Letters} {\bfseries 136}, 120601 (2026).

\bibitem{hall2013quantum}
B.~C. Hall, ``Quantum theory for mathematicians,'', vol.~267.
\newblock \href{http://dx.doi.org/10.1007/978-1-4614-7116-5}{Springer}, 2013.

\bibitem{anshu2023surveycomplexity}
A.~Anshu and S.~Arunachalam, ``A survey on the complexity of learning quantum states,'' \href{http://dx.doi.org/10.1038/s42254-023-00662-4}{{\em Nature Reviews Physics} {\bfseries 6}, 59--69 (2024)}.

\bibitem{chabaud2021corestates}
U.~Chabaud, G.~Ferrini, F.~Grosshans, and D.~Markham, ``Classical simulation of Gaussian quantum circuits with non-Gaussian input states,'' \href{http://dx.doi.org/10.1103/PhysRevResearch.3.033018}{{\em Phys. Rev. Res.} {\bfseries 3}, 033018 (2021)}.

\bibitem{Tara1993}
K.~Tara, G.~S. Agarwal, and S.~Chaturvedi, ``Production of Schr\"odinger macroscopic quantum-superposition states in a Kerr medium,'' \href{http://dx.doi.org/10.1103/PhysRevA.47.5024}{{\em Phys. Rev. A} {\bfseries 47}, 5024--5029 (1993)}.

\bibitem{Jun-wei1996}
W.~Jin-wei and G.~Guang-can, ``Superposition of coherent states in Kerr medium,'' \href{http://dx.doi.org/10.1088/1004-423X/5/10/006}{{\em Acta Physica Sinica (Overseas Edition)} {\bfseries 5}, 764 (1996)}.

\bibitem{hurwitz1891ueber}
A.~Hurwitz, ``Ueber die angenäherte Darstellung der Irrationalzahlen durch rationale Brüche,'' \href{http://dx.doi.org/10.1007/BF01206656}{{\em Mathematische Annalen} {\bfseries 39}, 279--284 (1891)}.

\bibitem{zhao2025complexity}
X.~Zhao, P.~Liao, F.~A. Mele, U.~Chabaud, and Q.~Zhuang, ``Complexity of quantum tomography from genuine non-Gaussian entanglement,'' \href{http://dx.doi.org/https://doi.org/10.1038/s41467-025-67062-3}{{\em Nature Communications} (2025)}.

\bibitem{iosue2025highermomenttheory}
J.~T. Iosue, Y.-X. Wang, I.~Datta, S.~Ghosh, C.~Oh, B.~Fefferman, and A.~V. Gorshkov, ``Higher moment theory and learnability of bosonic states,'' \href{http://arxiv.org/abs/2510.01610}{{\ttfamily arXiv:2510.01610 [quant-ph]}}.

\bibitem{chen2026optimalgaussianlearning}
S.~Chen, F.~A. Mele, M.~Fanizza, A.~Li, Z.~Mann, H.-Y. Huang, Y.~Chen, and J.~Preskill, ``Towards sample-optimal learning of bosonic Gaussian quantum states,'' \href{http://arxiv.org/abs/2603.18136}{{\ttfamily arXiv:2603.18136 [quant-ph]}}.

\bibitem{blair2025faulttolerant}
S.~Blair, F.~Arzani, G.~Ferrini, and A.~Ferraro, ``Towards fault-tolerant quantum computation with universal continuous-variable gates,'' \href{http://arxiv.org/abs/2506.13643}{{\ttfamily arXiv:2506.13643 [quant-ph]}}.

\bibitem{crane2024hybridoscillator}
E.~Crane, K.~C. Smith, T.~Tomesh, A.~Eickbusch, J.~M. Martyn, S.~Kühn, L.~Funcke, M.~A. DeMarco, I.~L. Chuang, N.~Wiebe, A.~Schuckert, and S.~M. Girvin, ``Hybrid Oscillator-Qubit Quantum Processors: Simulating Fermions, Bosons, and Gauge Fields,'' \href{http://arxiv.org/abs/2409.03747}{{\ttfamily arXiv:2409.03747 [quant-ph]}}.

\bibitem{Arrazola_2019}
J.~M. Arrazola, T.~R. Bromley, J.~Izaac, C.~R. Myers, K.~Br{\'{a}}dler, and N.~Killoran, ``Machine learning method for state preparation and gate synthesis on photonic quantum computers,'' \href{http://dx.doi.org/10.1088/2058-9565/aaf59e}{{\em Quantum Science and Technology} {\bfseries 4}, 024004 (2019)}.

\bibitem{giedke2002characterization}
G.~Giedke and J.~I. Cirac, ``Characterization of {G}aussian operations and distillation of {G}aussian states,'' \href{http://dx.doi.org/10.1103/PhysRevA.66.032316}{{\em Physical Review A} {\bfseries 66}, 032316 (2002)}.

\bibitem{barral2025review}
D.~Barral, F.~J. Cardama, G.~D{\'\i}az-Camacho, D.~Fa{\'\i}lde, I.~F. Llovo, M.~Mussa-Juane, J.~V{\'a}zquez-P{\'e}rez, J.~Villasuso, C.~Pi{\~n}eiro, N.~Costas, {\em et al.}, ``Review of distributed quantum computing: From single qpu to high performance quantum computing,'' \href{http://dx.doi.org/10.1016/j.cosrev.2025.100747}{{\em Computer Science Review} {\bfseries 57}, 100747 (2025)}.

\bibitem{Guseynov2026}
N.~Guseynov, Z.~Holmes, and A.~Angrisani, ``Coherent-State Propagation: A Computational Framework for Simulating Bosonic Quantum Systems,'' \href{http://arxiv.org/abs/2604.xxxxx}{{\ttfamily arXiv:2604.xxxxx [quant-ph]}}.

\bibitem{Kim1989}
M.~S. Kim, F.~A.~M. de~Oliveira, and P.~L. Knight, ``Properties of squeezed number states and squeezed thermal states,'' \href{http://dx.doi.org/10.1103/PhysRevA.40.2494}{{\em Phys. Rev. A} {\bfseries 40}, 2494--2503 (1989)}.

\bibitem{wright2016learn}
J.~Wright, {\em How to learn a quantum state}.
\newblock PhD thesis, Carnegie Mellon University, 2016.

\bibitem{yao2024}
Y.~Yao, F.~Miatto, and N.~Quesada, ``Riemannian optimization of photonic quantum circuits in phase and Fock space,'' \href{http://dx.doi.org/10.21468/SciPostPhys.17.3.082}{{\em SciPost Phys.} {\bfseries 17}, 082 (2024)}.

\bibitem{bjorklund2019fasthafnian}
A.~Björklund, B.~Gupt, and N.~Quesada, ``A faster hafnian formula for complex matrices and its benchmarking on a supercomputer,'' \href{http://arxiv.org/abs/1805.12498}{{\ttfamily arXiv:1805.12498 [cs.DS]}}.

\bibitem{Borwein1985}
P.~B. Borwein, ``On the complexity of calculating factorials,'' \href{http://dx.doi.org/https://doi.org/10.1016/0196-6774(85)90006-9}{{\em Journal of Algorithms} {\bfseries 6}, 376--380 (1985)}.

\bibitem{Reck1994}
M.~Reck, A.~Zeilinger, H.~J. Bernstein, and P.~Bertani, ``Experimental realization of any discrete unitary operator,'' \href{http://dx.doi.org/10.1103/PhysRevLett.73.58}{{\em Phys. Rev. Lett.} {\bfseries 73}, 58--61 (1994)}.

\end{thebibliography}\endgroup
\onecolumn\newpage
\appendix

\begin{center}
    {\Huge Supplementary Material}
\end{center}

\section{Additional Notations}
For the proofs given in this Supplementary Material, we use the following notations for the hyperbolic trigonometric functions for brevity:
\begin{equation}
    s_r := \sinh(r); \hspace{5mm} c_r := \cosh(r); \hspace{5mm} t_r := \tanh(r).
\end{equation}
\section{Effective descriptions of bosonic quantum states with exponential-energy operator bound}\label{appsec:efficient_description}
In this section, we find the boson number cutoff for bosonic quantum states with bounded exponential-energy operator expectation values in terms of the desired precision, as formalized in the following Lemma:
\begin{lem}\label{applem:efficient_description}
Given a quantum state $\ket{\psi} \in \mathcal S_{s,E}$, the normalized Fock state truncation of $\ket{\psi}$ up to total boson number $k$, $\ket{\psi_k} =\frac{\Pi_k \ket{\psi}}{\sqrt{\Tr[\Pi_k \ket{\psi}\bra{\psi}]}}$, is such that
    \begin{equation}
        \frac12 \|\ket{\psi}\bra{\psi} - \ket{\psi_k}\bra{\psi_k}\|_1 \leq \sqrt{\frac{s^E}{s^k}}.
    \end{equation}
    Therefore, choosing
    \begin{equation}\label{appeq:boson_number_cutoff}
        k =E + 2\log_s(1/\epsilon),
    \end{equation} 
    we get that
    \begin{equation}
        \frac12 \|\ket{\psi}\bra{\psi} - \ket{\psi_k}\bra{\psi_k}\|_1 \leq \epsilon.
    \end{equation}
\end{lem}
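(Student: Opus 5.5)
The plan is to reduce the trace distance to an infidelity, bound that infidelity by the weight of $\ket{\psi}$ above the cutoff, and control this weight with a Markov-type inequality for the operator $s^{\hat N}$.

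First, since $\ket{\psi}$ and $\ket{\psi_k}$ are pure, the trace distance is
\begin{equation}
\frac12 \|\ket{\psi}\!\bra{\psi} - \ket{\psi_k}\!\bra{\psi_k}\|_1 = \sqrt{1-|\braket{\psi_k|\psi}|^2}.
\end{equation}
By definition of the normalized truncation, $\braket{\psi_k|\psi} = \bra{\psi}\Pi_k\ket{\psi}/\sqrt{\bra{\psi}\Pi_k\ket{\psi}}$, so $|\braket{\psi_k|\psi}|^2 = \bra{\psi}\Pi_k\ket{\psi}$. The infidelity is therefore exactly the tail weight
\begin{equation}
1-|\braket{\psi_k|\psi}|^2 = \sum_{|\bm n|>k}|\psi_{\bm n}|^2 .
\end{equation}

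Next, I bound the tail with the exponential-energy condition. Since $s>1$, every $\bm n$ with $|\bm n|>k$ satisfies $s^{|\bm n|}/s^k\ge 1$. Using this together with the definition of $\mathcal S_{s,E}$ in Eq.~\eqref{eq:set_bounded},
\begin{equation}
\sum_{|\bm n|>k}|\psi_{\bm n}|^2 \le s^{-k}\sum_{|\bm n|>k} s^{|\bm n|}|\psi_{\bm n}|^2 \le s^{-k}\bra{\psi}s^{\hat N}\ket{\psi}\le \frac{s^E}{s^k}.
\end{equation}
Taking square roots gives the first claim. Substituting $k=E+2\log_s(1/\epsilon)$ gives $s^E/s^k=\epsilon^2$, so the trace distance is at most $\epsilon$.

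The argument has no real obstacle. One bookkeeping point: $k$ should be an integer, so in practice one rounds $E+2\log_s(1/\epsilon)$ up, which only makes the bound smaller.

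The one case needing care is $\Pi_k\ket{\psi}=0$, where the normalization is undefined. This cannot occur in the regime of interest: whenever $s^E/s^k<1$, the tail bound forces $\bra{\psi}\Pi_k\ket{\psi}>0$. When $s^E/s^k\ge 1$, the bound is trivial anyway, since the trace distance of two states is at most $1$.
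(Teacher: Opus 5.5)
Your proof is correct and follows the paper's own argument. Both write the pure-state trace distance as $\sqrt{1-\bra{\psi}\Pi_k\ket{\psi}}$, identify the infidelity with the Fock tail $\sum_{|\bm n|>k}|\psi_{\bm n}|^2$, bound that tail by $s^{-k}\bra{\psi}s^{\hat N}\ket{\psi}\le s^{E-k}$, and then substitute the cutoff; your remarks on rounding $k$ and on the degenerate case $\Pi_k\ket{\psi}=0$ are valid details that the paper leaves implicit.
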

\begin{proof}
Writing
\begin{eqnarray}
    \ket{\psi} &=& \sum_{\bm n} a_{\bm n} \ket{\bm{n}} ,\nonumber \\
    \ket{\tilde \psi_k} &=& \sum_{|\bm{n}|\leq k} a_{\bm n} \ket{\bm{n}} .
\end{eqnarray}
Further, noting that 
\begin{equation}
    \braket{\tilde\psi_k|\tilde \psi_k} = \sum_{|\bm n |\leq k} |a_{\bm n}|^2 = \braket{\tilde\psi_k|\psi},
\end{equation}
we write
\begin{equation}
    \ket{\psi_k} = \frac{1}{\sqrt{\braket{\psi|\tilde \psi_k}}} \sum_{|\bm{n}|\leq k} a_{\bm n} \ket{\bm{n}} = \frac{\ket{\tilde \psi_k}}{\sqrt{\braket{\psi|\tilde \psi_k}}}.
\end{equation}
Also, note that $\braket{\tilde\psi_k|\psi}$ is real and positive. Now, the trace distance between $\ket{\psi}$ and $\ket{\psi_k}$ can be expressed as
\begin{eqnarray}\label{appeq:eff_descr_a}
     \frac12 \|\ket{\psi}\bra{\psi} - \ket{\psi_k}\bra{\psi_k}\|_1 &=& \sqrt{1 - \frac{\braket{\tilde \psi_k|\psi}^2}{\braket{\tilde \psi_k|\psi}}}\nonumber \\
     &=& \sqrt{1-\braket{\tilde\psi_k|\psi}} .
 \end{eqnarray}
 where in the second line, we have used the fact that $\braket{\psi_k|\psi}$ is real and positive and for the inequality in the third line, we have used the fact that $\braket{\psi_k|\psi} \leq 1$. Now,
 \begin{eqnarray}
     1-\braket{\tilde \psi_k|\psi} &=& \sum_{|\bm n| > k} |\psi_{\bm{n}}|^2 \nonumber \\
     &= & \sum_{|\bm{n}| < k} |\psi_{\bm{n}}|^2 s^{|\bm n|} s^{- |\bm n|} \nonumber \\
     &<& \frac{1}{s^k} \sum_{|\bm{n}|<k} |\psi_{\bm{n}}|^2 s^{|\bm{n}|} \nonumber \\
     &<& \frac{s^E}{s^k},
 \end{eqnarray}
 where for the inequality in the third line, we have used $s^{- |\bm n|} < 1/s^k$ for $s > 1$ and $|\bm n| > k$ and for the inequality in the fourth line, we have used $\sum_{|\bm{n}|>k} |\psi_{\bm{n}}|^2 s^{|\bm n|} \leq \sum_{\bm n} |\psi_{\bm{n}}|^2 s^{|\bm n|} < s^E$. Putting this in Eq.\ref{appeq:eff_descr_a} gives 
\begin{equation}
     \frac12 \|\ket{\psi}\bra{\psi} - \ket{\psi_k}\bra{\psi_k}\|_1 < \sqrt{\frac{s^E}{s^k}}.
\end{equation}
Therefore, if we want
\begin{eqnarray}
    \sqrt{\frac{s^E}{s^k}} &=& \epsilon, \nonumber\\
\frac{s^E}{s^k} &=& \epsilon^2, \nonumber \\
\frac{s^E}{\epsilon^2} &=& s^k.
\end{eqnarray}
Taking log on both sides, we get the required Fock state cutoff as
\begin{equation}
    k = \frac{\log(s^E)+ \log(1/\epsilon^2)}{\log(s)} = E+ \log_s(1/\epsilon^2).
\end{equation}
\end{proof}
\section{Bound on exponential-energy operator over Gaussian and energy-preserving non-Gaussian unitary gates}\label{appsec:exp_energy_bound energy simplified}
\noindent We first restate Theorem \ref{theo:exp_energy_bound_simplified}:
\begin{theo}\label{apptheo:exp_energy_bound_simplified}
  Given an $m$-mode quantum state $\ket{\psi} \in \mathcal C_L$ (Definition \ref{defi:G_nG}), with its associated Gaussian unitary gates $\hat G_i$ characterized by displacement vector $\bm{\alpha}_i$ and squeezing vector $\bm{r}_i$, $\forall i \in (1,\dots,L)$, then defining
    \begin{eqnarray}
        |\alpha|&:=& \max_{i,k} |\alpha_{ik}|,\hspace{5mm}
        r := \max_{i,k} r_{ik},
    \end{eqnarray} 
    $\forall i \in (1,\dots,L),k \in (1,\dots,m)$, if we choose $t_L > 1$ according to the scheme
    \begin{equation}\label{appeq:choice_t}
        t_i =1 +  \frac{t_{i-1} - 1}{e^{2r} + t_{i-1}}, \forall i \in (1,\dots,L),
    \end{equation}
    with $t_0 = 1 + 2/e^{2r}$, then it follows that
    \begin{eqnarray}\label{appeq:exp_energy_bound}
        \bra{\psi}t_L^{\hat N}\ket{\psi} &<& e^{mL^2(|\alpha|^2 + 28 r + 9)} \nonumber \\&<&  t_L^{mL^2 e^{2(2r+1)L} (|\alpha|^2 + 28 r +9)}.
    \end{eqnarray}
\end{theo}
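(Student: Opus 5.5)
We propose to prove Theorem~\ref{apptheo:exp_energy_bound_simplified} by tracking, backwards through the circuit, how the expectation value of an exponential-energy operator changes under each gate. Throughout, we use the Bloch--Messiah decomposition of Eq.~\eqref{eq:Bloch_Messiah}, $\hat G_i=\hat V_i\hat D(\bm\alpha_i)\hat S(\bm r_i)\hat U_i$.

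\textbf{Commuting gates.} Passive gates $\hat U_i,\hat V_i$ and the energy-preserving gates $\hat\chi_i$ all commute with $\hat N$. They therefore leave $\bra{\phi}x^{\hat N}\ket{\phi}$ invariant for every $x$. So only the displacements and squeezers need to be controlled.

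\textbf{Single-layer inequality.} For each layer we aim for a bound of the form
\[
\bra{\phi}\hat G^\dagger t^{\hat N}\hat G\ket{\phi}\le C(t,s,r,|\alpha|)^m\,\bra{\phi}s^{\hat N}\ket{\phi},
\]
with $s$ strictly larger than $t$. Two ingredients give this.
\begin{itemize}
\item For the displacement we invoke \cite[Lemma~13]{cerf2025}. Up to its exact form, it bounds $\hat D^\dagger(\bm\alpha)t^{\hat N}\hat D(\bm\alpha)$ by a factor $e^{c|\alpha|^2(\cdot)}$ per mode times $t'^{\hat N}$, for a slightly larger $t'$. We choose the intermediate parameter so that $t'-1$ is a constant multiple of $t-1$.
\item For the squeezer we use the squeezing lemma stated in the Methods, which maps $t$ to $s$ with factor $g(r,s,t)^m$. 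Its admissibility condition $t<1+\frac{s-1}{e^{2r}+s_re^r(s-1)}$ is the source of the recursion in Eq.~\eqref{appeq:choice_t}. Indeed, with $e^{2r}+s_re^r(s-1)\le e^{2r}+s$ for $s$ near $1$, inverting gives exactly $t_i=1+\frac{t_{i-1}-1}{e^{2r}+t_{i-1}}$, where $t_{i-1}$ plays the role of $s$.
\end{itemize}

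\textbf{Chaining the layers.} We chain these steps from the last layer, with parameter $t_L$, back to the vacuum, with parameters $t_{L-1},\dots,t_0$. At the vacuum, $\bra{0}t_0^{\hat N}\ket{0}=1$. The starting value $t_0=1+2/e^{2r}$ is chosen so that $t_0<1/\tanh r$, keeping the condition $s<1/\tanh r$ satisfied. Since $t_i$ decreases, every later step also satisfies it. The result is
\[
\bra{\psi}t_L^{\hat N}\ket{\psi}\le\prod_{i=1}^L\bigl[g(r,t_{i-1},\tilde t_i)\,D_i\bigr]^m ,
\]
where $D_i$ is the displacement factor of layer $i$.

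\textbf{Main obstacle: explicit constants.} The hardest step is bounding each per-layer factor explicitly, to obtain $\log\le L(|\alpha|^2+28r+9)$ per layer and hence the stated $mL^2(\cdots)$.
\begin{itemize}
\item The displacement factor grows like $\exp(|\alpha|^2(t-1)/(\ldots))$. Because the $t_i$ shrink geometrically, the attenuation per layer is at most $e^{2r}+2$, so $(t_{i-1}-1)/(t_i-1)\le e^{2r}+t_{i-1}$. The remaining factor we would handle crudely, bounding the per-layer contribution by $L$ times constants; this is where the $L^2$ comes from.
\item For $g$, we substitute $t=t_i$ and $s=t_{i-1}$ and show the denominator $(s-1)-(t-1)(e^{2r}+s_re^r(s-1))$ is at least a fixed fraction of $(s-1)^2$. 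We then bound $\log g$ by $O(r)+O(\log\frac1{t_i-1})$. We expect to use $\log\frac1{t_i-1}\le i(2r+1)+O(1)$ to absorb this into $L(28r+9)$.
\end{itemize}

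\textbf{Second inequality.} This only requires showing $t_L^{x}\ge e^{x e^{-2(2r+1)L}}$ for the relevant exponent, i.e.\ $\ln t_L\ge e^{-2(2r+1)L}$. We prove it by induction from the recursion, using $t_i-1\ge (t_{i-1}-1)/(e^{2r}+3)$ and $\ln(1+x)\ge x/2$ for small $x$.
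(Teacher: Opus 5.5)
Your overall route is the paper's. Passive and energy-preserving gates drop out because they commute with $\hat N$. Each layer is handled by the squeezing lemma followed by \cite[Lemma~13]{cerf2025}, and the chain is run back to the vacuum. The per-layer exponent is bounded crudely by $L(\cdots)$ using $t_i-1\ge (t_0-1)/(e^{2r}+t_0)^i$, and the second inequality comes from $\ln(1+x)\ge x/2$.

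What does not hold up is your explanation of the recursion, and that is exactly the step that fixes the constants. The recursion is not obtained by inverting the squeezing admissibility condition with $t_i$ in the role of $t$. In the paper it is the composition of two explicit choices: $\tilde t_i=1+\frac{t_{i-1}-1}{e^{2r}+1}$ for the squeezer, and $t_i=1+\frac{\tilde t_i-1}{\tilde t_i}$ for the displacement. Together these give $t_i-1=\frac{t_{i-1}-1}{e^{2r}+t_{i-1}}$. So $g$ must be evaluated at $(r,t_{i-1},\tilde t_i)$, not at $t=t_i$ as in your last bullet. Running the squeezing lemma all the way down to $t_i$ leaves no room for the displacement step, which needs $t_i<\tilde t_i<f(r,t_{i-1})$. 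With the paper's choice, $\tilde t_i<f(r,t_{i-1})$ is equivalent to $s_re^r(t_{i-1}-1)<1$, i.e.\ to $t_{i-1}<1/\tanh r$. One also gets $(e^{-r}-s_r(t_{i-1}-1))^{-2}\le e^{6r}$ from $t_{i-1}\le t_0$.

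This choice also changes your account of the $L^2$. Since $\tilde t_i=1/(2-t_i)$, the exponent $|\alpha|^2\frac{(\tilde t_i-1)(t_i-1)}{\tilde t_i-t_i}$ in \cite[Lemma~13]{cerf2025} is exactly $|\alpha|^2$, so the displacement contribution does not grow across layers. The $i$-dependence of the per-layer factor comes only from the prefactors $\bigl(\frac{\tilde t_i}{\tilde t_i-1}\bigr)^2\frac{t_{i-1}}{t_{i-1}-1}$, whose logarithms are $O\bigl(i(r+1)\bigr)$. The paper then multiplies the whole per-layer exponent, $|\alpha|^2$ included, by $L$ and sums over layers.

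With your unspecified ``constant multiple'' $\tilde t_i-1=c\,(t_i-1)$, the coefficient of $|\alpha|^2$ is $\frac{c}{c-1}(t_i-1)$. This is at most $1$ only if $c\ge 1/(2-t_i)$, i.e.\ $\tilde t_i-1\ge\frac{t_{i-1}-1}{e^{2r}+1}$. Squeezing admissibility forces $c<\frac{f(r,t_{i-1})-1}{t_i-1}$, and at the first layer both ends of this window are $1+O(e^{-4r})$ for large $r$. Outside the window your per-layer bound either violates the squeezing condition or carries a coefficient larger than $1$ on $|\alpha|^2$. In the latter case, for $L=1$ and large $|\alpha|$, it no longer implies $e^{m(|\alpha|^2+28r+9)}$.

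A few minor points, all absorbed by the slack in $28r+9$:
\begin{itemize}
\item $e^{2r}+t_0$ equals $4>e^{2r}+2$ at $r=0$.
\item $\log\frac{1}{t_i-1}$ carries an extra additive $2r$.
\item You still need the monotonicity of the per-layer factors in $r_i$ and $|\alpha_i|$ to replace each layer's parameters by the global maxima.
\end{itemize}
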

\noindent For completeness, we also recall the definition of $\mathcal C_L$:
\begin{defi} \label{appdefi:G_nG}
    We define the set of $m$-mode states $\mathcal C_L$ as follows:
    \begin{equation}
        \mathcal C_L  = \{\ket{\psi}: \ket{\psi} = \hat G_L \hat{\chi}_L \hat G_{L-1} \hat{\chi}_{L-1}\dots\hat{G}_1 \ket{0}^{\otimes m}\},
    \end{equation}
    where $\hat \chi_1,\dots,\hat \chi_L$ are energy-preserving non-Gaussian unitary gates, and $\hat G_1,\dots,\hat G_L$ are arbitrary Gaussian unitary gates such that
    \begin{equation}
        \hat G_i = \hat V_i \hat D(\bm \alpha_i)\hat S(\bm{r}_i)\hat U_i
    \end{equation}
    where $\bm{\alpha}_i = (\alpha_{i1},\alpha_{i2},\dots,\alpha_{im}) \in \mathbb{C}^m, \bm{r}_i = (r_{i1},r_{i2},\dots,r_{im}) \in \mathbb{R}_{+}^m, \forall i \in \{1,\dots,L\}$, whereas $\hat U_i$ and $\hat V_i$ are energy-preserving Gaussian unitary gates.
\end{defi}
\noindent To prove Theorem \ref{theo:exp_energy_bound_simplified}, we first prove results on the growth of the exponetiated energy operator under the squeezing and displacement operator, and prove an analogous result to Theorem \ref{theo:exp_energy_bound_simplified} for a circuit composed of $L$ layers of single-mode squeezing and displacement operators.
\subsection{Exponential-energy operator under squeezing and displacement operators}\label{appsec:squeezing_increase}
\noindent To quantify the growth of the exponential-energy operator under the squeezing gate, we have the following Lemma:
\begin{lem}[Exponential-energy operator under multimode squeezing]\label{applem:squeeze_exponent} Given an $m$-mode quantum state $\ket{\psi}$ evolving under the $m$-mode squeezing operator $\hat S(\bm{r})$, where $\bm{r} = (r_1,r_2,\dots,r_m) \in \R_+^m \forall i \in (1,\dots,m)$, and the number operator $\hat N$, then it follows that
\begin{align}
    \bra{\psi}\hat S^\dagger(\bm{r}) t^{\hat{N}}\hat S(\bm{r})\ket{\psi}< g(r,s,t)^m \bra{\psi}s^{\hat N}\ket{\psi},
\end{align}
where
\begin{equation}\label{appeq:g}
    g(r,s,t):= s\times\left(\frac{e^{2r} + s_{r} e^{r}(s-1)}{e^{-2r} - s_{r} e^{-r}(s-1)}\right)^{1/2 }\times \frac{1}{(s-1) - (t-1)(e^{2r} + s_{r} e^{r}(s-1))},
\end{equation}
and $r= \max_i r_i ,\forall i \in (1,2,\dots,m)$, provided that
\begin{equation}\label{appeq:f}
    1 < t < f(r,s):= 1 + \frac{s-1}{e^{2r} + s_{r} e^{r} (s-1)} < s < \frac{1}{t_r}.
\end{equation}
Therefore, choosing
\begin{equation}
    t = 1 + \frac{s-1}{e^{2r} + 1},
\end{equation}
we get that
\begin{equation}
    \bra{\psi}\hat S^\dagger(\bm{r}) t^{\hat{N}}\hat S(\bm{r})\ket{\psi}< \left(\frac{s}{s-1}\times\frac{(e^{2r}+1)^2}{(e^{-r} - s_r(s-1))^2}\right)\bra{\psi}s^{\hat N}\ket{\psi}.
\end{equation}
\end{lem}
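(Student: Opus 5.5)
The plan is to reduce the statement to a single-mode operator inequality and then treat the single-mode problem with Gaussian (Bargmann/coherent-state) calculus.

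\textbf{Step 1: reduction to one mode.} Since $\hat S(\bm r)=\bigotimes_i \hat S(r_i)$ and $t^{\hat N}=\bigotimes_i t^{\hat N_i}$, we have $\hat S^\dagger(\bm r)t^{\hat N}\hat S(\bm r)=\bigotimes_i A_i$ with $A_i:=\hat S^\dagger(r_i)t^{\hat N_i}\hat S(r_i)\ge0$, and similarly $s^{\hat N}=\bigotimes_i B_i$ with $B_i:=s^{\hat N_i}\ge0$. Suppose we have the single-mode inequality $A_i\le g(r_i,s,t)B_i$ on the domain of $B_i^{1/2}$. Then, because $X\le Y$ and $C\ge0$ acting on another factor give $X\otimes C\le Y\otimes C$, iterating over the modes yields $\bigotimes_iA_i\le\prod_i g(r_i,s,t)\,\bigotimes_iB_i$. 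Taking the expectation in $\ket{\psi}$ then gives the claim, once we check that $g$ is nondecreasing in $r$ on the admissible region, so that every $r_i$ may be replaced by $r=\max_ir_i$. This monotonicity should follow directly from the explicit formula \eqref{appeq:g}. It is also why the constraints in \eqref{appeq:f} are stated with $r$: they then hold for all $r_i\le r$.

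\textbf{Step 2: the single-mode bound.} The one-mode inequality is equivalent to
\begin{equation}
\bigl\|s^{-\hat N/2}\hat S^\dagger(r)t^{\hat N}\hat S(r)s^{-\hat N/2}\bigr\|_{\infty}\le g(r,s,t).
\end{equation}
The key observation is that every factor is the exponential of a quadratic form in $\hat a,\hat a^\dagger$, so the whole operator is an (unnormalized) Gaussian operator. I would write it in the Bargmann representation, using the explicit squeezing kernel
\begin{equation}
\langle z|\hat S(r)|w\rangle\propto (\cosh r)^{-1/2}\exp\bigl(\tfrac{t_r}{2}(\bar z^2-w^2)+\bar z w/\cosh r\bigr)
\end{equation}
(up to sign conventions) together with $\langle z|t^{\hat N}|w\rangle=e^{t\bar z w}$. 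Composing these gives a kernel $K(z,w)=c\,\exp(Q(\bar z,w))$, where the prefactor $c$ and the quadratic form $Q$ are obtained by Gaussian integration.

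For the resulting operator I would bound the norm with a Schur test, using a Gaussian weight $e^{-\lambda|z|^2}$ as test function, and optimize over $\lambda$. The two Gaussian integrals converge exactly when $s<1/t_r$ (this is the condition for $s^{-\hat N/2}$ to tame the $\hat a^{\dagger2}$ terms produced by squeezing) and when $t<f(r,s)$.

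\textbf{Expected form of the result, and the main obstacle.} I expect the square-root factor in $g$ to come out as the determinant of the complex Gaussian integral. I expect the factor $1/\bigl((s-1)-(t-1)(e^{2r}+s_re^r(s-1))\bigr)$ to come from the remaining one-dimensional integral, equivalently a geometric series, in the radial variable. The extra factor $s$ should come from absorbing the normalisations of the test function. The main obstacle is this Gaussian bookkeeping: getting the $2\times2$ complex quadratic form right, and checking that the Schur test is not lossy beyond these factors. If the Schur test turns out to be too lossy, my fallback is to work directly with Fock matrix elements $\langle m|\hat S^\dagger t^{\hat N}\hat S|n\rangle$, which are nonzero only when $m\equiv n \pmod 2$ and are given by generating functions, and to run the Schur test with weights $s^{n/2}$.

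\textbf{Step 3: the explicit choice of $t$.} Finally I would substitute $t=1+(s-1)/(e^{2r}+1)$. One checks that this lies strictly below $f(r,s)$, using $s_re^r(s-1)<1$ as implied by $s<1/t_r$. The denominator then becomes $(s-1)\bigl(1-s_re^r(s-1)\bigr)/(e^{2r}+1)$. Combining this with the square-root factor, via $e^{-2r}-s_re^{-r}(s-1)=e^{-r}\bigl(e^{-r}-s_r(s-1)\bigr)$, and bounding the numerator by $e^{2r}+1$, gives the stated simplified bound.
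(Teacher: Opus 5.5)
Your Steps 1 and 3 are fine. Tensorizing single-mode operator inequalities works, and so does passing to $r=\max_i r_i$, since $g$ increases and the constraints tighten with $r$. Your algebra for $t=1+(s-1)/(e^{2r}+1)$ also checks out, including $s_re^r(s-1)<1$ from $s<1/t_r$.

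The gap is Step 2, which is the entire content of the lemma. You never prove $\|s^{-\hat N/2}\hat S^\dagger(r)t^{\hat N}\hat S(r)s^{-\hat N/2}\|_\infty\le g(r,s,t)$. You only predict that a Bargmann-kernel Schur test will return $g$, and that prediction is wrong.

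Already at $r=0$ the operator is $(t/s)^{\hat N}$, whose norm is $1$. Meanwhile $g(0,s,t)=s/(s-t)=\Tr\,(t/s)^{\hat N}$ is the trace of that operator rather than its norm. This suggests $g$ comes from a Hilbert--Schmidt-type estimate, not from the computation you describe.

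More importantly, the hypothesis $s<1/t_r$ and the factor $\bigl(\frac{e^{2r}+s_re^r(s-1)}{e^{-2r}-s_re^{-r}(s-1)}\bigr)^{1/2}$ are not intrinsic to the problem. That factor blows up as $s\to 1/t_r$, yet the operator is bounded for every $s>1$ once $1<t<f(r,s)$. The Gaussian integrals in your scheme converge under $t<f(r,s)$ alone: this covers both the kernel composition (which needs $t<1/t_r$, implied by $t<f(r,s)$) and the Schur integrals with Gaussian weights. They give a constant that stays finite at $s=1/t_r$. So a correct execution of your plan produces a different explicit expression. You would still owe an inequality showing it is at most $g(r,s,t)$ on the whole admissible region. ``Checking that the Schur test is not lossy beyond these factors'' aims at a target that cannot be hit.

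The missing idea is the one that generates $g$. The paper applies Cauchy--Schwarz to $a_{\bm j}=\sum_{\bm i}\psi_{\bm i}\bra{\bm j}\hat S(\bm r)\ket{\bm i}$ with weights $s^{\pm|\bm i|/2}$. This amounts to bounding your operator norm by the squared Hilbert--Schmidt norm of $t^{\hat N/2}\hat S(\bm r)s^{-\hat N/2}$. That quantity factorizes over modes with no operator-inequality bookkeeping. In one mode it equals $\sum_j t^j\sum_i s^{-i}|\bra{j}\hat S(r)\ket{i}|^2=\frac{s}{s-1}\Tr[t^{\hat N}\hat S(r)\rho_{\bar n}\hat S^\dagger(r)]$, with $\rho_{\bar n}$ thermal and $\bar n=1/(s-1)$.

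Set $M=\bar ne^{2r}+s_re^r$ and $M'=\bar ne^{-2r}-s_re^{-r}$. The squeezed thermal state has a Gaussian $P$-function $\propto\exp(-\alpha_x^2/M'-\alpha_p^2/M)$, which is a regular function iff $M'>0$, i.e.\ iff $s<1/t_r$. This $P$-function is bounded pointwise by $\sqrt{M/M'}$ times the $P$-function of the thermal state $\rho_M$. Since Husimi functions of Fock states are nonnegative, this gives $\bra{j}\hat S\rho_{\bar n}\hat S^\dagger\ket{j}\le\sqrt{M/M'}\,M^j/(1+M)^{j+1}$. The geometric series in $t$ converges iff $(t-1)M<1$, i.e.\ $t<f(r,s)$. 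Substituting $\bar n=1/(s-1)$ gives exactly $g=\frac{s}{s-1}\sqrt{M/M'}\,\frac{1}{1-(t-1)M}$.

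If you want to keep a Gaussian-calculus flavour, replace the Schur test by this Hilbert--Schmidt bound and evaluate the trace exactly: $\Tr[t^{\hat N}\hat S\rho_{\bar n}\hat S^\dagger]=[(1-(t-1)M)(1-(t-1)M')]^{-1/2}$. This is at most $\sqrt{M/M'}/(1-(t-1)M)$ because $0<M'\le M$. Either way the single-mode estimate, and hence the lemma, follows.
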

\begin{proof}
    Expanding $\ket{\psi}$ in the Fock basis
    \begin{equation}
        \ket{\psi} = \sum_{\bm{i} \geq \bm {0}} \psi_{\bm{i}} \ket{\bm{i}},
    \end{equation}
    where $\bm{i} = (i_1,i_2,\dots,i_m)$, then we can write
    \begin{equation}
        \hat S(\bm{r}) \ket{\psi} = \sum_{\bm{j} \geq \bm{0}} a_{\bm{j}} \ket{\bm{j}},
    \end{equation}
    where $a_{\bm{j}} = \sum_{\bm{i} \geq 0} \psi_{\bm{i}} \bra{\bm{j}}\hat S(\bm{r})\ket{\bm{i}}$. Therefore,
    \begin{eqnarray}
        \bra{\psi}\hat S^\dagger(\bm{r}) t^{\hat{N}}\hat S(\bm{r})\ket{\psi} &=& \sum_{\bm{j}\geq \bm{0}} t^{|\bm{j}|}|a_{\bm{j}}|^2 \nonumber \\
        &=& \sum_{\bm{j}\geq \bm{0}} t^{|\bm{j}|}\left| \sum_{\bm{i} \geq 0} \psi_{\bm{i}} \bra{\bm{j}}\hat S(\bm{r})\ket{\bm{i}}\right|^2 \nonumber \\
        &=& \sum_{\bm{j}\geq \bm{0}}t^{|\bm{j}|} \left| \sum_{\bm{i} \geq \bm{0}} \psi_{\bm{i}} s^{|\bm{i}|/2} s^{-|\bm{i}|/2} \bra{\bm{j}}\hat S(\bm{r})\ket{\bm{i}}\right|^2 \nonumber \\
        &\leq& \sum_{\bm{j} \geq \bm{0}} t^{|\bm{j}|} \sum_{\bm{i}\geq \bm{0}} \frac{|\bra{\bm{j}}\hat S(\bm{r})\ket{\bm{i}}|^2}{s^{|\bm{i}|}} \sum_{\bm{i} \geq \bm{0}} |\psi_{\bm{i}}|^2 s^{|\bm{i}|} \nonumber \\
        &=& \bra{\psi}s^{\hat N}\ket{\psi} \sum_{\bm{j} \geq \bm{0}} t^{|\bm{j}|} \sum_{\bm{i}\geq \bm{0}} \frac{|\bra{\bm{j}}\hat S(\bm{r})\ket{\bm{i}}|^2}{s^{|\bm{i}|}} \nonumber \\
        &=& \bra{\psi}s^{\hat N}\ket{\psi} \prod_{k=1}^m \sum_{j_k \geq 0} t^{j_k} \sum_{i_k \geq \bm{0}} \frac{|\bra{j_k}\hat S(r_k)\ket{i_k}|^2}{s^{i_k}},
    \end{eqnarray}
    where $|\bm {j}| = \sum_{k =1}^m j_k, |\bm {i}| = \sum_{k =1}^m i_k$ and the inequality in the fourth line follows from the Cauchy-Schwartz inequality and for the last line, we have used the fact that $|\bra{\bm{j}}\hat S(\bm{r})\ket{\bm{i}}|^2 = \prod_{k=1}^m |\bra{j_k}\hat S(r_k)\ket{i_k}|^2$. We then have the following result:
    
    \begin{lem}\label{applem:sum_squeezing}
     Given $t,s > 1$, with
\begin{equation}
   t < f(r,s) < s < \frac{1}{t_r},
   \end{equation}
(assuming $r \geq 0$), we have that
    \begin{eqnarray}
        \sum_{j \geq 0} t^j \sum_{i \geq 0} (1/s)^i |\bra{j}S(r)\ket{i}|^2 &\leq& g(r,s,t).
    \end{eqnarray}
    \end{lem}
\noindent The proof of Lemma \ref{applem:sum_squeezing} is given at the end of the section. We apply Lemma to $\sum_{j_k \geq 0} t^{j_k} \sum_{i_k \geq \bm{0}} \frac{|\bra{j_k}\hat S(r_k)\ket{i_k}|^2}{s^{i_k}},$ for all $ k \in (1,\dots, m)$ and note that both $1/t_r$ and $f(r,s)$ (for $s > 1$) are decreasing functions in r, the condition on $t$ and $s$ becomes
\begin{equation}\label{appeq:t_s_cond_squeezing}
   1 < t < f(r,s) < s < \frac{1}{t_{r}}.
\end{equation}
Finally we note that for $t$ and $s$ satisfying Eq.~\ref{appeq:t_s_cond_squeezing}, $g(r,s,t)$ is an increasing function of $r$, and we finally get
\begin{align}
    \bra{\psi}\hat S^\dagger(\bm{r}) t^{\hat{N}}\hat S(\bm{r})\ket{\psi}< g(r,s,t)^m \bra{\psi}s^{\hat N}\ket{\psi}.
\end{align}
Choosing
\begin{equation}
    t = 1 + \frac{s-1}{e^{2r}+1},
\end{equation}
we get that
\begin{eqnarray}
    \bra{\psi}\hat S^\dagger(\bm{r}) t^{\hat{N}}\hat S(\bm{r})\ket{\psi} &<& \left(\frac{s}{s-1} \left(\frac{e^{2r} + s_r e^r(s-1)}{e^{-2r} - s_r e^{-r}(s-1)}\right)^{1/2} \frac{e^{2r}+1}{e^{2r}(e^{-2r} - s_r e^{-r}(s-1))}\right)^m \bra{\psi}s^{\hat N}\ket{\psi} \nonumber \\
    &=& \left(\frac{s(e^{2r}+1)(e^{2r} + s_re^r(s-1))^{1/2}(e^{-2r} - s_r e^{-r}(s-1))^{1/2}}{e^{2r}(s-1)(e^{-2r} - s_r e^{-r}(s-1))^2} \right)^m \bra{\psi}s^{\hat N}\ket{\psi} \nonumber \\
    &< & \left(\frac{s(e^{2r}+1)^2}{e^{2r}(s-1)(e^{-2r} - s_r e^{-r}(s-1))^2} \right)^m \bra{\psi}s^{\hat N}\ket{\psi} \nonumber \\
    &=& \left(\frac{s}{s-1} \times \frac{(e^{2r}+1)^2}{(e^{-r} - s_r(s-1))}^2 \right)^m \bra{\psi}s^{\hat N}\ket{\psi},
\end{eqnarray}
where in the third line, we have used the inequality
\begin{equation}
    e^{-2r} - s_r e^{-r} (s-1) < e^{2r} + s_r e^{r} (s-1) < e^{2r} + 1
\end{equation}
for $s < 1/t_r$.
\end{proof}
\noindent Also, from \cite[Lemma 13]{cerf2025}, for the growth of the exponential-energy operator under the displacement gate, it follows that
\begin{lem}[Exponential-energy operator under multimode displacement operator]\label{applem:displace_exponent} Given an $m$-mode quantum state $\ket{\psi}$ evolving under the $m$-mode squeezing operator $\hat D(\bm{\alpha})$, where $\bm{\alpha} = (\alpha_1,\alpha_2,\dots,\alpha_m) \in \C^m$ and the number operator $\hat N$, then it follows that
    \begin{equation}
        \bra{\psi}\hat D^\dagger (\bm{\alpha}) t^{\hat N} \hat D(\bm{\alpha})\ket{\psi} < \frac{s^m}{(s-t)^m} \exp\left(m|\alpha|^2\frac{(s-1)(t-1)}{(s-t)}\right) \bra{\psi}s^{\hat n}\ket{\psi},
    \end{equation}
    where $|\alpha| = \max_i |\alpha|_i, \forall i \in (1,\dots,m)$provided that $1 <t < s$. Therefore, choosing 
    \begin{equation}
        t = 1 + \frac{s-1}{s},
    \end{equation}
    we get
    \begin{equation}
        \bra{\psi}\hat D^\dagger (\bm{\alpha}) t^{\hat N} \hat D(\bm{\alpha})\ket{\psi} < \left(\left(\frac{s}{s-1}\right)^2 e^{|\alpha|^2}\right)^m \bra{\psi}s^{\hat N}\ket{\psi}.
    \end{equation}
\end{lem}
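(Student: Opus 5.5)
The plan mirrors the proof of Lemma~\ref{applem:squeeze_exponent}: use Cauchy--Schwarz to peel off $\bra{\psi}s^{\hat N}\ket{\psi}$, then compute the remaining state-independent single-mode kernel in closed form (rather than bounding it, as was needed for squeezing).

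\textbf{Step 1: reduction to a single-mode kernel.} I expand $\ket{\psi}=\sum_{\bm i}\psi_{\bm i}\ket{\bm i}$ and write $\bra{\psi}\hat D^\dagger(\bm\alpha)t^{\hat N}\hat D(\bm\alpha)\ket{\psi}=\sum_{\bm j}t^{|\bm j|}\bigl|\sum_{\bm i}\psi_{\bm i}s^{|\bm i|/2}s^{-|\bm i|/2}\bra{\bm j}\hat D(\bm\alpha)\ket{\bm i}\bigr|^2$. Cauchy--Schwarz in $\bm i$ then gives
\begin{equation}
\bra{\psi}\hat D^\dagger(\bm\alpha)t^{\hat N}\hat D(\bm\alpha)\ket{\psi}\le \bra{\psi}s^{\hat N}\ket{\psi}\prod_{k=1}^m K(\alpha_k),
\end{equation}
where
\begin{equation}
K(\alpha):=\sum_{i,j\ge0}t^{j}s^{-i}|\bra{j}\hat D(\alpha)\ket{i}|^2=\Tr\bigl[t^{\hat N}\hat D(\alpha)s^{-\hat N}\hat D^\dagger(\alpha)\bigr].
\end{equation}
The product over modes uses the tensor-product structure of $\hat D(\bm\alpha)$, $t^{\hat N}$ and $s^{-\hat N}$.

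\textbf{Step 2: evaluating $K(\alpha)$ exactly.} I recognise $s^{-\hat N}=\frac{s}{s-1}\rho_{\bar n}$, where $\rho_{\bar n}$ is the thermal state with mean photon number $\bar n=1/(s-1)$. Then $\hat D(\alpha)\rho_{\bar n}\hat D^\dagger(\alpha)$ is a displaced thermal state. I will use the standard generating function of its photon-number distribution, obtained from its Glauber $P$-function (a Gaussian centred at $\alpha$ with variance $\bar n$) together with $\bra{\beta}t^{\hat N}\ket{\beta}=e^{(t-1)|\beta|^2}$:
\begin{equation}
\Tr\bigl[t^{\hat N}\hat D(\alpha)\rho_{\bar n}\hat D^\dagger(\alpha)\bigr]=\frac{1}{1-(t-1)\bar n}\exp\!\Bigl(\frac{(t-1)|\alpha|^2}{1-(t-1)\bar n}\Bigr).
\end{equation}
The Gaussian integral behind this converges precisely when $(t-1)\bar n<1$, which is equivalent to $t<s$; this is where the hypothesis $1<t<s$ enters. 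Substituting $\bar n=1/(s-1)$ gives $1-(t-1)\bar n=(s-t)/(s-1)$, and therefore
\begin{equation}
K(\alpha)=\frac{s}{s-t}\exp\!\Bigl(|\alpha|^2\frac{(s-1)(t-1)}{s-t}\Bigr).
\end{equation}

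\textbf{Step 3: assembling the bound.} $K$ is increasing in $|\alpha|$, so each factor is at most $K$ evaluated at $|\alpha|=\max_i|\alpha_i|$. This yields the stated bound with the factor $s^m/(s-t)^m$. For the special choice $t=1+(s-1)/s$, one checks that $t<s$ and that $s-t=(s-1)^2/s$. Hence $s/(s-t)=(s/(s-1))^2$ and the exponent coefficient $(s-1)(t-1)/(s-t)$ equals $1$, which gives the second display.

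\textbf{Main obstacle.} I expect the only delicate point to be Step~2: justifying the trace formula, i.e.\ the $P$-representation of the thermal state and the exchange of sum and integral, which is valid exactly in the regime $t<s$. Alternatively, one can simply invoke \cite[Lemma 13]{cerf2025} for this step.
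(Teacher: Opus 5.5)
Your proof is correct, but it takes a different route from the paper's. The paper gives no argument of its own for this lemma and simply imports the bound from \cite[Lemma 13]{cerf2025}. You instead derive it self-containedly with the same Cauchy--Schwarz/thermal-state device used for Lemma~\ref{applem:squeeze_exponent}, and your computations check out:
\begin{itemize}
\item $s^{-\hat N}=\frac{s}{s-1}\rho_{\bar n}$ with $\bar n=1/(s-1)$;
\item the Gaussian integral against the $P$-function of the displaced thermal state gives exactly $K(\alpha)=\frac{s}{s-t}\exp\bigl(|\alpha|^2\frac{(s-1)(t-1)}{s-t}\bigr)$;
\item the choice $t=1+(s-1)/s$ gives $s-t=(s-1)^2/s$ and an exponent coefficient equal to $1$.
\end{itemize}
What your route buys over the citation is transparency. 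In the squeezing case the anisotropic $P$-function had to be dominated by an isotropic one. Here the kernel is computed exactly, so the only slack lies in the Cauchy--Schwarz step and in replacing each $|\alpha_k|$ by $\max_k|\alpha_k|$. This pinpoints where the prefactor $s^m/(s-t)^m$ and the condition $t<s$ come from. It also shows the prefactor is a Cauchy--Schwarz loss rather than intrinsic: at $\bm\alpha=0$ the true ratio is at most $1$, while the bound gives $(s/(s-t))^m$.

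Two minor points.
\begin{itemize}
\item The interchange of sum and integral in Step 2 needs no restriction. The thermal $P$-function is nonnegative, so Tonelli's theorem applies unconditionally; $t<s$ is needed only for the resulting integral to be finite.
\item Cauchy--Schwarz gives you $\le$, whereas the statement asserts $<$. Strictness (for $\bra{\psi}s^{\hat N}\ket{\psi}<\infty$) follows because both sides are finite sums of termwise inequalities. Equality would therefore force $(\psi_{\bm i}s^{|\bm i|/2})_{\bm i}$ to be parallel to $(s^{-|\bm i|/2}\overline{\bra{\bm j}\hat D(\bm\alpha)\ket{\bm i}})_{\bm i}$ for every $\bm j$. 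That is impossible, since distinct rows of a unitary are orthogonal.
\end{itemize}
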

\noindent Combining these two Lemmas allows us to bound the exponential-energy operator of a bosonic quantum state evolved under $L$ layers of Gaussian and arbitrary boson-number preserving non-Gaussian gates, as explained in the section below.

\bigskip

\noindent\textit{Proof of Lemma \ref{applem:sum_squeezing}}:
To prove Lemma \ref{applem:sum_squeezing}, we first prove another Lemma:
\begin{lem}\label{applem:Fock_sq_thermal_overlap}
Given a thermal state
\begin{equation}
    \rho_{\bar n} = (1+\bar n)^{-1} \sum_{ i \geq 0}\left(\frac{\bar n}{1+\bar n}\right)^i \ket{i}\bra{i},
\end{equation}
a squeezing operator $S(r)$ with $r > 0$ such that 
\begin{equation}
    \bar n e^{-2r} - e^{-r}s_r > 0 \rightarrow \bar n > \frac{s_r}{e^{-r}},
\end{equation}
and a Fock state $\ket{j}$, we have that
\begin{eqnarray}
    \bra{j}S(r)\rho_{\bar n} S^{\dagger}(r)\ket{j} &=& \Tr[\ket{j}\bra{j} S(r) \rho_{\bar n} S^\dagger(r)] \nonumber \\ &\leq& \left(\frac{\bar n e^{2r} + s_re^{r}}{\bar n e^{2r} - s_re^{-r}}\right)^{1/2} \frac{1}{1 + \bar n e^{2r} + s_re^{r}} \left(\frac{\bar n e^{2r} + s_re^{r}}{1 + \bar n e^{2r} + s_re^{r}}\right)^m. \nonumber \\
\end{eqnarray}    
\end{lem}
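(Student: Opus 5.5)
The plan is to use the generating function of the Fock-state distribution of the squeezed thermal state $\sigma := S(r)\rho_{\bar n}S^\dagger(r)$, and then extract the single coefficient $p_j:=\bra{j}\sigma\ket{j}$ by a Chernoff-type (saddle-point) bound. Since all $p_j\ge0$, for every $z>1$ in the domain of convergence we have
\begin{equation*}
p_j\, z^j\le \sum_{i\ge0}p_i z^i=\Tr[\sigma z^{\hat N}],\qquad\text{so}\qquad p_j\le z^{-j}\,\Tr[\sigma z^{\hat N}].
\end{equation*}
I read the exponent $m$ on the right-hand side of the statement as the Fock index $j$. The claimed bound has exactly the shape $(\text{prefactor})\times x^{-j}$ with $x^{-1}=\frac{\bar n e^{2r}+s_re^r}{1+\bar n e^{2r}+s_re^r}$. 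So the strategy is to compute $\Tr[\sigma z^{\hat N}]$ in closed form, choose $z$ appropriately, and bound the prefactor.

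\textbf{Step 1: closed form.} $\sigma$ is a zero-mean Gaussian state whose covariance matrix has eigenvalues $v_\pm=(\bar n+\tfrac12)e^{\pm2r}$ in the vacuum $=\tfrac12$ convention. Also $z^{\hat N}$ is (up to normalisation) a Gaussian operator, namely an unnormalised thermal operator when $z<1$, analytically continued for $z>1$. The standard Gaussian trace formula, equivalently a Gaussian integral of the Wigner or $Q$ function against the symbol of $z^{\hat N}$, gives
\begin{equation*}
\Tr[\sigma z^{\hat N}]=\Big[\big(\tfrac{1+z}{2}-(z-1)v_+\big)\big(\tfrac{1+z}{2}-(z-1)v_-\big)\Big]^{-1/2}.
\end{equation*}
This holds whenever both factors are positive, i.e.\ $z-1<1/(v_+-\tfrac12)$. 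A quick sanity check is $r=0$, where it reduces to $\big(1+\bar n-\bar n z\big)^{-1}$, the thermal generating function. Alternatively, I could derive it directly from the known expansion of squeezed thermal Fock populations \cite{Kim1989}.

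\textbf{Step 2: choice of $z$.} Rewrite the two factors using
\begin{equation*}
(\bar n+\tfrac12)e^{2r}-\tfrac12=\bar n e^{2r}+s_re^{r},\qquad (\bar n+\tfrac12)e^{-2r}-\tfrac12=\bar n e^{-2r}-s_re^{-r}.
\end{equation*}
The factors then become $1-(z-1)(\bar n e^{2r}+s_re^{r})$ and $1-(z-1)(\bar n e^{-2r}-s_re^{-r})$. The hypothesis $\bar n e^{-2r}-e^{-r}s_r>0$ guarantees that the second bracket is a genuine positive number. I would take $z=1+\frac{1}{1+\bar n e^{2r}+s_re^r}$, so that $z^{-j}$ is at most the target $x^{-j}$, up to replacing the target ratio by this slightly weaker base.

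Actually the cleaner route is to choose $z$ so that $z^{-1}$ equals the target ratio exactly, i.e.\ $z=1+\frac{1}{\bar n e^{2r}+s_re^r}$. Then the first factor is not positive, so this choice sits at the boundary of convergence. So I will instead keep $z$ general and optimise. After substituting, the first factor is
\begin{equation*}
\frac{1}{1+\bar n e^{2r}+s_re^r}\quad\text{for } z-1=\frac{1}{1+\bar n e^{2r}+s_re^{r}}.
\end{equation*}
The second factor is then bounded below by $\frac{\bar n e^{2r}-s_re^{-r}}{\bar n e^{2r}+s_re^r}$, up to $O(1)$ adjustments, using $\bar n e^{-2r}-s_re^{-r}\le \bar n e^{2r}+s_re^r$. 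Taking the inverse square root produces the factor $\big(\frac{\bar n e^{2r}+s_re^r}{\bar n e^{2r}-s_re^{-r}}\big)^{1/2}$ together with $\big(1+\bar n e^{2r}+s_re^r\big)^{1/2}$-type terms. Combined with $z^{-j}$, these should yield the claimed $\frac{1}{1+\bar n e^{2r}+s_re^r}\big(\frac{\bar n e^{2r}+s_re^r}{1+\bar n e^{2r}+s_re^r}\big)^{j}$ shape.

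\textbf{Main obstacle.} The hard part is precisely matching the constants: the prefactor must come out as $(1+\bar n e^{2r}+s_re^r)^{-1}$ rather than its square root. For this I expect to need a sharper argument than the crude Chernoff bound. The idea is to split the product of the two factors so that the first factor behaves like a thermal-state generating function, whose coefficients are exactly $\frac{1}{1+a}\big(\frac{a}{1+a}\big)^j$ with $a=\bar n e^{2r}+s_re^r$. The second factor is then a function with nonnegative Taylor coefficients, each bounded by its value at the point $x^{-1}$, evaluated against $(\cdot)^{-1/2}$. That is, I would write the generating function as $\frac{1}{1+a-az}\cdot h(z)$ with $h(z)=\big[\frac{1+a-az}{1+b-bz}\big]^{1/2}$ and $b=\bar n e^{-2r}-s_re^{-r}$. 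I would then show that the coefficient convolution is dominated by the thermal coefficient times $h$ evaluated at $z=(1+a)/a$, which gives exactly the factor $\big(\frac{a}{\cdots}\big)^{1/2}$ appearing in the statement. Verifying this coefficient domination, using the positivity of the coefficients of $h$ when $a>b$, is the step I expect to require the most care.
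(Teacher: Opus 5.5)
\textbf{The proposal has a genuine gap: its final coefficient-domination step rests on a false claim.}

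Your Steps 1--2 cannot close the argument, as you note yourself. For any fixed $z<(1+a)/a$, the Chernoff bound $p_j\le z^{-j}\Tr[\sigma z^{\hat N}]$ has base $z^{-1}$ strictly larger than $\lambda:=a/(1+a)$; your choice $z=1+1/(1+a)$ gives base $(1+a)/(2+a)$. Letting $z$ approach $(1+a)/a$ as $j$ grows costs a factor growing like $\sqrt{j}$. So everything rests on the final step, and that step fails as stated. Set $a=\bar n e^{2r}+s_re^{r}$, $b=\bar n e^{-2r}-s_re^{-r}$ and $\mu:=b/(1+b)$. Then $h(z)=\sqrt{\tfrac{1+a}{1+b}}\,(1-\lambda z)^{1/2}(1-\mu z)^{-1/2}$. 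Its linear Taylor coefficient is $-\tfrac12\sqrt{\tfrac{1+a}{1+b}}\,(\lambda-\mu)<0$, precisely \emph{because} $a>b$. Moreover, $h$ evaluated at $z=(1+a)/a=1/\lambda$ equals $0$, not $(a/b)^{1/2}$. Nonnegative coefficients are in fact impossible: they would make $h$ nondecreasing on $[0,1/\lambda)$, contradicting $h(0)>0$ and $h(z)\to 0$ as $z\to 1/\lambda$. So the positivity you plan to verify is false, and nothing in the plan produces the square-root prefactor.

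\textbf{How to repair it, and how it compares with the paper.} The idea works with the opposite sign. Let $w_\ell$ be the Taylor coefficients of $(1-\lambda z)^{-1/2}(1-\mu z)^{-1/2}$. Then $p_j=w_j/\sqrt{(1+a)(1+b)}$ and $h_\ell\propto w_\ell-\lambda w_{\ell-1}$. Take $u_k=\binom{2k}{k}4^{-k}$, which is nonincreasing and satisfies $\sum_{k=0}^{\ell}u_ku_{\ell-k}=1$; then $w_\ell=\sum_k u_ku_{\ell-k}\lambda^k\mu^{\ell-k}$. Using $\lambda\ge\mu\ge 0$ (this is where the hypothesis $b>0$ enters), one gets $\lambda w_{\ell-1}-w_\ell=\sum_{k=1}^{\ell}(u_{k-1}-u_k)u_{\ell-k}\lambda^k\mu^{\ell-k}-u_\ell\mu^\ell\ge \mu^\ell\left(\sum_{k=1}^{\ell}(u_{k-1}-u_k)u_{\ell-k}-u_\ell\right)=0$. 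Hence $w_j\le\lambda^j$, and $p_j\le \sqrt{\tfrac{1+a}{1+b}}\,\tfrac{\lambda^j}{1+a}$. This is at most the intended bound, since $\tfrac{1+a}{1+b}\le\tfrac ab$ for $0<b\le a$. It is even sharper than the paper's bound and tight at $j=0$, but it needs this combinatorial monotonicity lemma.

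The paper's route is shorter and works in phase space. The $P$ function of $S(r)\rho_{\bar n}S^\dagger(r)$ is $\exp(-\alpha_x^2/b-\alpha_p^2/a)/(\pi\sqrt{ab})$. Since $b<a$, this is pointwise at most $\sqrt{a/b}$ times the $P$ function of the thermal state with mean $a$. Integrating against the nonnegative $Q$ function of $\ket{j}$ then yields the thermal population $\tfrac{1}{1+a}\lambda^j$.

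\textbf{A note on the statement itself.} Your reading of the exponent $m$ as $j$ is right. The intended square-root factor, however, is $(a/b)^{1/2}$ with $b=\bar n e^{-2r}-s_re^{-r}$, as in the paper's proof. With the denominator $\bar n e^{2r}-s_re^{-r}$ printed in the statement, which you copied in Step 2, the inequality is false already at $j=0$ whenever $2\bar n a>1$, because $p_0=[(1+a)(1+b)]^{-1/2}$.
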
 
\begin{proof}
    To prove this, note that the overlap between two states can be written in terms of the Glauber-Sudarshan P function of one and the Husimi Q function of the other as \cite{Kim1989}
    \begin{equation}
        \Tr[\ket{j}\bra{j} S(r) \rho_{\bar n} S^\dagger(r)] = \frac{1}{\pi}\int d\alpha P_{S(r) \rho_{\bar n} S^\dagger(r)} (\alpha) Q_{\ket{j}\bra{j}}(\alpha),
    \end{equation}
    Now, from \cite[Eq.~4.23]{Kim1989}, if $n e^{-2r} - e^{-r}s_r > 0$,
    \begin{equation}
        P_{S(r) \rho_{\bar n} S^\dagger(r)} (\alpha) = \frac{\exp\left(- \frac{\alpha_x^2}{\bar n e^{-2r} - s_re^{-r}} - \frac{\alpha_p^2}{\bar n e^{{2r}} + s_re^r}\right)}{\pi(\bar n e^{2r} + s_re^{r})^{1/2}(\bar n e^{-2r} - s_r e^{-r})^{1/2}}
    \end{equation}
Now, since
\begin{equation}
    \bar n e^{2r} + s_r e^{r} > \bar n e^{-2r} - s_r e^{-r},
\end{equation}
we have that $\exp(-\alpha_x^2/(\bar n e^{-2r} - s_r e^{-r})) \leq \exp(-\alpha_x^2/(\bar n e^{2r} + s_re^{r}))$ and hence
\begin{eqnarray}
    P_{S(r) \rho_{\bar n} S^\dagger(r)} (\alpha) &\leq& \left(\frac{\bar n e^{2r} + s_re^{r}}{\bar n e^{-2r} - s_r e^{-r}} \right)^{1/2} \frac{\exp\left(- \frac{\alpha_x^2}{\bar n e^{2r} + s_re^{r}} - \frac{\alpha_p^2}{\bar n e^{{2r}} + s_r e^r}\right)}{\pi(\bar n e^{2r} + s_re^{r})} \nonumber \\ &=& \left(\frac{\bar n e^{2r} + s_re^{r}}{\bar n e^{-2r} - s_r e^{-r}} \right)^{1/2}  P_{\rho_{\bar n e^{2r }+ s_r e^{r}}} (\alpha).
\end{eqnarray}
Furthermore, since $Q_{\ket{m}\bra{m}} (\alpha)=\frac1\pi|\langle\alpha|m\rangle|^2 \geq 0 \;\forall \alpha$, we have that
\begin{eqnarray}
    \Tr[\ket{j}\bra{j} S(r) \rho_{\bar n} S^\dagger(r)] &=& \frac{1}{\pi}\int d\alpha P_{S(r) \rho_{\bar n} S^\dagger(r)} (\alpha) Q_{\ket{j}\bra{j}}(\alpha) \nonumber \\
    &\leq& \left(\frac{\bar n e^{2r} + s_re^{r}}{\bar n e^{-2r} - s_r e^{-r}} \right)^{1/2} \frac{1}{\pi}\int d\alpha P_{\rho_{\bar n e^{2r} + s_re^r}} (\alpha) Q_{\ket{j}\bra{j}}(\alpha) \nonumber \\
    &=& \left(\frac{\bar n e^{2r} + s_r e^{r}}{\bar n e^{-2r} - s_r e^{-r}} \right)^{1/2} \Tr[\ket{j}\bra{j} \rho_{\bar n e^{2r} + s_r e^r}] \nonumber \\
    &=& \left(\frac{\bar n e^{2r} + s_r e^{r}}{\bar n e^{-2r} - s_r e^{-r}} \right)^{1/2}  \frac{1}{1 + \bar n e^{2r} + s_r e^r} \left( \frac{\bar n e^{2r} + s_r e^r}{1 + \bar n e^{2r} + s_r e^r}\right)^{j}. \nonumber \\
\end{eqnarray}
\end{proof}
\noindent With this Lemma, we are ready to prove Lemma \ref{applem:sum_squeezing}. The given expression is
    \begin{eqnarray}
        \sum_{i \geq 0} t^i \sum_{j \geq 0} (1/s)^j|\bra{j}S(r)\ket{i}|^2 &=& \sum_{i \geq 0} t^i \sum_{j \geq 0} (1/s)^j \bra{j}S(r)\ket{i}\bra{i}S^\dagger(r)\ket{j} \nonumber \\
        &=& \sum_{j \geq 0} t^j \bra{j} S(r)\left( \sum_{i \geq 0} (1/s)^i \ket{i}\bra{i}\right) S^\dagger(r) \ket{j}.
    \end{eqnarray}
    Note that since $s \geq 1$, up to normalization $\sum_{i \geq 0} (1/s)^i \ket{i}\bra{i}$ is a thermal state $\rho_{\bar n}$ such that $\bar n/(1 + \bar n) := 1/s$ or $\bar n = 1/(s-1)$. Therefore,
    \begin{eqnarray}
        \sum_{j \geq 0} t^j \sum_{i \geq 0} (1/s)^i |\bra{j}S(r)\ket{i}|^2 &=& (1 + \bar n) \sum_{j \geq 0} t^j \bra{j} S(r) \rho_{\bar n} S^\dagger(r) \ket{j},
    \end{eqnarray}
    so for  $\bar n e^{-r} - s_r > 0$ or equivalently $s < 1 + e^{-r}/s_r  = 1/t_r$, we use Lemma \ref{applem:Fock_sq_thermal_overlap} to write 
    \begin{eqnarray}
        \sum_{j \geq 0} t^j \sum_{i \geq 0} (1/s)^i |\bra{j}S(r)\ket{i}|^2 &\leq& (1 + \bar n) \left(\frac{\bar n e^{2r} + s_r e^{r}}{\bar n e^{-2r} - s_r e^{-r}} \right)^{1/2} \nonumber \\ && \times  \frac{1}{1 + \bar n e^{2r} + s_r e^r} \sum_{j \geq 0} t^j \left( \frac{\bar n e^{2r} + s_r e^r}{1 + \bar n e^{2r} + s_r e^r}\right)^{j}. \nonumber \\
    \end{eqnarray}
    So if
    \begin{equation}
        t < \frac{1 + \bar n e^{2r} + s_r e^r}{ \bar n e^{2r} + s_r e^r} = \frac{s-1 + e^{2r} + s_r e^{r}(s-1)}{ e^{2r} + s_r e^{r}(s-1)} = 1 + \frac{s-1}{e^{2r}+ s_r e^r(s-1)},
    \end{equation} we get
    \begin{eqnarray}
        \sum_{j \geq 0} t^j \sum_{i \geq 0} (1/s)^i |\bra{j}S(r)\ket{i}|^2 &\leq& (1 + \bar n) \left(\frac{\bar n e^{2r} + s_re^{r}}{\bar n e^{-2r} - s_r e^{-r}} \right)^{1/2} \nonumber \\
        &&\hspace{-15mm}\times \frac{1}{(1 + \bar n e^{2r} + s_r e^r)(1 - t (\bar n e^{2r} + s_re^{r})/(1+ \bar n e^{2r} + s_re^{r}))}. \nonumber \\
    \end{eqnarray}
    Finally, putting $\bar n = 1/(s-1)$, we get that
    \begin{eqnarray}
        \sum_{j \geq 0} t^j \sum_{i \geq 0} (1/s)^i |\bra{j}S(r)\ket{i}|^2 &\leq& s\times \left(\frac{e^{2r} + s_r e^r (s-1)}{e^{-2r} - s_r e^{-r}(s-1)}\right)^{1/2}\nonumber \\
        &\times& \frac{1}{(s-1) - (t-1)(e^{2r} + s_r e^r(s-1))}
    \end{eqnarray}
\subsection{Bound on the exponential-energy operator over multiple layers of displacement and squeezing gates}
To provide an upper bound on the expectation value of the exponential-energy operator for the quantum states prepared by a bosonic circuit composed of the $m$-mode vacuum fed into Gaussian and energy-preserving non-Gaussian gates, we first prove the result for a ``simpler'' output state and then explain in the next subsection why this also gives the answer for the general case:
\begin{theo}\label{apptheo:displaced_squeezed_exp_energy_simplified}
    Given an $m$-mode quantum state
    \begin{equation}
        \ket{\psi} = \underbrace{(\hat D(\bm{\alpha}) \hat S(\bm{r})) (\hat D(\bm{\alpha}) \hat S(\bm{r}))\dots (\hat D(\bm{\alpha}) \hat S(\bm{r}))}_{L \text{ times}}\ket{0}^{\otimes m},
    \end{equation}
    with $\bm{\alpha} = (\alpha_1,\dots,\alpha_m) \in \mathbb{C}^m$ and $\bm{r} = (r_1,r_2,\dots,r_m) \in \mathbb{R}^m$ with $r_i \geq 0, \forall i$. Then defining
    \begin{equation}
        r = \max_k r_k, \hspace{5mm} |\alpha| = \max_{k} |\alpha_k|, \forall k \in (1,\dots,m),
    \end{equation}
    if we choose $t_L > 1$ according to the scheme
    \begin{equation}\label{appeq:t_choice}
        t_i =1 +  \frac{t_{i-1} - 1}{e^{2r} + t_{i-1}}, \forall i \in (1,\dots,L),
    \end{equation}
    with $t_0 = 1 + 2/e^{2r}$, then it follows that
    \begin{equation}
        \bra{\psi}t_L^{\hat N}\ket{\psi} < e^{mL^2(|\alpha|^2 + 28 r + 9)}.
    \end{equation}
\end{theo}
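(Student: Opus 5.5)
The plan is a backward induction that peels the circuit one layer at a time, starting from the outermost gate. At each step we trade the exponential-energy operator $t^{\hat N}$ for a larger base $s^{\hat N}$, using Lemma~\ref{applem:displace_exponent} for displacements and Lemma~\ref{applem:squeeze_exponent} for squeezers. After $L$ layers we reach $\bra{0}^{\otimes m} t_0^{\hat N}\ket{0}^{\otimes m}=1$.

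The bases are ordered $t_L<t_{L-1}<\dots<t_0$, with $t_L$ used on the output state and $t_0$ on the vacuum. Writing $\ket{\psi_i}=(\hat D\hat S)^i\ket{0}^{\otimes m}$, one layer reads
\begin{equation*}
\bra{\psi_{i}}t_{L-i+1}^{\hat N}\ket{\psi_{i}}\le \Gamma_D\,\Gamma_S\,\bra{\psi_{i-1}}t_{L-i}^{\hat N}\ket{\psi_{i-1}} .
\end{equation*}
This is obtained in two moves.
\begin{itemize}
\item Apply Lemma~\ref{applem:displace_exponent} with $t=t_{L-i+1}$ and an intermediate base $u$ with $t<u$. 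This gives the factor $\Gamma_D=\bigl(\tfrac{u}{u-t}\bigr)^m\exp\bigl(m|\alpha|^2\tfrac{(u-1)(t-1)}{u-t}\bigr)$.
\item Apply Lemma~\ref{applem:squeeze_exponent} with $t=u$ and $s=t_{L-i}$. This gives the factor $\Gamma_S=g(r,t_{L-i},u)^m$.
\end{itemize}

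The recursion~\eqref{appeq:t_choice} should encode exactly one admissible choice of $u$. From $t_{i-1}$, take $u-1$ a fixed fraction of the squeezing threshold $f(r,t_{i-1})-1=(t_{i-1}-1)/(e^{2r}+s_r e^r(t_{i-1}-1))$. Then take $t_i-1$ a fixed fraction of $u-1$. The goal is that both gaps $u-t_i$ and $f-u$ are comparable to $(t_{i-1}-1)/(e^{2r}+t_{i-1})$; I will fix the precise fractions so that the closed form $t_i=1+(t_{i-1}-1)/(e^{2r}+t_{i-1})$ results.

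Next I will check the validity conditions of the squeezing lemma: $1<u<f(r,s)<s<1/t_r$. The only nontrivial one is $s<1/t_r$, and since the $t_i$ decrease it suffices at $s=t_0$. This holds because $t_0=1+2e^{-2r}<1+2/(e^{2r}-1)=1/t_r$. I also need the denominator $e^{-r}-s_r(s-1)$ in $g$ to stay bounded away from $0$. Indeed $e^{-r}-s_r\cdot 2e^{-2r}=e^{-r}(1-(1-e^{-2r}))=e^{-3r}$, and since $s\le t_0$ this gives $e^{-r}-s_r(s-1)\ge e^{-3r}$. So that factor costs at most $e^{O(r)}$ per mode.

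The remaining task is to bound each per-layer factor by $e^{mL(|\alpha|^2+28r+9)}$, so that the product over $L$ layers gives $e^{mL^2(\cdots)}$.
\begin{itemize}
\item All factors are rational in $e^{r}$ except the ones of the form $s/(s-1)$ and $u/(u-t)$, whose denominators are of order $t_i-1$.
\item Since $t_{i-1}\le t_0\le 3$, the recursion gives $t_i-1\ge (t_{i-1}-1)/(e^{2r}+3)$. Iterating, $t_i-1\ge 2e^{-2r}(e^{2r}+3)^{-i}\ge 2e^{-(2r+2)(i+1)}$.
\item Hence $\log\frac{1}{t_i-1}\le (2r+2)(L+1)$. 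These terms contribute $\exp(O(mL(r+1)))$ per layer, which is where the constants $28r+9$ come from after crude bookkeeping.
\item In the displacement exponent, $(u-1)(t-1)/(u-t)$ is $O(1)$ because $u-t\gtrsim t-1$ and $u-1\le 2$. This yields at most $e^{m|\alpha|^2\cdot O(1)}$, and I will pick the fractions so this constant is $\le 1$.
\end{itemize}

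The main obstacle is the careful constant-tracking in this last step. One must choose the intermediate $u$ so that the recursion takes exactly the stated closed form, keep $f-u$ and $u-t$ proportional to $t_{i-1}-1$, and make the collected logarithms fit under $28r+9$. I would first verify all this numerically-symbolically for one layer, then sum.
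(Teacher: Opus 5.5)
Your plan is essentially the paper's proof. Peel each layer with the displacement lemma and then the squeezing lemma through an intermediate base $u$. Control the $t_{i-1}$-dependent factors using $t_{i-1}\le t_0$ (so $e^{-r}-s_r(t_{i-1}-1)\ge e^{-3r}$) together with the geometric lower bound on $t_{i-1}-1$. Then chain a per-layer factor $e^{mL(|\alpha|^2+28r+9)}$ over the $L$ layers. Note that $t_i$ should be paired with $\ket{\psi_i}$; your displayed indices $t_{L-i+1},t_{L-i}$ are off.

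The choice of $u$ you defer is just the composition of the two lemmas' ``therefore choosing'' clauses: $u=1+(t_{i-1}-1)/(e^{2r}+1)$, followed by $t_i=1+(u-1)/u$. This gives exactly the stated recursion and makes the displacement exponent exactly $|\alpha|^2$. Here $u-t_i=(u-1)^2/u$, so it is this cancellation, rather than $u-t_i\gtrsim t_i-1$, that does the work. This also shows your ``fixed fractions'' cannot reproduce the recursion. The resulting per-mode factor is at most $\bigl(\frac{e^{2r}+t_{i-1}}{t_{i-1}-1}\bigr)^3(e^{2r}+1)^2e^{6r+|\alpha|^2}$, which fits under the stated constants.
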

\begin{proof}
    Denoting
    \begin{equation}
        \ket{\psi_i} := \hat D(\bm{\alpha}) \hat S(\bm{r}) \ket{\psi_{i-1}}, \forall i \in (1,\dots,L),
    \end{equation}
    with $\ket{\psi_0} = \ket{0}^{\otimes m}$, according to Lemmas \ref{applem:squeeze_exponent} and \ref{applem:displace_exponent}, choosing $t_{L} = 1 + \frac{\tilde t_L - 1}{\tilde t_L}$, with $\tilde t_L = 1 + \frac{t_{L-1} - 1}{e^{2r}+1}$, it follows that
    \begin{eqnarray}\label{appeq:imp}
        \bra{\psi_{L-1}}\hat S^\dagger (\bm{r}) \hat D^\dagger (\bm{\alpha}) t_{L}^{\hat N}\hat D(\bm{\alpha}) \hat S(\bm{r})\ket{\psi_{L-1}} &<& \left(\left(\frac{\tilde t_L}{\tilde t_L - 1}\right)^2 e^{|\alpha|^2}\right)^m \bra{\psi_{L-1}}\hat S^\dagger (\bm{r}) \tilde{t}_L^{\hat N}\hat S(\bm{r})\ket{\psi_{L-1}} \nonumber \\
        &<& \left(\left(\frac{\tilde t_L}{\tilde t_L - 1}\right)^2 e^{|\alpha|^2} \frac{t_{L-1}}{t_{L-1} - 1} \times \frac{(e^{2r} + 1)^2}{(e^{-r} - s_r(t_{L-1}-1))^2}\right)^m \nonumber \\
        && \hspace{5mm}\times\bra{\psi_{L-1}} t_{L-1}^{\hat N}\ket{\psi_{L-1}}, \nonumber \\
        &=& \left(\left(\frac{1 + \frac{t_{L-1} - 1}{e^{2r}+ 1}}{\frac{t_{L-1} - 1}{e^{2r}+ 1}}\right)^2 e^{|\alpha|^2} \frac{t_{L-1}}{t_{L-1} - 1} \times \frac{(e^{2r} + 1)^2}{(e^{-r} - s_r(t_{L-1}-1))^2}\right)^m \nonumber \\
        && \hspace{5mm}\times\bra{\psi_{L-1}} t_{L-1}^{\hat N}\ket{\psi_{L-1}}, \nonumber \\
        &<& \left(\left(\frac{e^{2r} + t_{L-1}}{t_{L-1} - 1}\right)^3 e^{|\alpha|^2} \times \frac{(e^{2r} + 1)^2}{(e^{-r} - s_r(t_{L-1}-1))^2}\right)^m \nonumber \\
        && \hspace{5mm}\times\bra{\psi_{L-1}} t_{L-1}^{\hat N}\ket{\psi_{L-1}},
    \end{eqnarray}
    where for the last inequality, we have used the fact that
    \begin{equation}
        t_{L-1} < e^{2r} + t_{L-1}.
    \end{equation}
    Now, $\left(\frac{e^{2r} + t_{L-1}}{t_{L-1} - 1}\right)^3$ is a decreasing function of $t_{L-1}$, whereas $\frac{1}{(e^{-r} - s_r(t_{L-1}-1))^2}$ is an increasing function of $t_{L-1}$, therefore to upper bound $\left(\frac{e^{2r} + t_{L-1}}{t_{L-1} - 1}\right)^3 e^{|\alpha|^2} \times \frac{(e^{2r} + 1)^2}{(e^{-r} - s_r(t_{L-1}-1))^2}$, we need both an upper bound and lower bound on $t_{L-1}$. Now, from Eq.~\ref{appeq:t_choice} (note that combining $t_{L} = 1 + \frac{\tilde t_L - 1}{\tilde t_L}$, with $\tilde t_L = 1 + \frac{t_{L-1} - 1}{e^{2r}+1}$ gives Eq.~\ref{appeq:t_choice}), its clear that $t_{L-1} < t_0$, and we further note that
    \begin{equation}
        t_{L-1} = 1 + \frac{t_{L-2} - 1}{e^{2r} + t_{L-2}} > 1 + \frac{t_{L-2} - 1}{e^{2r} + t_{0}} > 1 + \frac{t_0 - 1}{(e^{2r}+t_0)^{L-1} } > 1 + \frac{t_0 - 1}{(e^{2r}+t_0)^{L}}.
    \end{equation}
    Putting this in Equation \ref{appeq:imp} gives
    \begin{eqnarray}
    \bra{\psi_{L-1}}\hat S^\dagger (\bm{r}) \hat D^\dagger (\bm{\alpha}) t_{L}^{\hat N}\hat D(\bm{\alpha}) \hat S(\bm{r})\ket{\psi_{L-1}}    &<& \left(\left(\frac{e^{2r} + 1 + \frac{t_0 -1 }{(e^{2r}+t_0)^L}}{\frac{t_0 - 1}{(e^{2r} + t_0)^L}}\right)^3 e^{|\alpha|^2} \times \frac{(e^{2r} + 1)^2}{(e^{-r} - s_r(t_0-1))^2}\right)^m \nonumber\\
    &&\hspace{5mm} \times \bra{\psi_{L-1}}t_{L-1}^{\hat N}\ket{\psi_{L-1}}\nonumber\\
    &<& \left(\left(\frac{(e^{2r} + 1)(e^{2r}+t_0)^L}{t_0 - 1}+1\right)^3 e^{|\alpha|^2}\times \frac{(e^{2r} + 1)^2}{(e^{-r} - s_r(t_0-1))^2}\right)^m\nonumber\\
    &&\hspace{5mm} \times \bra{\psi_{L-1}}t_{L-1}^{\hat N}\ket{\psi_{L-1}}.
    \end{eqnarray}
    Choosing $t_0 = 1 + 2/e^{2r}$ then gives
\begin{eqnarray}
    \bra{\psi_{L-1}}\hat S^\dagger (\bm{r}) \hat D^\dagger (\bm{\alpha}) t_{L}^{\hat N}\hat D(\bm{\alpha}) \hat S(\bm{r})\ket{\psi_{L-1}} &<& \left(\left(\frac{e^{2r}(e^{2r}+1)(e^{2r} + 1  + 2/e^{2r})^L} {2} + 1\right)^3 e^{|\alpha|^2 + 6r} (e^{2r} + 1)^2\right)^m \nonumber\\
    &&\hspace{5mm} \times \bra{\psi_{L-1}}t_{L-1}^{\hat N}\ket{\psi_{L-1}}.
\end{eqnarray}
Finally, nothing that $1 \leq e^{2r} \leq e^{2r}\frac{(e^{2r}+1)}{2} (e^{2r}+ t_0)^L$ and that $e^{2r} + 1  + 2/e^{2r} \leq e^{2r} + 3 \leq 3 e^{2r} + 3 \leq 6e^{2r}$, we get that
\begin{eqnarray}\label{appeq:to_chain}
    \bra{\psi_{L-1}}\hat S^\dagger (\bm{r}) \hat D^\dagger (\bm{\alpha}) t_{L}^{\hat N}\hat D(\bm{\alpha}) \hat S(\bm{r})\ket{\psi_{L-1}} &<& \left((e^{2r})^3(2e^{2r})^3(6e^{2r})^{3L}e^{|\alpha|^2 + 6r}(2e^{2r})^2\right)^m  \bra{\psi_{L-1}}t_{L-1}^{\hat N}\ket{\psi_{L-1}} \nonumber \\
    &=& \left(32\times 6^{3L}e^{|\alpha|^2 + 22r + 6Lr}\right)^m \bra{\psi_{L-1}}t_{L-1}^{\hat N}\ket{\psi_{L-1}} \nonumber \\
    &<& \left(32 \times 216 \times e^{|\alpha|^2 + 28r}\right)^{mL}\bra{\psi_{L-1}}t_{L-1}^{\hat N}\ket{\psi_{L-1}} \nonumber \\
    &<& e^{mL(|\alpha|^2 + 28r + 9)} \bra{\psi_{L-1}}t_{L-1}^{\hat N}\ket{\psi_{L-1}},
\end{eqnarray}
where in the third line, we have used the assumption that $L \geq 1$ and in the last line, we are using the fact that $e^{\log(32\times216)} < e^9$. Chaining Eq.~\ref{appeq:to_chain} over the $L$ layers, we get
\begin{eqnarray}
    \bra{\psi_{L-1}}\hat S^\dagger (\bm{r}) \hat D^\dagger (\bm{\alpha}) t_{L}^{\hat N}\hat D(\bm{\alpha}) \hat S(\bm{r})\ket{\psi_{L-1}} &<& e^{mL(L+1)/2(|\alpha|^2 + 28r + 9)} \bra{\psi_{0}}t_{0}^{\hat N}\ket{\psi_{0}} \nonumber \\ &\leq& e^{mL^2(|\alpha|^2 + 28r + 9)} \bra{\psi_{0}}t_{0}^{\hat N}\ket{\psi_{0}} = e^{mL^2(|\alpha|^2 + 28r + 9)},
\end{eqnarray}
where for the last equality, we have used the fact that for $\ket{\psi_0} = \ket{0}^{\otimes m}$, $\bra{\psi_{0}}t_{0}^{\hat N}\ket{\psi_{0}} = 1$.
\end{proof}
\noindent Theorem \ref{apptheo:displaced_squeezed_exp_energy_simplified} also allows us to find a bound on the exponential-energy operator for circuit composed of Gaussian and energy-preserving non-Gaussian unitary gates, as detailed below.
\subsection{Proof of Theorem \ref{theo:exp_energy_bound_simplified}}
\begin{proof}
        Denoting
    \begin{equation}
        \ket{\psi_i} = \hat G_i \hat{\chi}_i \ket{\psi_{i-1}}, \forall i \in (1,\dots,L)
    \end{equation}
    with $\ket{\psi_0}$ being the $m$-mode vacuum state, we see using Lemmas \ref{applem:squeeze_exponent} and \ref{applem:displace_exponent} that for a constant $t_i > 1$,
    \begin{eqnarray}\label{appeq:remove_passive}
        \bra{\psi_i} t_i^{\hat N} \ket{\psi_i} &=& \bra{\psi_{i-1}}\chi_{i}^\dagger \hat U_i^\dagger \hat S^{\dagger}(\bm{r}_i) \hat D^\dagger (\bm \alpha_i) \hat V_i^\dagger t_i^{\hat N} \hat V_i D(\bm \alpha_i) \hat S(\bm r_i) \hat U_i \hat \chi_i\ket{\psi_{i-1}} \nonumber \\
        &=& \bra{\psi_{i-1}}\chi_i^\dagger \hat U_i^\dagger \hat S^{\dagger}(\bm{r}_i) \hat D^\dagger (\bm \alpha_i) t_i^{\hat N} D(\bm \alpha_i) \hat S(\bm r_i) \hat U_i \hat \chi_i \ket{\psi_{i-1}} \nonumber \\
        &<& \frac{\tilde{t}_i^m}{(\tilde t_i - t_i)^m} \exp\left(m|\alpha_i|^2 \frac{(\tilde t_i - 1)(t_i -1)}{\tilde t_i - t_i}\right) g(r_{i},t_{i-1},\tilde t_i)^m \bra{\psi_{i-1}}\hat \chi_{i}^\dagger \hat U_i^\dagger  t_{i-1}^{\hat N} \hat U_i \hat \chi_i \ket{\psi_{i-1}} \nonumber \\
        &=& \frac{\tilde{t}_i^m}{(\tilde t_i - t_i)^m} \exp\left(m|\alpha_i|^2 \frac{(\tilde t_i - 1)(t_i -1)}{\tilde t_i - t_i}\right) g(r_{i},t_{i-1},\tilde t_i)^m \bra{\psi_{i-1}} t_{i-1}^{\hat N} \ket{\psi_{i-1}},
    \end{eqnarray}
    provided that
    \begin{equation}
        1 < t_i < \tilde {t_i} < f(r_{i},t_{i-1}) < t_{i-1} < \frac{1}{t_{r_{i}}},
    \end{equation}
    where
    \begin{equation}
        |\alpha_i| = \max_{k}|\alpha_{ik}|, \qquad r_{i} = \max_k r_{ik},\qquad \forall k \in (1,\dots,L),
    \end{equation}
    and the functions $g$ and $f$ are given by Eqs.~\ref{appeq:g} and Eqs.~\ref{appeq:f} respectively. We see from Eq.~\ref{appeq:remove_passive} that the energy-preserving Gaussian unitary gates associated to each of the Gaussian unitary $\hat G_i$, and the energy-preserving non-Gaussian unitary gates $\hat{\chi}_i$ do not change the expectation value of the exponential-energy operator, since they commute with it. Therefore, if we instead consider the state
    \begin{equation}
        \ket{\psi_{\mathrm{red}}} = \left(\prod_{i=1}^L \hat D(\bm \alpha_i) \hat S(\bm{r}_i)\right)\ket{0}^{\otimes m},
    \end{equation}
    the upper bound on $\bra{\psi_{\mathrm{red}}}t_L^{\hat  N}\ket{\psi_{\mathrm{red}}}$ also gives the upper bound on $\bra{\psi}t_L^{\hat  N}\ket{\psi}$. Now,
    \begin{equation}
        \frac{\tilde{t}_i^m}{(\tilde t_i - t_i)^m} \exp\left(m|\alpha_i|^2 \frac{(\tilde t_i - 1)(t_i -1)}{\tilde t_i - t_i}\right) g(r_{i},t_{i-1},\tilde t_i)^m
    \end{equation}
     is an increasing function of $|\alpha_i|$ and $r_{i}$, whereas $f(r_{i},t_{i-1})$ and $1/t_{r_{i}}$ are decreasing functions of $r_{i}$, therefore, defining
     \begin{equation}
         |\alpha| = \max_{i,k} |\alpha_{ik}|, \hspace{5mm} r = \max_{i,k} r_{ik},
     \end{equation}
     $\forall i \in (1,\dots,L), k\in (1,\dots,m)$. Therefore, choosing $t_0,\tilde t_1, t_1,\dots, t_{L-1}, \tilde t_L,t_L$, such that
     \begin{equation}\label{appeq:choice_t_new}
         1  < t_i < \tilde t_i < f(r,t_{i-1}) < t_{i-1} < \frac{1}{t_{r}},
     \end{equation}
     $\forall i \in (1,\dots,L)$, it follows from Eq.~\ref{appeq:remove_passive} that
     \begin{eqnarray}
         \bra{\psi_{i,\mathrm{red}}} t_{i}^{\hat N} \ket{\psi_{i,\mathrm{red}}} < \frac{\tilde{t}_i^m}{(\tilde t_i - t_i)^m} \exp\left(m|\alpha|^2 \frac{(\tilde t_i - 1)(t_i -1)}{\tilde t_i - t_i}\right) g(r,t_{i-1},\tilde t_i)^m&& \nonumber\\ && \hspace{-35mm}\bra{\psi_{i-1}} t_{i-1}^{\hat N} \ket{\psi_{i-1}},
     \end{eqnarray}
     with 
     \begin{equation}
         \ket{\psi_{i,\mathrm{red}}} = \hat{D}(\bm{\alpha}_i)\hat S(\bm{r}_i) \ket{\psi_{i-1,\mathrm{red}}},
     \end{equation}
     $\forall i \in (1,\dots,L)$, with $\ket{\psi_{0,\mathrm{red}}}$ being the $m$-mode vacuum state. Therefore, denoting 
     \begin{equation}
        \ket{\phi} = \underbrace{(\hat D(\bm{\alpha}_{\max}) \hat S(\bm{r}_{\max})) (\hat D(\bm{\alpha}_{\max}) \hat S(\bm{r}_{\max})) \dots (\hat D(\bm{\alpha}_{\max}) \hat S(\bm{r}_{\max}))}_{L \text{ times}}\ket{0}^{\otimes m},
    \end{equation}
    where $\bm r_{\max} = (r,r,\dots,r) \in \R_+^m$ and $\bm \alpha_{\max} = (\alpha,\alpha,\dots,\alpha) \in \mathbb{C}^m$, where $\alpha$ is the complex number corresponding to $|\alpha| $, the upper bound on $\bra{\phi}t_L^{\hat N}\ket{\phi}$ with the choice of $t_L$ governed by Eq.~\ref{appeq:choice_t_new} gives an upper bound on $\bra{\psi_{\mathrm{red}}} t_L^{\hat N}\ket{\psi_{\mathrm{red}}}$ and hence equivalently on $\bra{\psi} t_L^{\hat N}\ket{\psi}$. To find an upper bound on $\bra{\phi}t_L^{\hat N}\ket{\phi}$ with the choice of $t_L$ governed by Eq.~\ref{appeq:choice_t_new}, we can directly use Theorem \ref{apptheo:displaced_squeezed_exp_energy_simplified}. This gives the first part of the statement of Theorem \ref{theo:exp_energy_bound_simplified}. We then write
    \begin{equation}
        \bra{\psi}t_L^{\hat N}\ket{\psi} < e^{mL^2 (|\alpha|^2 + 28 r +9)} = t_{L}^{mL^2 (|\alpha|^2 + 28 r +9) \log_{t_L} e} = t_{L}^{\frac{mL^2 (|\alpha|^2 + 28 r +9)}{\log(t_L)}}.
    \end{equation}
    Given that $1/\log(1+t_L)$ is a decreasing function of $t_L$, with $t_L > 1 +  \frac{t_0 - 1}{(e^{2r} +  t_0)^L}$, where $\frac{t_0 - 1}{(e^{2r} +  t_0)^L} \leq 1$, and using the fact that $1/\log(1+x) \leq 2/x$ for $x\leq 1$, we get that
    \begin{equation}
        \bra{\psi}t_L^{\hat N}\ket{\psi} < t_L^{\frac{2mL^2(|\alpha|^2 + 28 r +9)}{t_0 - 1} (e^{2r} + t_0)^L}.
    \end{equation}
    Finally, putting $t_0 = 1 + 2/e^{2r}$, and using the fact that $(e^{2r} + t_0)^L = (e^{2r} + 1 + 2/e^{2r})^L \leq (3e^{2r})^L \leq e^{2rL + 2L}$ and that $e^{2r} \leq e^{2rL}$ for $L\geq 1$, we get that
    \begin{equation}
         \bra{\psi}t_L^{\hat N}\ket{\psi} < t_L^{mL^2 e^{2(2r+1)L} (|\alpha|^2 + 28 r +9)}.
    \end{equation}
\end{proof}
\section{Comparing Fock space cutoffs based on energy operator vs exponential-energy operator bounds}\label{appsec:cutoff_comparisons}
This section focuses on two results: we compare the effective Fock space dimension of states produced by $L$ layers of Gaussian and energy-preserving non-Gaussian gates (Definition \ref{defi:G_nG}) that we get through the application of Theorem \ref{theo:exp_energy_bound_simplified} and Lemma \ref{lem:efficient_description} against the effective Fock space dimension we get from the Gentle Measurement Lemma \cite{Winter1999}. In addition, we demonstrate why the exponent of $t_L$ can be interpreted as the energy of the state. To do this, we need a bound on energy of the state prepared by $L$ layers of Gaussian and energy-preserving non-Gaussian gates. This is given in the following Theorem (Note that this is the multimode generalization of \cite[Lemma 3.2]{chabaud2025energy}):
\begin{theo}\label{theo:bound_energy}
   Given an $m$-mode state $\ket{\psi} \in \mathcal C_L$ (Definition \ref{defi:G_nG}), with its associated Gaussian unitary gates $\hat G_i$ characterized by displacement vector $\bm{\alpha}_i$ and squeezing vector $\bm{r}_i$, $\forall i \in (1,\dots,L)$, then defining
    \begin{eqnarray}
        |\alpha|&:=& \max_{i,k} |\alpha_{ik}|,\hspace{3mm}
        r := \max_{i,k} r_{ik}, \nonumber \\
    \end{eqnarray} 
    $\forall i \in (1,\dots,L),k \in (1,\dots,m)$, then it follows that
    \begin{equation}\label{eq:energy_operator_bound}
        \bra{\psi} \hat N \ket{\psi} \leq m(A^L - 1) \left(1  + \frac{|\alpha|^2 }{A - 1}\right),
    \end{equation}
    where 
    \begin{equation}
        A := e^{2r} + 4|\alpha|e^{r}.
    \end{equation}
\end{theo}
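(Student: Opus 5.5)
The plan is to derive a one-layer linear recursion for the total energy $E_i := \bra{\psi_i}\hat N\ket{\psi_i}$, with $\ket{\psi_i} = \hat G_i\hat\chi_i\ket{\psi_{i-1}}$ and $\ket{\psi_0}=\ket{0}^{\otimes m}$, of the form
\begin{equation*}
E_i \le A\,E_{i-1} + m\left(A - 1 + |\alpha|^2\right),
\end{equation*}
and then unroll it. Since $E_0 = 0$, iterating gives $E_L \le m(A-1+|\alpha|^2)\frac{A^L-1}{A-1} = m(A^L-1)\left(1+\frac{|\alpha|^2}{A-1}\right)$, which is exactly Eq.~\ref{eq:energy_operator_bound}. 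As in the proof of Theorem~\ref{theo:exp_energy_bound_simplified}, I would first use the Bloch--Messiah decomposition $\hat G_i = \hat V_i\hat D(\bm\alpha_i)\hat S(\bm r_i)\hat U_i$. The gates $\hat\chi_i$, $\hat U_i$ and $\hat V_i$ commute with $\hat N$, so they leave the total energy unchanged, and only $\hat S(\bm r_i)$ followed by $\hat D(\bm\alpha_i)$ needs to be tracked. These two act mode by mode. I will therefore bound each mode's energy $n_k := \langle \hat a_k^\dagger\hat a_k\rangle$ by an expression linear in $n_k$ and then sum over $k$. Linearity is what makes this work: the passive gates redistribute energy among modes, but only the total enters the next layer.

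\textbf{Squeezing step.} Using $\hat S^\dagger(r)\hat a\hat S(r) = c_r\hat a - s_r\hat a^\dagger$, we get
\begin{equation*}
\hat S^\dagger \hat a^\dagger\hat a\hat S = (c_r^2+s_r^2)\hat a^\dagger\hat a + s_r^2 - c_rs_r(\hat a^2+\hat a^{\dagger 2}).
\end{equation*}
By Cauchy--Schwarz, $|\langle\hat a^2\rangle|\le\sqrt{\langle\hat a^\dagger\hat a\rangle\langle\hat a\hat a^\dagger\rangle}\le n+\tfrac12$. Hence
\begin{equation*}
n' \le (c_r^2+s_r^2+2c_rs_r)\,n + s_r^2+c_rs_r = e^{2r}n + \tfrac{e^{2r}-1}{2} \le e^{2r}(n+1)-1 .
\end{equation*}
This bound is monotone in $r$, so it holds with $r_{ik}$ replaced by $r=\max_{i,k} r_{ik}$.

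\textbf{Displacement step.} We have $\langle \hat D^\dagger\hat a^\dagger\hat a\hat D\rangle = n' + 2\,\mathrm{Re}(\alpha^*\langle\hat a\rangle)+|\alpha|^2 \le n' + 2|\alpha|\sqrt{n'}+|\alpha|^2$. The cross term is the only nonlinear one, and it is the step I expect to need care. I would linearize it crudely using $n'\le e^{2r}(n+1)\le e^{2r}(n+1)^2$, which gives $\sqrt{n'}\le e^r(n+1)$ and so $2|\alpha|\sqrt{n'}\le 4|\alpha|e^r(n+1)$. This lossy factor is where the $4|\alpha|e^r$ in $A$ comes from. Combining the two steps, each mode satisfies
\begin{equation*}
n_k^{\rm new}\le (e^{2r}+4|\alpha|e^r)\,n_k + e^{2r}-1+4|\alpha|e^r+|\alpha|^2 = A\,n_k + A-1+|\alpha|^2 .
\end{equation*}
Summing over the $m$ modes yields the recursion above. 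The remaining task is to check the constant bookkeeping, and in particular that $|\alpha_{ik}|\le|\alpha|$ enters monotonically in every term.
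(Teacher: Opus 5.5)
Your proof is correct and follows essentially the paper's argument. The paper also strips off the number-preserving gates $\hat\chi_i,\hat U_i,\hat V_i$ and Heisenberg-evolves $\hat N+m\mathbb{I}$ through $\hat S(\bm r_i)$ and $\hat D(\bm\alpha_i)$. It bounds the cross term by Cauchy--Schwarz and linearizes it via $\sqrt{x}\le x$ for $x\ge 1$, arriving at exactly your recursion $E_i+m\le A(E_{i-1}+m)+m|\alpha|^2$, which it then unrolls.

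The differences are cosmetic. You work mode by mode with ladder operators, which gives a natural factor $2|\alpha|e^{r}$ that you then loosen to $4|\alpha|e^{r}$. The paper works with quadratures, and its factor $4$ comes from bounding the $\hat q_k$ and $\hat p_k$ cross terms separately.
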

\noindent The proof of Theorem \ref{theo:bound_energy} is given at the end of the section. 

Therefore, combining Theorem \ref{theo:bound_energy} with the standard energy-based truncation (Lemma \ref{lem:energy_cutoff}), we see that the best known result previously would imply a Fock state cutoff of $\mathcal{O}(m \exp(L))/\epsilon^2$ for an $\epsilon$-approximate description of $\ket{\psi} \in \mathcal C_L$, whereas by combining Theorem \ref{theo:exp_energy_bound_simplified} and Lemma \ref{lem:efficient_description}, and noting that $1/\log(t_L) < \mathcal{O}(e^{Lr})$ (since from Eq.~\ref{eq:choice_t}, $t_L > 1 +  (t_0-1)/(e^{2r} + t_0)^L$ with $t_0 = 1 + 2/e^{2r}$ and the fact that $1/\log(1+x) \leq 2/x$ for $x< 1$), our results show that a Fock state cutoff of $\mathcal O(\exp(L)(mL^2 +\log(4/\epsilon^2)))$ is sufficient for an $\epsilon$-approximate description of $\ket{\psi}$. Therefore, we have reduced the dependence on the desired precision from $\mathcal{O}(1/\epsilon^2)$ to $\mathcal{O}(\log(1/\epsilon^2))$, an exponential improvement. 

At this point, we note that since by Jensen's inequality $t_L^{\bra{\psi}\hat N\ket{\psi}} \leq \bra{\psi} t_L^{\hat N}\ket{\psi}$, therefore $ \bra{\psi} t_L^{\hat N}\ket{\psi} \leq t_L^{E_L}$ implies $\bra{\psi} \hat N\ket{\psi} \leq E_L$. To understand the other direction on how the bound on $t_L^{\hat N}$ over the states $\ket\psi \in \mathcal C_L$ can be thought of as an exponential energy bound, we first take the case of a single-mode coherent state $\ket{\alpha}$ as an illustration, where we see that $\bra{\alpha}s^{\hat N}\ket{\alpha} = \exp((s-1)|\alpha|^2) = s^{\frac{s-1}{\log s}|\alpha|^2}$ and $E = \bra{\alpha}\hat N\ket{\alpha} = |\alpha|^2$ . Therefore, we can write $\bra{\alpha}s^{\hat N}\ket{\alpha} = \exp((s-1)E)$ and we see that, up to a scaling factor in the exponent, the bound on $s^{\hat N}$ can be thought of as an exponential energy bound.

For the general case of states $\ket{\psi} \in \mathcal C_L$, note that the exponent of $t_L$ on the right hand side of Eq.~\ref{eq:exp_energy_bound} and the upper bound on the number operator in Eq.~\ref{eq:energy_operator_bound} follow similar behavior in terms of the scaling in $L$ and $m$ (both scale as $\mathcal O(\exp(L)\poly(m)))$. Furthermore, focusing on the most significant contribution in the exponent of $t_L$ in Eq.~\ref{eq:exp_energy_bound}, $me^{2(r+1)L}$ and that of the energy operator $m(e^{2r}+ 4|\alpha|r)^L$, we see that the boundedness of either the term in the exponent of $t_L$ or the energy would impose similar conditions on $|\alpha|$ and $r$. This suggests that the term in the exponent of $t_L$ effectively captures the state's energy growth.

From the example of the single-mode coherent states and looking at the scaling of the term in the exponent of $t_L$ and the upper bound on the energy operator, we interpret that the term in the exponent of $t_L$ in Eq.~\ref{eq:exp_energy_bound} can be thought of as the notion of energy of the state. 

\noindent We now give the proof of Theorem \ref{theo:bound_energy}:
\begin{proof}
    Denoting
    \begin{equation}
        \ket{\psi_i} = \hat V_i \hat D(\bm{\alpha}_i)\hat S(\bm{r}_i) \hat U_i \hat \chi_i\ket{\psi_{i-1}}, \forall i \in (1,\dots,L)
    \end{equation}
    $\forall i \in (1,\dots,L)$, with $\ket{\psi_0} = \ket{0}^{\otimes m}$, we see that
    \begin{eqnarray}
        \bra{\psi_{i}} \hat N + m \mathbb{I}\ket{\psi_i} &=&  \bra{\psi_{i-1}} \hat \chi_i^\dagger \hat U_i^\dagger \hat S^\dagger(\bm{r}_i) \hat D^\dagger(\bm{\alpha}_i) \hat V_i^\dagger (\hat N + m \mathbb{I}) \hat V_i \hat D(\bm{\alpha}_i) \hat S(\bm{r}_i) \hat U_i \hat \chi_i \ket{\psi_{i-1}} \nonumber \\
        &= &\bra{\psi_{i-1}} \hat \chi_i^\dagger \hat U_i^\dagger \hat S^\dagger(\bm{r}_i) \hat D^\dagger(\bm{\alpha}_i) (\hat N + m \mathbb{I}) D(\bm{\alpha}_i) \hat S(\bm{r}_i) \hat U_i \hat \chi_i \ket{\psi_{i-1}},
    \end{eqnarray}
    where we have used the fact that that energy-preserving Gaussian unitary $\hat V_i$ commutes with the number operator. Writing $\hat N = \sum_{k=1}^m (\hat q_k^2 + \hat p_k^2-1)/2$, denoting $\alpha_{ik} = (\alpha_{ik})_q + i (\alpha_{ik})_p, \forall i \in (1,\dots,L);k \in(1,\dots,m)$, and evolving the quadratures under multimode-squeezing and displacement operator, we get that
    \begin{eqnarray}\label{appeq:energy_growth}
        \bra{\psi_i} \hat N + m \mathbb{I}\ket{\psi_i} &=& \bra{\psi_{i-1}}\hat \chi_i^\dagger \hat U_i^\dagger \left(\sum_{k=1}^m \frac{(e^{r_k} \hat q_k + \sqrt{2}(\alpha_{ik})_q)^2 + (e^{-r_k} \hat p_k + \sqrt{2}(\alpha_{ik})_p)^2}{2} + \frac{m}{2} \mathbb{I}\right) \hat U_i \hat \chi_i \ket{\psi_{i-1}} \nonumber \\
        &=& \bra{\psi_{i-1}}\hat \chi_i^\dagger \hat U_i^\dagger \left(\sum_{k=1}^m \frac{e^{2r_k}\hat q_k^2 + e^{-2r_k}\hat p_k^2} {2} + \frac{m}{2} \mathbb{I}\right) \hat U_i \hat \chi_i \ket{\psi_{i-1}} + \sum_{k=1}^m |\alpha_{ik}|^2 \nonumber \\
        && +  \bra{\psi_{i-1}}\hat \chi_i^\dagger \hat U_i^\dagger \left(\sum_{k=1}^m \sqrt{2 } (\alpha_{ik})_q e^{r_k} \hat q_k + \sqrt{2} (\alpha_{ik})_p e^{-r_k} \hat p_k \right) \hat U_i \hat \chi_i \ket{\psi_{i-1}} \nonumber \\
        &\overset{(\romannumeral 1)}{\leq}& e^{2r} \bra{\psi_{i-1}}\hat \chi_i^\dagger \hat U_i^\dagger (\hat N + m \mathbb{I})\hat U_i \hat \chi_i \ket{\psi_{i-1}} + m |\alpha|^2 \nonumber \\
        && \hspace{-8mm}+  \sum_{k=1}^m \sqrt{2} e^{r_k} |(\alpha_{ik})_q| |\bra{\psi_{i-1}}\hat \chi_i^\dagger \hat U_i^\dagger \hat q_k \hat U_i \hat \chi_i \ket{\psi_{i-1}}| + \sqrt{2} e^{-r_k} |(\alpha_{ik})_p| |\bra{\psi_{i-1}}\hat \chi_i^\dagger \hat U_i^\dagger \hat p_k \hat U_i \hat \chi_i \ket{\psi_{i-1}}| \nonumber \\
        &\overset{(\romannumeral 2)}{\leq}& e^{2r} \bra{\psi_{i-1}}\hat \chi_i^\dagger \hat U_i^\dagger (\hat N + m \mathbb{I})\hat U_i \hat \chi_i \ket{\psi_{i-1}} + m |\alpha|^2 \nonumber \\
        &&+\sum_{k=1}^m 4e^{r_k} |\alpha_{ik}| \sqrt{\bra{\psi_{i-1}}\hat \chi_i^\dagger \hat U_i^\dagger \left(\frac{\hat q_k^2 +  \hat p_k^2 + \mathbb{I}}{2}\right) \hat U_i \hat \chi_i \ket{\psi_{i-1}}} \nonumber \\
        &\overset{(\romannumeral 3)}{\leq}& e^{2r} \bra{\psi_{i-1}}\hat \chi_i^\dagger \hat U_i^\dagger (\hat N + m \mathbb{I})\hat U_i \hat \chi_i \ket{\psi_{i-1}} + m |\alpha|^2 + 4e^{r} |\alpha|\sum_{k=1}^m \bra{\psi_{i-1}} \hat \chi_i^\dagger \hat U_i^\dagger (\hat N_k + 
 \mathbb{I})\hat U_i \hat \chi_i \ket{\psi_{i-1}}\nonumber \\
        &=& (e^{2r} + 4e^r |\alpha|) \bra{\psi_{i-1}}\hat \chi_i^\dagger \hat U_i^\dagger (\hat N + m \mathbb{I})\hat U_i \hat \chi_i \ket{\psi_{i-1}} + m |\alpha|^2 \nonumber \\
        &&\hspace{-10mm} \overset{(\romannumeral 4)}{=} (e^{2r} + 4e^r |\alpha|) \bra{\psi_{i-1}} \hat N + m \mathbb{I} \ket{\psi_{i-1}} + m |\alpha|^2 := A \bra{\psi_{i-1}} \hat N + m \mathbb{I}\ket{\psi_{i-1}} + B,
    \end{eqnarray}
    where in $(\romannumeral 1)$, we are using
    \begin{eqnarray}
        \bra{\psi_{i-1}}\hat \chi_i^\dagger \hat U_i^\dagger \left(\sum_{k=1}^m \frac{e^{2r_k}\hat q_k^2 + e^{-2r_k}\hat p_k^2}{2} + \frac{m}{2}\mathbb{I}\right) \hat U_i \hat \chi_i \ket{\psi_{i-1}} &\leq& e^{2r} \bra{\psi_{i-1}}\hat \chi_i^\dagger \hat U_i^\dagger \left(\sum_{k=1}^m \frac{\hat q_k^2 + \hat p_k^2}{2} + \frac{m}{2} \mathbb{I}\right)\hat U_i \hat \chi_i \ket{\psi_{i-1}}, \nonumber \\
        e^{r_k} (\alpha_{ik})_q \bra{\psi_{i-1}}\hat \chi_i^\dagger \hat U_i^\dagger \hat q_k \hat U_i \hat \chi_i \ket{\psi_{i-1}}&\leq& e^{r_k} |(\alpha_{ik})_q| |\bra{\psi_{i-1}}\hat \chi_i^\dagger \hat U_i^\dagger \hat q_k \hat U_i \hat \chi_i \ket{\psi_{i-1}}|, \nonumber \\
        e^{-r_k} (\alpha_{ik})_p \bra{\psi_{i-1}}\hat \chi_i^\dagger \hat U_i^\dagger \hat p_k \hat U_i \hat \chi_i \ket{\psi_{i-1}}&\leq&  e^{-r_k} |(\alpha_{ik})_p| |\bra{\psi_{i-1}}\hat \chi_i^\dagger \hat U_i^\dagger \hat p_k \hat U_i \hat \chi_i \ket{\psi_{i-1}}|.
    \end{eqnarray}
    In $(\romannumeral 2)$, we use the Cauchy-Schwartz inequality, $r_k \geq 0$ and $|(\alpha_{ik})_q|,|(\alpha_{ik})_p| \leq |\alpha_{ik}|$ to write
    \begin{eqnarray}
        \sqrt{2}e^{r_k} |(\alpha_{ik})_q| |\bra{\psi_{i-1}}\hat \chi_i^\dagger \hat U_i^\dagger \hat q_k \hat U_i \hat \chi_i \ket{\psi_{i-1}}| &\leq& 2e^{r_k} |\alpha_{ik}| \sqrt{\bra{\psi_{i-1}}\hat \chi_i^\dagger \hat U_i^\dagger\left( \frac{\hat q_k^2 +  \hat p_k^2 + \mathbb{I}}{2}\right)\hat U_i \hat \chi_i \ket{\psi_{i-1}}}, \nonumber \\
        \sqrt{2}e^{-r_k} |(\alpha_{ik})_p| |\bra{\psi_{i-1}}\hat \chi_i^\dagger \hat U_i^\dagger \hat p_k \hat U_i \hat \chi_i \ket{\psi_{i-1}}| &\leq& 2e^{r_k} |\alpha_{ik}| \sqrt{\bra{\psi_{i-1}}\hat \chi_i^\dagger \hat U_i^\dagger\left( \frac{\hat q_k^2 +  \hat p_k^2 + \mathbb{I}}{2}\right)\hat U_i \hat \chi_i \ket{\psi_{i-1}}}. \nonumber \\
    \end{eqnarray}
    In $(\romannumeral 3)$, we use $e^{r_k} \leq e^r$, $|\alpha_{ik}| \leq |\alpha|$, $(\hat q_k^2 + \hat p_k^2 + \mathbb{I})/2 = \hat N_k + \mathbb{I}$ and that $\sqrt{\bra{\psi_{i-1}}\hat \chi_i^\dagger \hat U_i^\dagger (\hat N_k + \mathbb{I})\hat U_i \hat \chi_i \ket{\psi_{i-1}}} \leq  \bra{\psi_{i-1}}\hat \chi_i^\dagger \hat U_i^\dagger (\hat N_k + \mathbb{I})\hat U_i \hat \chi_i \ket{\psi_{i-1}}$, since $\bra{\psi_{i-1}}\hat \chi_i^\dagger \hat U_i^\dagger (\hat N_k + \mathbb{I})\hat U_i \hat \chi_i \ket{\psi_{i-1}} \geq 1$. Finally, in $(\romannumeral 4)$, we are using the fact that the energy-preserving unitary gates $\hat U_i$ and $\hat \chi_i$ commute with the number operator, and we denote $A:= (e^{2r} + 4e^{r}|\alpha|)$ and $B:= m |\alpha|^2$. Using Eq.~\ref{appeq:energy_growth}, we write
    \begin{eqnarray}
        \bra{\psi_L} \hat N + m \mathbb{I}\ket{\psi_L} &\leq& A \bra{\psi_{L-1}} (\hat N + m \mathbb{I})\ket{\psi_{L-1}} + B \nonumber \\
        &\leq& A^2 \bra{\psi_{L-2}} (\hat N + m \mathbb{I})\ket{\psi_{L-2}} + B(1+ A) \nonumber \\
        && \vdots \nonumber \\
        &\leq& A^L \bra{\psi_0} (\hat N + m \mathbb{I})\ket{\psi_0} + B\sum_{k=1}^{L-1}A^k \nonumber \\
        &=& m A^L + B\frac{A^L - 1}{A-1},
    \end{eqnarray}
    where in the last line, we have used the fact that $\ket{\psi_{0}}$ is the $m$-mode vacuum state. Therefore, since $\ket{\psi_L} = \ket{\psi}$, we can write
    \begin{equation}
        \bra{\psi} \hat N \ket{\psi} \leq (A^L - 1)\left(m + \frac{B}{A-1}\right) = m(A^L - 1)\left(1 + \frac{|\alpha|^2}{A-1}\right).
    \end{equation}
\end{proof}
\section{Learning bosonic quantum states with exponential-energy bound}\label{appsec:precision_efficient_learning}
\noindent In this section, we prove how the exponential improvement in effective dimension of the states $\psi \in \mathcal S_{s,mE}$ (Eq.~\ref{eq:set_bounded}) via Lemma \ref{lem:efficient_description} leads to a substantial improvement in the sample complexity of learning these states, as formalized in the following Theorem:
\begin{theo}\label{apptheo:precision_efficient_learning}
  Given a state $\ket{\psi}$ belonging to the set $\mathcal S_{s,mE}$ given by Eq.~\ref{eq:set_bounded}, a number of samples 
    \begin{equation}
        \mathcal{O}\left(\frac{1}{\epsilon^2}\left(E + \frac{2\log_s(2/\epsilon)}{m}\right)^m\right),
    \end{equation}
    where $\mathcal O$ represents scaling up to log factors, is sufficient to construct the classical description of an estimator $\ket{\tilde \psi}$ such that 
    \begin{equation}
        \frac12 \|\ket{\psi}\bra{\psi} - \ket{\tilde \psi}\bra{\tilde \psi}\|_1 \leq \epsilon.
    \end{equation}
\end{theo}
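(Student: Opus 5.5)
The plan is a two-stage argument: truncate, then learn in finite dimension. First, apply Lemma \ref{lem:efficient_description} to $\ket{\psi}\in\mathcal S_{s,mE}$ with precision $\epsilon/2$. The cutoff is
\begin{equation}
    k = mE + 2\log_s(2/\epsilon),
\end{equation}
and it gives a normalized state $\ket{\psi_k}$, supported on the subspace $\mathcal H_k$ of total boson number at most $k$, with $\frac12\|\ket{\psi}\!\bra{\psi}-\ket{\psi_k}\!\bra{\psi_k}\|_1\le\epsilon/2$.

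Next, I would count the dimension of $\mathcal H_k$. It is
\begin{equation}
    d=\binom{k+m}{m}\le \frac{(k+m)^m}{m!}\le \left(\frac{e(k+m)}{m}\right)^m = e^m\left(E+1+\frac{2\log_s(2/\epsilon)}{m}\right)^m .
\end{equation}
Up to constants absorbed into the $\mathcal O$ (and treating the $e^m$, $+1$ factors as the paper's convention does in the energy-based bound of \cite{mele2024learning}), this is $\mathcal O\big((E+2\log_s(2/\epsilon)/m)^m\big)$.

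Finally, I would run a pure-state tomography procedure on this $d$-dimensional space. The procedure is the one from \cite[Theorem 1]{anshu2023surveycomplexity}, e.g.\ Wright-type pure state tomography \cite{wright2016learn}, which uses $\mathcal O(d/\epsilon^2)$ copies up to log factors. The subtlety, and the step I expect to be the main obstacle, is that we hold copies of $\ket{\psi}$, not of $\ket{\psi_k}$. I would first perform the projective measurement $\{\Pi_k,\mathbb I-\Pi_k\}$ on each copy and keep only the successful outcomes, which are exactly copies of $\ket{\psi_k}$. The success probability is $\langle\psi|\Pi_k|\psi\rangle\ge 1-\epsilon^2/4\ge 3/4$, so only a constant-factor overhead in samples is needed. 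A Chernoff bound guarantees enough successes with high probability. Alternatively, one could bound the trace distance between $\ket{\psi}^{\otimes N}$ and $\ket{\psi_k}^{\otimes N}$, but that route costs a factor $N$ and is worse, so I would use postselection.

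The learned estimate $\ket{\tilde\psi}$ then satisfies $\frac12\|\ket{\psi_k}\!\bra{\psi_k}-\ket{\tilde\psi}\!\bra{\tilde\psi}\|_1\le\epsilon/2$. The triangle inequality gives total error $\epsilon$. The sample count is $\mathcal O\big(\epsilon^{-2}(E+2\log_s(2/\epsilon)/m)^m\big)$.
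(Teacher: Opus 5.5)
Your proposal is correct and follows the paper's proof essentially step for step. Both use the cutoff $k = mE + 2\log_s(2/\epsilon)$ from Lemma~\ref{lem:efficient_description}, postselection with $\{\Pi_k,\mathbb{I}-\Pi_k\}$ at success probability at least $3/4$ (the paper cites \cite[Lemma S5]{mele2024learning} for the concentration step, using $2N'+24\log(2/\delta)$ copies), the bound $\binom{m+k}{m}\le e^m(1+k/m)^m$, Wright-type pure-state tomography on the truncated space, and a final triangle inequality. You are in fact more careful than the paper about the $e^m$ factor: the paper silently drops it by writing $e^m\left(\frac{m+k}{m}\right)^m=\left(1+E+\frac{2\log_s(2/\epsilon)}{m}\right)^m$. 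In both versions, the stated bound holds only if $e^m$ (and the $+1$) are absorbed as constants for fixed $m$.
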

\begin{proof}
   We note that the proof follows the proof steps of \cite[Theorem S35]{mele2024learning}, with an improved effective dimension. Given $2N' + 24 \log (2/\delta)$ copies of $\ket{\psi}$, where we will fix $N'$ later, we execute the POVM $\{\Pi_k,1-\Pi_k\}$, where $\Pi_k$ is the projector onto the Fock space with total boson number $\leq k$. If we choose 
   \begin{equation}
       k = mE + 2\log_s(2/\epsilon),
   \end{equation}
   then for a state $\ket{\psi} \in \mathcal S_{s,mE}$, from Lemma \ref{lem:efficient_description}, with probability
   \begin{eqnarray}
       \Tr[\Pi_k \ket{\psi}\bra{\psi}] = p_{succ} = 1- (1 - p_{succ}) \geq  1 - \frac{\epsilon^2}{4} \geq 1 - \frac14 = \frac34,
   \end{eqnarray}
   we get the post measurement state
   \begin{equation}
       \ket{\psi_k} = \frac{\Pi_k\ket{\psi}}{\Tr[\Pi_k \ket{\psi}\bra{\psi}]},
   \end{equation}
   and it follows that
   \begin{equation}
       \frac12\|\ket{\psi}\bra{\psi} - \ket{\psi_k}\bra{\psi_k}\|_1 \leq \frac{\epsilon}{2}.
   \end{equation}
   And with $p_{succ} \geq 3/4$, from \cite[Lemma S5]{mele2024learning}, we need $2N' + 24 \log(2/\delta)$ copies of $\ket{\psi}$ to obtain $N'$ copies of $\psi_k$ with the success probability $1 - \delta/2$. Since the dimension of $\ket{\psi_k}$ is given by
   \begin{equation}
     d_{eff} =  \binom{m + k}{m} \leq e^m \left(\frac{m + k}{m}\right)^m = \left(1 + E + \frac{2\log_s(2/\epsilon)}{m}\right)^m,
   \end{equation}
    where we have used $\binom{a}{b} \leq (ea/b)^b$ for $a \geq b > 0$, then from \cite{wright2016learn,mele2024learning}, using 
   \begin{equation}
       N' \geq  \frac{2^{20}}{\epsilon^2} \left(1 + E + \frac{2\log_s(2/\epsilon)}{m}\right)^m \log\left(\frac{4}{\delta}\right),
   \end{equation}
   copies of $\ket{\psi_k}$, we can build the classical description of an estimator $\ket{\tilde{\psi}}$ such that
   \begin{equation}
       \frac12\|\ket{\psi_k}\ket{\psi_k} - \ket{\tilde\psi}\bra{\tilde{\psi}}\|_1 \leq \frac{\epsilon}{2}
   \end{equation}
   with success probability $ \geq 1 - \delta/2$
   Putting everything together and using the triangle inequality, we see that using 
   \begin{equation}
       \frac{2^{21}}{\epsilon^2} \left(1 + E + \frac{2\log_s(2/\epsilon)}{m}\right)^m \log\left(\frac{4}{\delta}\right) + 24 \log\left(\frac{2}{\delta}\right)
   \end{equation}
   copies of $\ket{\psi}$, we can we can build the classical description of an estimator $\ket{\tilde{\psi}}$ such that
   \begin{equation}
      \frac12 \|\ket{\psi}\ket{\psi} - \ket{\tilde\psi}\bra{\tilde{\psi}}\|_1 \leq \epsilon
   \end{equation}
   with success probability $\geq 1 - \delta$. We keep the most significant scalings in the statement of the Theorem.
\end{proof}
\section{Classical simulation by tracking Fock state amplitudes}\label{appsec:Fock_space_simulation}
\noindent The exponential improvement in effective descriptions of states prepared by Gaussian and energy-preserving non-Gaussian gates allows for efficient classical simulation of a certain class of these circuits by simply tracking the Fock space amplitudes. We illustrate one such simulation problem here, given by the formal version of Theorem \ref{theo:Fock_state_simulation} of the main text, as well as illustrate why efficient classical simulation using this method would not have been possible if we were the using the naive energy based truncations (Lemma \ref{lem:energy_cutoff}):
\begin{theo}
  Let $n$ be a size parameter. Given an $m$-mode vacuum state $\ket{0}^{\otimes m}$ evolving through the unitary gate
    \begin{equation}
        \hat U = \hat G_L \hat \chi_L \dots \hat G_1 \hat \chi_1,
    \end{equation}
    where $\hat G_1, \dots, \hat G_L$ are arbitrary Gaussian gates, and $\hat \chi_1, \dots, \hat \chi_L$ are energy-preserving non-Gaussian gates such that the analytical formula for the action of $\chi_1,\dots,\chi_L$ on Fock states can be computed in time at most exponential in number of bosons in the Fock state. Then, given that $m = \mathcal O(\log(n)), L = \mathcal O(\poly(n))$ and there exists a constant $s > 1$ such that $s^{\hat N}$ is bounded by $\mathcal{O}(\poly(n))$ throughout the computation,  then the probabilities generated by Gaussian state projections on the first $k$ mode of the output state
    \begin{equation}
        P = \left|\bra{G_1\dots G_k}\otimes \mathbb{I}_{m-k} \hat U \ket{0}^{\otimes m}\right|^2
    \end{equation}
    can be estimated up to $1/\mathcal{O}(\poly(n))$ additive precision in time $\mathcal O(\poly(n))$, which allows for approximate strong simulation and in particular, approximate sampling.
\end{theo}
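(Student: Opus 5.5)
The plan is to simulate the circuit layer by layer, storing at each step a truncated Fock-basis vector of the current state, and to control the total error with Lemma~\ref{lem:efficient_description} and the triangle inequality. By hypothesis there is a constant $s>1$ with $\bra{\psi_i}s^{\hat N}\ket{\psi_i}\le s^{E}$ at every intermediate state $\ket{\psi_i}=\hat G_i\hat\chi_i\ket{\psi_{i-1}}$ (and after each $\hat\chi_i$, since $\hat\chi_i$ commutes with $\hat N$), where $s^E=\mathcal O(\poly(n))$, i.e.\ $E=\mathcal O(\log n)$. The paper provides Theorem~\ref{theo:exp_energy_bound_simplified} as a sufficient condition for such a bound. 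Take the per-step target error $\epsilon'=\epsilon/(2L+1)$ with $\epsilon=1/\poly(n)$. Lemma~\ref{lem:efficient_description} then gives a cutoff $k=E+2\log_s(1/\epsilon')=\mathcal O(\log n)$. The truncated space has dimension $d=\binom{m+k}{m}\le (e(m+k)/m)^m$, which is $\exp(\mathcal O(\log n\cdot\log\log n))$ in general. To get polynomial dimension I would bound it more carefully: with $m=c\log n$ and $k=\mathcal O(\log n)$, $\binom{m+k}{m}\le 2^{m+k}=\poly(n)$. This is the key counting step.

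The algorithm is as follows. Start from $\ket{0}^{\otimes m}$. At each layer, apply $\hat\chi_i$ on the truncated space. Because $\hat\chi_i$ preserves the total boson number, it acts block-diagonally on the fixed-$|\bm n|$ sectors. By assumption its action on each Fock state with at most $k$ bosons is computable in time $e^{\mathcal O(k)}=\poly(n)$, so this step introduces no truncation error. Then apply $\hat G_i$ via the Bloch--Messiah decomposition: compute the Fock matrix elements $\bra{\bm j}\hat G_i\ket{\bm n}$ for $|\bm n|,|\bm j|\le k$. These factor into passive, squeezing and displacement parts and are given by loop-hafnian or recursive formulas (e.g.\ the recursions of \cite{yao2024}, or single-mode closed forms combined with the interferometer's permanent-type expressions over $\le k$ bosons). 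Each costs at most $e^{\mathcal O(k)}$, so the whole $d\times d$ block costs $\poly(n)$. Finally project onto $|\bm j|\le k$ and renormalise.

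For the error analysis, let $\ket{\phi_i}$ be the stored vector. By unitary invariance and the triangle inequality, $D(\phi_i,\psi_i)\le D(\phi_{i-1},\psi_{i-1})+D(\Pi_k\hat G_i\hat\chi_i\phi_{i-1},\hat G_i\hat\chi_i\phi_{i-1})$. The last term requires an exponential-energy bound for $\hat G_i\hat\chi_i\ket{\phi_{i-1}}$, not for the exact $\ket{\psi_i}$. This is the main obstacle. I would handle it with the gentle-measurement form of the truncation: for nearby pure states, $\|(1-\Pi_k)\phi\|\le\|(1-\Pi_k)\psi_i\|+\|\phi-\psi_i\|$. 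This bounds the truncation loss by $\epsilon'$ plus the already-accumulated error, at the cost of at most a constant factor. If that factor would compound, I would instead compare directly with the exact states: write $\phi_i$ as the normalised $\Pi_k\hat G_i\hat\chi_i\phi_{i-1}$ and show $\|\phi_i-\psi_i\|\le\|\phi_{i-1}-\psi_{i-1}\|+2\|(1-\Pi_k)\psi_i\|$, which grows only additively. Either way the total error is $\mathcal O(L\epsilon')=\mathcal O(\epsilon)$.

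Finally, given the Fock description of the output on $\poly(n)$ amplitudes, the Gaussian-projection probability $P$ reduces to overlaps of Gaussian states with Fock superpositions. These are computed by \cite[Theorem 2]{chabaud2021corestates} in time polynomial in $d$. Trace-distance closeness translates into additive error $\mathcal O(\epsilon)$ on $P$ and on marginals, which gives approximate strong simulation. Sampling then follows by the standard chain-rule discretisation of the marginals.
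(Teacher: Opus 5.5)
This is essentially the paper's proof: layer-by-layer Fock truncation with the cutoff of Lemma~\ref{lem:efficient_description} applied to the exact states $\ket{\psi_i}$, polynomial dimension because $m,k=\mathcal O(\log n)$, loop-hafnian matrix elements for $\hat G_i$, and \cite[Theorem 2]{chabaud2021corestates} at the end. Your bound $\binom{m+k}{m}\le 2^{m+k}$ is the correct one; the paper's $\binom{m+k}{k}\le\binom{2m}{m}$ tacitly assumes $k\le m$.

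The one step you assert rather than prove is the additive error recursion, and it is the crux. Your option (a) does not suffice. It only yields $D_i\le(1+c)D_{i-1}+\epsilon'$ with $c\ge1$, so the error bound grows like $(1+c)^L\epsilon'\ge 2^L\epsilon'$. For $L=\poly(n)$ this forces $k=\Omega(L)$, and then the $e^{\mathcal O(k)}$ costs of the loop hafnians and of the $\hat\chi_i$ become superpolynomial.

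Option (b) is what the paper proves. The missing argument is the following. Let $D$ be the trace distance, $\ket{\eta}=\hat G_i\hat\chi_i\ket{\phi_{i-1}}$, and $\ket{\tilde\psi_i}=\Pi_k\ket{\psi_i}/\|\Pi_k\ket{\psi_i}\|$. Since $\ket{\tilde\psi_i}$ lies in the range of $\Pi_k$, one has $|\braket{\tilde\psi_i|\Pi_k\eta}|=|\braket{\tilde\psi_i|\eta}|$. Since $\|\Pi_k\ket{\eta}\|\le1$, the renormalised projection $\ket{\phi_i}$ then satisfies $|\braket{\tilde\psi_i|\phi_i}|\ge|\braket{\tilde\psi_i|\eta}|$, i.e.\ $D(\phi_i,\tilde\psi_i)\le D(\eta,\tilde\psi_i)$. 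Two triangle inequalities through $\ket{\tilde\psi_i}$, together with unitary invariance $D(\eta,\psi_i)=D(\phi_{i-1},\psi_{i-1})$, give $D(\phi_i,\psi_i)\le D(\phi_{i-1},\psi_{i-1})+2\sqrt{s^E/s^k}$. This uses the exponential-energy bound only on the exact states, which is exactly what the hypothesis provides.
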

\begin{proof}
   The algorithm works by truncating the state produced by the action of the Gaussian gate on the input finite superposition of Fock states up to a finite boson number cutoff with truncation error quantified by Lemma \ref{lem:efficient_description}. We compute exactly the amplitudes of the truncated state using ideas from \cite{yao2024} and \cite{bjorklund2019fasthafnian}. 

    We first quantify the trace distance error introduced by doing a truncation whenever we encounter the Gaussian gate. We denote
    \begin{equation}
        \ket{\psi_i} = \hat G_i \hat \chi_i \ket{\psi_{i-1}}, \forall i \in (1,\dots,L)
    \end{equation}
    with $\ket{\psi_0} = \ket{0}^{\otimes m}$. Suppose that we have approximated $ \ket{\psi_{i-1}}$ with a normalized finite superposition of Fock states $\ket{\phi_{i-1}}$ (Note that $\ket{\phi_0}$ is the $m$-mode vacuum state). We want to estimate the trace distance between $\ket{\psi_i}$ and normalized truncation of $\hat G_i \hat{\chi}_i \ket{\phi_{i-1}}$ up to total boson number $k$.
    \begin{align}\label{appeq:Gauss_truncation_error}
        &\frac12\left\|\ket{\psi_i}\bra{\psi_i} - \frac{\Pi_k \hat G_i \hat{\chi}_i \ket{\phi_{i-1}} \bra{\phi_{i-1}}\hat \chi_i^\dagger \hat G_i^\dagger \Pi_k}{\Tr[\Pi_k \hat G_i \hat{\chi}_i \ket{\phi_{i-1}} \bra{\phi_{i-1}}\hat \chi_i^\dagger \hat G_i^\dagger]}\right\|_1 \nonumber \\ &\hspace{10mm}\overset{(\romannumeral 1)}{\leq} \frac12\left\|\ket{\psi_i}\bra{\psi_i} - \frac{\Pi_k \ket{\psi_{i}}\bra{\psi_{i}} \Pi_k}{\Tr[\Pi_k \ket{\psi_{i}}\bra{\psi_{i}}]}\right\|_1 + \frac12 \left\|\frac{\Pi_k \ket{\psi_{i}}\bra{\psi_{i}} \Pi_k}{\Tr[\Pi_k \ket{\psi_{i}}\bra{\psi_{i}}]} - \frac{\Pi_k \hat G_i \hat{\chi}_i \ket{\phi_{i-1}} \bra{\phi_{i-1}}\hat \chi_i^\dagger \hat G_i^\dagger \Pi_k}{\Tr[\Pi_k \hat G_i \hat{\chi}_i \ket{\phi_{i-1}} \bra{\phi_{i-1}}\hat \chi_i^\dagger \hat G_i^\dagger]} \right\|_1 \nonumber \\
        &\hspace{10mm}\overset{(\romannumeral 2)}{\leq}\sqrt{\frac{S}{s^k}} + \frac12 \left\|\frac{\Pi_k \ket{\psi_{i}}\bra{\psi_{i}} \Pi_k}{\Tr[\Pi_k \ket{\psi_{i}}\bra{\psi_{i}}]} - \hat G_i \hat{\chi}_i \ket{\phi_{i-1}} \bra{\phi_{i-1}}\hat \chi_i^\dagger \hat G_i^\dagger \right\|_1 \nonumber \\
        &\hspace{10mm}\overset{(\romannumeral 3)}{\leq}\sqrt{\frac{S}{s^k}} + \frac12\left\| \frac{\Pi_k \ket{\psi_{i}}\bra{\psi_{i}} \Pi_k}{\Tr[\Pi_k \ket{\psi_{i}}\bra{\psi_{i}}]} - \ket{\psi_i}\bra{\psi_i}\right\|_1 + \frac12\left\|\ket{\psi_i}\bra{\psi_i} - \hat G_i \hat{\chi}_i \ket{\phi_{i-1}} \bra{\phi_{i-1}}\hat \chi_i^\dagger \hat G_i^\dagger\right\|_1 \nonumber \\
        & \hspace{10mm} \overset{(\romannumeral 4)}{\leq} 2 \sqrt{\frac{S}{s^k}} + \frac12\left\|\ket{\psi_{i-1}}\bra{\psi_{i-1}} -  \ket{\phi_{i-1}} \bra{\phi_{i-1}}\right\|_1,
    \end{align}
    where (\romannumeral 1) follows from the triangle inequality, (\romannumeral 2) follows from $\bra{\psi_i}s^{\hat N}\ket{\psi_i} < S$ combined with  Lemma \ref{lem:efficient_description} for the first term and the fact that
    \begin{eqnarray}
         \hspace{-15mm}\frac12\left\|\frac{\Pi_k \ket{\psi_{i}}\bra{\psi_{i}} \Pi_k}{\Tr[\Pi_k \ket{\psi_{i}}\bra{\psi_{i}}]} - \frac{\Pi_k \hat G_i \hat{\chi}_i \ket{\phi_{i-1}} \bra{\phi_{i-1}}\hat \chi_i^\dagger \hat G_i^\dagger \Pi_k}{\Tr[\Pi_k \hat G_i \hat{\chi}_i \ket{\phi_{i-1}} \bra{\phi_{i-1}}\hat \chi_i^\dagger \hat G_i^\dagger]} \right\|_1 &=& \sqrt{1 - \frac{|\braket{\psi_i^k|\phi_{i-1}^{G_i,\chi_i,k}}|^2}{\Tr[\Pi_k \ket{\psi_{i}}\bra{\psi_{i}}] \Tr[\Pi_k \hat G_i \hat{\chi}_i \ket{\phi_{i-1}} \bra{\phi_{i-1}}\hat \chi_i^\dagger \hat G_i^\dagger]}}\nonumber \\ &\leq& \sqrt{1 - \frac{|\braket{\psi_i^k|\phi_{i-1}^{G_i,\chi_i,k}}|^2}{\Tr[\Pi_k \ket{\psi_{i}}\bra{\psi_{i}}] }} \nonumber \\
         &=& \frac12 \left\|\frac{\Pi_k \ket{\psi_{i}}\bra{\psi_{i}} \Pi_k}{\Tr[\Pi_k \ket{\psi_{i}}\bra{\psi_{i}}]} - \hat G_i \hat{\chi}_i \ket{\phi_{i-1}} \bra{\phi_{i-1}}\hat \chi_i^\dagger \hat G_i^\dagger \right\|_1 \nonumber \\
    \end{eqnarray}
   for the second term, where we have denoted $\ket{\psi_{i}^k} := \Pi_k\ket{\psi_i}$ and $\ket{\phi_{i-1}^{G_i,\chi_i,k}} := \Pi_k \hat G_i \hat \chi_i \ket{\phi_{i-1}}$ and used the fact that $\Tr[\Pi_k \hat G_i \hat{\chi}_i \ket{\phi_{i-1}} \bra{\phi_{i-1}}\hat \chi_i^\dagger \hat G_i^\dagger] \leq 1$. Inequality (\romannumeral 3) again follows from the triangle inequality and finally, inequality (\romannumeral 4) follows from $\bra{\psi_i}s^{\hat N}\ket{\psi_i} < S$, Lemma \ref{lem:efficient_description} and the fact that unitaries do not change the trace distance. 

    From Eq.~\ref{appeq:Gauss_truncation_error}, if we make a Fock space truncation with $k = \log_s(S/\epsilon^2)$ each time we encounter a Gaussian gate, the output state is $2\epsilon L$ close in trace distance to the actual state. We set  $\epsilon = \epsilon'/L$ with $\epsilon' = 1/\mathcal O(\poly(n))$ so that the output state of the circuit is $1/O(\poly(n))$ close in trace distance to the actual state.

    Taking the description of $\ket{\phi_{i-1}}$ as the normalized superposition of $\mathcal O(\poly(n))$ Fock states with maximum boson number $k_{i-1} = \mathcal O(\log(\poly(n)))$ (the rest of the proof will clarify why this is the case), then given that the action of energy-preserving non-Gaussian gates on the Fock states can be described in time at most $e^{k_{i-1}} = \mathcal O(\poly(n))$ and since passive non-Gaussian gates do not change the boson number of a given Fock state and $\ket{\phi_{i-1}}$ is a finite superposition of Fock states with total boson number $\mathcal O(\log(\poly(n)))$ and number of terms $\mathcal O(\poly(n))$, the description of
    \begin{equation}
        \hat \chi_i \ket{\phi_{i-1}} = \sum_{|\bm q|\leq Q} c_{\bm m} \ket{\bm q}
    \end{equation}
    with $Q = \mathcal O(\log(\poly(n)))$ can be given in time $\mathcal O(\poly(n))$.
    
    To describe the action of the Gaussian unitary $\hat G_i$, from the discussion following Eq.~\ref{appeq:Gauss_truncation_error}, we first truncate $\hat G_i \hat \chi_i \ket{\phi_{i-1}}$ up to total boson number 
    \begin{equation}
        k = \log_s(S L^2/\epsilon'^2) = \mathcal O(\log(\poly(n)))
    \end{equation}
    to ensure that the approximated output state is $\epsilon' = 1/\mathcal O(\poly(n))$ close in trace distance to the actual state. Since $m,k = \mathcal O(\log(\poly(n)))$, the effective dimension (and equivalently, the number of terms) of $\Pi_k \hat G_i \hat \chi_i \ket{\phi_{i-1}}$ is 
    \begin{equation}
        d = \binom{m + k}{k} \leq \binom{2m}{m} \leq 4^m = \mathcal O(\poly(n)).
    \end{equation}
    Writing
    \begin{equation}
        \Pi_k\hat G_i \hat \chi_i \ket{\phi_{i-1}} = \sum_{|\bm p| \leq k} a_{\bm p} \ket{\bm p},
    \end{equation}
    with
    \begin{equation}
        a_{\bm p} = \sum_{|\bm q| \leq M} c_{\bm q} \bra{\bm p} \hat G_i \ket{\bm q}.
    \end{equation}
    Now, from \cite[Eq.~98]{yao2024}, $ \bra{\bm p} \hat G_i \ket{\bm q}$ is, up to a constant depending on the Gaussian unitary, the loop Hafnian of $|\bm p| + |\bm q|$ matrix whose entries depends on the unitary $\hat G_i$, divided by $p_1!p_2! \dots p_m! q_1! \dots q_m!$. In our case, $|\bm p|,|\bm q| = \mathcal O(\log(\poly(n)))$ and therefore $|\bm p| + |\bm q| = \mathcal O(\log(\poly(n)))$. From \cite{Borwein1985}, $i!$ can be computed in time $\mathcal O(i)$ (up to $\log$ factors) and therefore $p_1!p_2! \dots p_m! q_1! \dots q_m!$ can be computed in time $\mathcal O(m |\bm p|) = \mathcal O(\log(\poly(n)))$. Furthermore, from \cite{bjorklund2019fasthafnian}, the loop Hafnian of a matrix of size $n$ can be calculated in time $\mathcal O(n^3 2^{n/2})$, therefore $\bra{\bm p} \hat G_i \ket{\bm q}$ can be calculated in time $\mathcal O(\poly(n))$. Since $ a_{\bm p} = \sum_{|\bm q| \leq Q} c_{\bm q} \bra{\bm p} \hat G_i \ket{\bm q}$ with the sum having $d = \mathcal O(\poly(n))$ terms, therefore the time complexity to calculate $a_{\bm p}$ is given by $\mathcal O(\poly(n))$. For getting the description of $\Pi_k \hat G_i \hat \chi_i \ket{\phi_{i-1}}$, we need to calculate $d = \mathcal O(\poly(n))$ such $a_{\bm p}$'s, hence we can get the description of $\Pi_k \hat G_i \hat \chi_i \ket{\phi_{i-1}}$ in time $\mathcal{O}(\poly(n))$. And once we have this description, the normalization constant $\mathcal N = \sum_{|\bm p|\leq k} |a_{\bm p}|^2$ can be computed in time $\mathcal O(\poly(n))$ (since we have $\mathcal O(\poly(n))$ terms). We write $\ket{\phi_i} = \Pi_k \hat G_i \hat \chi_i \ket{\phi_{i-1}} /\sqrt{\mathcal N}$.

    Repeating this over the $L$ layers, we can write the description of the approximated output state $\ket{\tilde \psi}$ as the Fock state superposition with effective dimension $d = \mathcal O(\poly(n))$ and maximum boson number $k = \mathcal O(\log(\poly(n)))$ in time $\mathcal O(L\poly(n)) = \mathcal O(\poly(n))$ such that  $\ket{\tilde \psi}$ is $1/\mathcal{O}(\poly(n))$ close in trace distance to the actual output state $\ket \psi$. Finally, from \cite[Theorem 2]{chabaud2021corestates}, we can compute 
    \begin{equation}
        \tilde P = \left|\bra{G_1\dots G_k}\otimes \mathbb{I}_{m-k} \ket{\tilde \psi}\right|^2
    \end{equation}
    in time $\mathcal O(d^2 k^3 2^k + \poly(m)) = \mathcal O(\poly(n))$ (note that \cite[Theorem 2]{chabaud2021corestates} is for heterodyne measurements, but any single-mode Gaussian measurement can be written as a single-mode Gaussian gate followed by heterodyne measurement). And by the property of the trace distance, $\tilde P$ approximates 
    \begin{equation}
        P = \left|\bra{G_1\dots G_k}\otimes \mathbb{I}_{m-k} \ket{\psi}\right|^2
    \end{equation}
    up to additive precision $1/\mathcal{O}(\poly(n))$.
\end{proof}
\noindent To understand why the same method would not work efficient using the naive cutoff based on energy, we first note that by Jensen's inequality $s^{\bra{\psi}\hat N \ket{\psi}} \leq \bra{\psi}s^{\hat N}\ket{\psi}$ for all states $\ket{\psi}$, therefore a bound of $\mathcal O(\poly(n))$ on the exponential-energy throughout the computation would imply a bound of $\mathcal O(\log(\poly(n)))$ on the energy throughout the computation. Using the cutoff based on Energy in Eq.~\ref{appeq:Gauss_truncation_error}, we get that
\begin{equation}
    \frac12\left\|\ket{\psi_i}\bra{\psi_i} - \frac{\Pi_k \hat G_i \hat{\chi}_i \ket{\phi_{i-1}} \bra{\phi_{i-1}}\hat \chi_i^\dagger \hat G_i^\dagger \Pi_k}{\Tr[\Pi_k \hat G_i \hat{\chi}_i \ket{\phi_{i-1}} \bra{\phi_{i-1}}\hat \chi_i^\dagger \hat G_i^\dagger ]}\right\|_1 \leq 2\sqrt{\frac{E}{k}} + \|\ket{\psi_{i-1}}\bra{\psi_{i-1}} - \ket{\phi_{i-1}}\bra{\phi_{i-1}}\|_1.
\end{equation}
Therefore, to ensure that the trace distance after each layer increases by only $\epsilon'/L$, with $L = \mathcal{O}(\poly(n))$ and $\epsilon' = \mathcal O(1/\poly(n))$, we choose the boson number cutoff
\begin{equation}
    k = EL^2/\epsilon'^2 = \mathcal O(\poly(n)).
\end{equation}
that gives the effective dimension
\begin{equation}
    \binom{m+k}{m} = \mathcal{O}(\exp^{m + m\log(1 + k /m)}) = \mathcal O(\poly(n)^{\log(1 + \poly(n))}).
\end{equation}
The super-polynomial effective dimension of the state throughout the computation for the required inverse polynomial precision means that we cannot simple track the description of the state throughout the computation even approximately because we would have to calculate a superpolynomial number of terms after each layer, and this makes the classical simulation using the naive energy cutoff inefficient. 
\section{Decomposition of Kerr gates}\label{appsec:kerr_decomp}
In this section, we give the following Lemma about the decomposition of self-Kerr and cross-Kerr gates, first proved in \cite[Eqs. 8-12]{Jun-wei1996} for the self-Kerr gates and here we generalize it to cross-Kerr gates:
\begin{lem}\label{applem:kerr_decomp}
    Given integers $p\in \Z, q \in \Z_+^*$, if $q$ is even, then the self-Kerr gate $\hat \kappa(- p/q)$ can be expressed as
        \begin{equation}
    \hat \kappa\left(-\frac{p}{q}\right) = \sum_{j=0}^{q-1} g_{j,p,q}^{(e)} \exp\left(-i \frac{2\pi j}{q} \hat N_1\right),
\end{equation}
where
\begin{equation}\label{appeq:coeff_even}
    g_{j,p,q}^{(e)} = \frac{1}{q} \sum_{n=0}^{q-1} \exp\left(\frac{2\pi i j}{q} n\right) \exp\left(-\frac{i \pi p}{q}n^2\right).
\end{equation}
On the other hand, if $q$ is odd, then the self-Kerr gate $\hat \kappa(-p/q)$ can be expressed as
\begin{equation}
    \hat\kappa\left(-\frac{p}{q}\right) = \sum_{j=0}^{q-1} g_{j,p,q}^{(o)} \exp\left(-i \frac{\pi}{q} (2j+p) \hat N_1\right),
\end{equation}
where
\begin{equation}\label{appeq:coeff_odd}
    g_{j,p,q}^{(o)} = \frac{1}{q} \sum_{n=0}^{q-1} \exp\left(\frac{2\pi i j}{q} n\right) \exp\left(-\frac{i \pi p}{q}n(n-1)\right).
\end{equation}
Furthermore, for all integers $p\in \Z, q \in \Z_+^*$ the cross-Kerr gate can be decomposed as 
\begin{equation}
    \hat{c\kappa} \left(\frac{p}{q}\right) = \sum_{j,k = 0}^{q-1} g_{j,k,p,q} \exp\left(i \frac{2\pi}{q}(j \hat N_1 + k \hat N_2)\right),
\end{equation}
where 
\begin{equation}
    g_{j,k,p,q} = \frac{1}{q^2} \sum_{m,n=0}^{q-1} \exp\left(i\frac{2\pi p}{q} mn\right) \exp\left(-i \frac{2\pi}{q}(jm+kn)\right).
\end{equation}
\end{lem}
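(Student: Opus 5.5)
Since all operators involved are diagonal in the Fock basis, I will prove each identity by checking it eigenvalue by eigenvalue. On $\ket{n}$ the self-Kerr gate $\hat\kappa(-p/q)$ acts as the scalar $\exp(-i\pi p n^2/q)$, and a phase shifter $\exp(-i\theta \hat N_1)$ acts as $e^{-i\theta n}$. So the operator identity reduces to a scalar identity for every $n\in\N$. That scalar identity follows from discrete Fourier inversion, once I show the Kerr phase is $q$-periodic in $n$, possibly after factoring out a linear phase.

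\textbf{Even $q$.} Let $h(n)=\exp(-i\pi p n^2/q)$. Then
\begin{equation}
h(n+q)/h(n)=\exp(-i\pi p(2nq+q^2)/q)=\exp(-2\pi i p n)\exp(-i\pi p q)=1,
\end{equation}
because $pq$ is even. So $h$ is a function on $\Z_q$ and admits the expansion $h(n)=\sum_{j=0}^{q-1}\hat h_j e^{-2\pi i jn/q}$ with $\hat h_j=\frac1q\sum_{n'=0}^{q-1}h(n')e^{2\pi i jn'/q}$. This $\hat h_j$ is exactly $g^{(e)}_{j,p,q}$ in Eq.~\ref{appeq:coeff_even}. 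Substituting $n\to\hat N_1$ gives the claimed decomposition. Inversion itself follows from $\frac1q\sum_j e^{2\pi i j(n'-n)/q}=\delta_{n\equiv n' \bmod q}$ together with periodicity.

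\textbf{Odd $q$.} Here $pq$ may be odd, so $h$ is only anti-periodic when $p$ is odd. I will therefore write $n^2=n(n-1)+n$, so that $h(n)=e^{-i\pi pn/q}\,\tilde h(n)$ with $\tilde h(n)=\exp(-i\pi p n(n-1)/q)$. Now $n(n-1)$ is always even, and
\begin{equation}
(n+q)(n+q-1)-n(n-1)=2nq+q(q-1),
\end{equation}
where $q(q-1)$ is even. Hence $\tilde h(n+q)=\tilde h(n)$. Fourier-expanding $\tilde h$ as above gives coefficients $g^{(o)}_{j,p,q}$ from Eq.~\ref{appeq:coeff_odd}. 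Multiplying back by $e^{-i\pi pn/q}$ combines the phases into $\exp(-i\pi(2j+p)n/q)$, which is the stated form. (Splitting off the linear factor also works for any parity of $q$, but the even case is stated separately.)

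\textbf{Cross-Kerr.} On $\ket{n_1,n_2}$ the eigenvalue is $F(n_1,n_2)=\exp(2\pi i p n_1n_2/q)$. Shifting $n_1\to n_1+q$ multiplies this by $e^{2\pi i p n_2}=1$, and similarly for $n_2$, so $F$ is a function on $\Z_q\times\Z_q$ for every $q$. A two-dimensional DFT gives $F(n_1,n_2)=\sum_{j,k}g_{j,k,p,q}e^{2\pi i(jn_1+kn_2)/q}$ with $g_{j,k,p,q}=\frac1{q^2}\sum_{m,n}F(m,n)e^{-2\pi i(jm+kn)/q}$, which matches the stated coefficients. Replacing $n_1,n_2$ by $\hat N_1,\hat N_2$ gives the operator identity, since everything is a function of commuting number operators.

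The only real obstacle is the parity bookkeeping in the self-Kerr case: verifying the exact periodicity, and choosing the correct linear phase to factor out when $q$ is odd so that the resulting shifts $(2j+p)\pi/q$ match the statement and the signs in $g^{(o)}$ come out right. The rest is routine DFT inversion. Convergence is not an issue, because the sums are finite and each identity holds on every Fock basis vector, hence on the whole space as bounded operators.
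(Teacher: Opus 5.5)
Your route is the same as the paper's: reduce to Fock-basis eigenvalues, show the Kerr phase is $q$-periodic (after factoring out $e^{-i\pi p n/q}$ when $q$ is odd), and invert a discrete Fourier transform. The even-$q$ and cross-Kerr parts are correct. Your explicit split $n^2=n(n-1)+n$ is also clearer than the paper's one-line remark about shifting a linear phase.

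The justification of periodicity in the odd case is wrong, however. For your $\tilde h(n)=\exp(-i\pi p\,n(n-1)/q)$ the shift factor is
\[
\frac{\tilde h(n+q)}{\tilde h(n)}=\exp\Big(-\frac{i\pi p}{q}\big(2nq+q(q-1)\big)\Big)=e^{-2\pi i p n}\,e^{-i\pi p(q-1)}=(-1)^{p(q-1)} .
\]
What you need is that $q-1$ is even, i.e.\ that $q$ is odd. The fact that $q(q-1)$ is even holds for every $q$ and is irrelevant, because the factor $q$ cancels against the denominator; the evenness of $n(n-1)$ is irrelevant too.

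Consequently your parenthetical claim that splitting off the linear factor works for either parity of $q$ is false. For $p=1$, $q=2$ one gets $\tilde h(0)=1$ but $\tilde h(2)=e^{-i\pi}=-1$. In general, $h$ is $q$-periodic iff $pq$ is even, and $\tilde h$ is $q$-periodic iff $p(q-1)$ is even. So for odd $p$ exactly one of the two expansions is available, depending on the parity of $q$. That is why the lemma treats the cases separately, and the paper's proof tracks this through the factors $(-1)^{pq}$ and $(-1)^{p(q-1)}$. With this correction your argument is complete.
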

\begin{proof}
First, we note that
\begin{eqnarray}
    \exp\left(-i\pi \frac{p}{q} (\hat N_1+q)^2\right) &=&  (-1)^{pq} \exp\left(-i\pi \frac{p}{q} \hat N_1^2\right), \nonumber \\
    \exp\left(-i\pi \frac{p}{q} (\hat N_1+ q) (\hat N_1 + q - 1)\right) &=& (-1)^{p(q-1)} \exp\left(-i\pi \frac{p}{q} \hat N_1 (\hat N_1 - 1)\right).
\end{eqnarray}
Therefore, $\exp\left(-i\pi \frac{p}{q} \hat N_1^2\right) \left(\exp\left(-i\pi \frac{p}{q} \hat N_1 (\hat N_1 - 1)\right)\right)$ is periodic with a period of $q$ for even (odd) $q$. We use this periodicity of these self-Kerr gates to express them as a sum of phase-shifters periodic with respect to $q$:
\begin{eqnarray}
    \exp\left(-i\pi \frac{p}{q} \hat N_1^2\right) &=& \sum_{j=0}^{q-1} g_{j,p,q}^{(e)} \exp\left(-i \frac{2\pi j}{q} \hat N_1\right), \nonumber \\
    \exp\left(-i\pi \frac{p}{q} \hat N_1 (\hat N_1 - 1)\right) &=& \sum_{j=0}^{q-1} g_{j,p,q}^{(o)} \exp\left(-i \frac{2\pi j}{q} \hat N_1\right).
\end{eqnarray}
To find the coefficients $g_j^{(e)}$ and $g_j^{(o)}$, we do the inverse Fourier transform of the above equation and use
\begin{equation}\label{eq:kron_delta}
    \sum_{j=0}^{q-1} \exp\left(-\frac{2\pi i j}{q} N\right) = q \delta_{N,0}
\end{equation}
to get
\begin{eqnarray}
    g_{j,p,q}^{(e)} &=& \frac{1}{q} \sum_{n=0}^{q-1} \exp\left(\frac{2\pi i j}{q} n\right) \exp\left(-\frac{i \pi p}{q}n^2\right), \nonumber\\
    g_{j,p,q}^{(o)} &=& \frac{1}{q} \sum_{n=0}^{q-1} \exp\left(\frac{2\pi i j}{q} n\right) \exp\left(-\frac{i \pi p}{q}n(n-1)\right).
\end{eqnarray}
In the case of $pq$ being odd, we shift $\exp(+i\pi \hat N/p)$ on the right hand to obtain the expression for the cross-Kerr gates.

\noindent For the cross-Kerr gates, we note that
    \begin{eqnarray}
        \exp\left(i2\pi \frac{p}{q} (\hat N_1+q) \otimes \hat N_2\right) &=& \exp\left(i2\pi \frac{p}{q} \hat N_1 \otimes \hat N_2\right) \exp\left(i 2\pi p \hat N_2\right) = \exp\left(i2\pi \frac{p}{q} \hat N_1 \otimes \hat N_2\right), \nonumber \\
        \exp\left(i2\pi \frac{p}{q} \hat N_1 \otimes (\hat N_2+q)\right) &=& \exp\left(i2\pi \frac{p}{q} \hat N_1 \otimes \hat N_2\right).
    \end{eqnarray}
    Therefore, $\exp\left(i2\pi \frac{p}{q} \hat N_1 \otimes \hat N_2\right)$ is periodic with a period of $q$ with respect to both $\hat N_1$ and $\hat N_2$, therefore we hypothesize the following decomposition for it:
    \begin{equation}
        \exp\left(i2\pi \frac{p}{q} \hat N_1 \otimes \hat N_2\right) = \sum_{j,k=0}^{q-1} g_{j,k,p,q} \exp\left(i \frac{2\pi j}{q} \hat N_1 \right) \exp\left(i \frac{2\pi k}{q} \hat N_2 \right),
    \end{equation}
    and to find $g_{j,k,p,q}$, we again use the property
    \begin{equation}
        \sum_{j = 0}^{q-1} \exp \left( i \frac{2\pi j}{q} N\right) = q \delta_{N,0}
    \end{equation}
    and do the inverse Fourier transform of the equation to find
    \begin{equation}
        g_{j,k,p,q} = \frac{1}{q^2} \sum_{m,n=0}^{q-1} \exp\left(- i \frac{2\pi j}{q} m\right) \exp\left(-i\frac{2\pi k}{q}n\right) \exp\left(i \frac{2\pi p}{q} mn\right).
    \end{equation}
\end{proof}
\section{Trace distance error induced by ignoring self-Kerr gate with a small parameter}\label{appsec:eps_kerr_gate_td}
We restate Lemma \ref{lem:eps_kerr_gate_td}:
\begin{lem}
    Given a multimode state $\ket{\psi}$ and multimode number operator $\hat N$, such that  $\bra{\psi} e^{-i\epsilon \hat N_1^2} s^{\hat N} e^{i\epsilon \hat N_1^2} \ket{\psi} = \bra{\psi}  s^{\hat N}\ket{\psi} \leq s^E$, the trace distance between $\ket{\psi}$ and $e^{i\epsilon \hat N_1^2}\ket{\psi}$ is upper bounded by 
    \begin{equation}
        \frac12\|e^{i\epsilon \hat N_1^2} \ket{\psi}\bra{\psi}e^{-i\epsilon \hat N_1^2} - \ket{\psi}\bra{\psi}\|_1 = \mathcal O\left(\epsilon E^2 \log_s^2(1/\epsilon^2)\right)
    \end{equation}
\end{lem}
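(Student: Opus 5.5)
We follow the strategy announced after Lemma~\ref{lem:eps_kerr_gate_td}: truncate, then bound the action of the small Kerr phase on the truncated part. Write $\hat K_\epsilon := e^{i\epsilon \hat N_1^2}$ and let $\ket{\psi_k}$ be the normalized projection of $\ket{\psi}$ onto total boson number $\leq k$. Since $\hat K_\epsilon$ commutes with $\hat N$ and with $\Pi_k$, the state $\hat K_\epsilon\ket{\psi_k}$ is exactly the normalized truncation of $\hat K_\epsilon\ket{\psi}$. The hypothesis gives that $\hat K_\epsilon\ket{\psi}$ also lies in $\mathcal S_{s,E}$, since its exponential energy equals that of $\ket{\psi}$. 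The triangle inequality then gives
\begin{equation*}
\tfrac12\big\|\hat K_\epsilon\ket{\psi}\!\bra{\psi}\hat K_\epsilon^\dagger-\ket{\psi}\!\bra{\psi}\big\|_1 \leq 2\sqrt{s^{E}/s^{k}} + \tfrac12\big\|\hat K_\epsilon\ket{\psi_k}\!\bra{\psi_k}\hat K_\epsilon^\dagger-\ket{\psi_k}\!\bra{\psi_k}\big\|_1 ,
\end{equation*}
where Lemma~\ref{lem:efficient_description} has been applied twice to obtain the first term.

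For the middle term we use the pure-state formula $\sqrt{1-|\langle\psi_k|\hat K_\epsilon|\psi_k\rangle|^2}$. Any vector $\ket{\phi}$ supported on $\hat N\le k$ satisfies
\begin{equation*}
\|(\hat K_\epsilon-\mathbb I)\ket{\phi}\|\le \max_{n_1\le k}|e^{i\epsilon n_1^2}-1|\,\|\phi\|\le \epsilon k^2\,\|\phi\|.
\end{equation*}
Applying this with $\ket{\phi}=\ket{\psi_k}$ gives $|1-\langle\psi_k|\hat K_\epsilon|\psi_k\rangle|\le\epsilon k^2$. Hence the trace distance between the two truncated pure states is at most $\|(\hat K_\epsilon-\mathbb I)\ket{\psi_k}\|\le\epsilon k^2$, using the standard bound that the trace distance of pure states is at most the vector distance. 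So the full bound is $2s^{(E-k)/2}+\epsilon k^2$.

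The remaining step is to choose $k$, and it is the only delicate point: the parameters must be balanced so that the tail term is at most of order $\epsilon$. We take $k = E+2\log_s(1/\epsilon)$, i.e.\ $k=E+\log_s(1/\epsilon^2)$, which is exactly the cutoff of Lemma~\ref{lem:efficient_description}. The tail term then equals $2\epsilon$. The main term becomes $\epsilon(E+\log_s(1/\epsilon^2))^2$. Expanding the square, each of the three terms is at most $\mathcal O(E^2\log_s^2(1/\epsilon^2))$ once $E\ge1$ and $\log_s(1/\epsilon^2)\ge1$ (small $\epsilon$), because for $a,b\ge1$ we have $(a+b)^2\le 4a^2b^2$. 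Summing, we obtain $\mathcal O(\epsilon E^2\log_s^2(1/\epsilon^2))$ as claimed. We expect no real obstacle beyond this bookkeeping. If $E<1$, we would replace $E$ by $\max(E,1)$ inside the $\mathcal O$, which is the implicit convention of the statement.
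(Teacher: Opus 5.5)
Your proof is correct and takes essentially the paper's route: truncate both $\ket{\psi}$ and $e^{i\epsilon \hat N_1^2}\ket{\psi}$ at total boson number $k$, bound each truncation error by Lemma~\ref{lem:efficient_description}, bound the distance between the two truncated states by $\mathcal O(\epsilon k^2)$, apply the triangle inequality, and set $k = E + \log_s(1/\epsilon^2)$. The only difference is cosmetic: the paper computes the fidelity of the truncated states explicitly and uses $1-\cos x \le x^2/2$ to get $\epsilon k^2/\sqrt{2}$, whereas your bound $|e^{ix}-1|\le |x|$ with trace distance $\le$ vector distance gives $\epsilon k^2$, which loses only a factor $\sqrt{2}$ inside the $\mathcal O$.
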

\begin{proof}
We write
\begin{eqnarray}
    \ket{\psi} &=& \sum_{\bm{n}} a_{\bm{n}} \ket{\bm{n}} \nonumber \\
    \ket{\phi} &:=& e^{i\epsilon \hat N_1^2} \ket{\psi} =  \sum_{\bm{n}} a_{\bm{n}} e^{i\epsilon n_1^2} \ket{\bm{n}}
\end{eqnarray}
    Given that 
    \begin{equation}
        \bra{\psi}  s^{\hat N}  \ket{\psi} = \bra{\phi}  s^{\hat N}  \ket{\phi}\leq s^E
    \end{equation}
    If we take the normalised projection of $\ket{\psi}$ on the Fock subspace with total bosons less than equal to $k$, that is, defining
    \begin{eqnarray}
        \ket{\psi_{\mathrm{red}}} &=& \frac{\sum_{|\bm{n}| \leq k} a_{\bm{n}} \ket{\bm{n}}}{\sqrt{\sum_{|\bm{j}| \leq k} |a_{\bm{j}}|^2}}, \nonumber \\
    \ket{\phi_{\mathrm{red}}} &=&  \frac{\sum_{|\bm{n}| \leq k} a_{\bm{n}} e^{i\epsilon n_1^2} \ket{\bm{n}}}{\sqrt{\sum_{|\bm{j}| \leq k} |a_{\bm{j}}|^2}},
    \end{eqnarray}
    where $|\bm{n}| = \sum_{i=1}^m n_i$. Then from Lemma \ref{lem:efficient_description}, we have that
    \begin{eqnarray}
        d(\psi,\psi_{\mathrm{red}}) = \frac12||\ket{\psi} \bra{\psi} - \ket{\psi_{\mathrm{red}}} \bra{\psi_{\mathrm{red}}}||_1 &\leq& \sqrt{\frac{s^E}{s^k}} ,\nonumber \\ d(\phi,\phi_{\mathrm{red}}) = \frac12||\ket{\phi} \bra{\phi} - \ket{\phi_{\mathrm{red}}} \bra{\phi_{\mathrm{red}}}||_1 &\leq& \sqrt{\frac{s^E}{s^k}}.
    \end{eqnarray}
    To find the trace distance between $\ket{\psi_{\mathrm{red}}}$ and $\ket{\phi_{\mathrm{red}}}$, we note that
    \begin{equation}
        \braket{\psi|\phi} = \frac{\sum_{|\bm{n}|\leq k} |a_{\bm{n}}|^2 e^{i\epsilon n_1^2} }{\sum_{|\bm{j}| \leq k} |a_{\bm{j}}|^2}.
    \end{equation}
    Therefore, the fidelity is given by
    \begin{equation}
        |\braket{\psi|\phi}|^2 = \frac{\sum_{|\bm{m}|,|\bm{n}| \leq k} |a_{\bm{m}}|^2 |a_{\bm{n}}|^2 e^{i \epsilon (n_1^2 - m_1^2)}}{\sum_{|\bm{p}|,|\bm{q}| \leq k} |a_{\bm{p}}|^2 |a_{\bm{q}}|^2}.
    \end{equation}
    Note that
    \begin{eqnarray}
        \sum_{|\bm{m}|,|\bm{n}| \leq k} |a_{\bm{m}}|^2 |a_{\bm{n}}|^2 e^{i \epsilon (n_1^2 - m_1^2)} &=& \sum_{\bm{m}= \bm{n}} |a_{\bm{m}}|^4 + \sum_{\bm{m} \neq \bm{n}} |a_{\bm{m}}|^2 |a_{\bm{n}}|^2\cos(\epsilon (n_1^2 - m_1^2)) \nonumber \\ &=& \sum_{|\bm{m}|,|\bm{n}|\leq k} |a_{\bm{m}}|^2 |a_{\bm{n}}|^2 + \sum_{\bm{m}\neq \bm{n}} |a_{\bm{m}}|^2 |a_{\bm{n}}|^2 (\cos(\epsilon (n_1^2 - m_1^2)) - 1). \nonumber \\
    \end{eqnarray}
    Therefore, we get
    \begin{equation}
        |\braket{\psi_{\mathrm{red}}|\phi_{\mathrm{red}}}|^2 = 1 + \frac{\sum_{\bm{m}\neq \bm{n}} |a_{\bm{m}}|^2 |a_{\bm{n}}|^2 (\cos(\epsilon (n_1^2 - m_1^2)) - 1)}{\sum_{|\bm{m}|,|\bm{n}|\leq k} |a_{\bm{m}}|^2 |a_{\bm{n}}|^2}
    \end{equation}
    For pure states, the trace distance between $\psi_\mathrm{red}$ and $\phi_\mathrm{red}$ is given by
    \begin{eqnarray}
    d(\psi_{red},\phi_{\mathrm{red}})^2 = 1 -  |\braket{\psi_{\mathrm{red}}|\phi_{\mathrm{red}}}|^2 =  \frac{\sum_{\bm{m}\neq \bm{n}} |a_{\bm{m}}|^2 |a_{\bm{n}}|^2 (1 - \cos(\epsilon (n_1^2 - m_1^2)))}{\sum_{|\bm{m}|,|\bm{n}|\leq k} |a_{\bm{m}}|^2 |a_{\bm{n}}|^2},
    \end{eqnarray}
    Using
    \begin{equation}
    1 -\cos x \leq \frac{x^2}{2},
    \end{equation}
    we write
    \begin{equation}
        d(\psi_{red},\phi_{\mathrm{red}})^2 \leq \frac{\epsilon^2 \sum_{\bm{m}\neq \bm{n}} |a_{\bm{m}}|^2 |a_{\bm{n}}|^2 (n_1^2 - m_1^2)^2}{2\sum_{|\bm{m}|,|\bm{n}|\leq k} |a_{\bm{m}}|^2 |a_{\bm{n}}|^2} =  \frac{\epsilon^2 \sum_{|\bm{m}|,|\bm{n}|\leq k} |a_{\bm{m}}|^2 |a_{\bm{n}}|^2 (n_1^2 - m_1^2)^2}{2\sum_{|\bm{m}|,|\bm{n}|\leq k} |a_{\bm{m}}|^2 |a_{\bm{n}}|^2}.
    \end{equation}
    Since $(n_1^2 - m_1^2)^2 \leq k^4$,
    \begin{equation}
       d(\psi_{red},\phi_{\mathrm{red}})^2 \leq \frac{\epsilon^2 k^4}{2} ,
    \end{equation}
    and hence
    \begin{equation}
    d(\psi_{red},\phi_{\mathrm{red}}) \leq \frac{\epsilon k^2}{\sqrt2}.
    \end{equation}
    Combined with the triangle inequality, we get
    \begin{equation}
        d(\psi,\phi) \leq 2 \sqrt{\frac{s^E}{s^k}} + \frac{\epsilon k^2}{\sqrt 2}.
    \end{equation}
Choosing the Fock state cutoff $k = \log_s(s^E/\epsilon^2)$, we get 
\begin{eqnarray}
   d(\psi,\phi) &\leq& 2 \epsilon + \frac{\epsilon}{\sqrt2} \log_s^2\left(\frac{s^E}{\epsilon^2}\right) \nonumber \\
   &=& \epsilon \left(2 + \frac{E^2}{\sqrt{2}} + \frac{\log_s^2(1/\epsilon^2)}{\sqrt{2}} + \sqrt{2}E\log_s(1/\epsilon^2)\right) \nonumber \\
   &=& \mathcal O\left(\epsilon E^2 \log_s^2 (1/\epsilon^2)\right).
\end{eqnarray}
\end{proof}
\section{Classical simulation of bosonic circuits with self-Kerr gates via phase-shifter decomposition}\label{appsec:kerr_gate_simul}
We first give the formal statement of complexity of classical simulation of bosonic circuits with self-Kerr gates, underlined by the following Theorem \ref{theo:kerr_gate_simul}:
\begin{theo}\label{apptheo:kerr_gate_simul}
     Given the $m$-mode vacuum state $\ket{0}^{\otimes m}$ evolving through a quantum circuit characterized by the unitary 
    \begin{equation}
        \hat U = \hat G_L \hat \kappa(x_L) \dots \hat G_1 \hat \kappa(x_1),
    \end{equation}
    where $\hat G_1,\dots,\hat G_L$ are arbitrary Gaussian gates, $\hat \kappa (x_1),\dots,\hat \kappa (x_L)$ are self-Kerr gates, then $\forall k \in (1,\dots,m)$ probabilities generated by heterodyne measurements on the first $k$ modes of the output state
    \begin{equation}
        P (\bm\alpha_k) = \left|\bra{\alpha_1 \alpha_2 \dots \alpha_k}\otimes \mathbb{I}_{m-k}\hat U\ket{0}^{\otimes m}\right|^2
    \end{equation}
    \begin{itemize}
        \item Can be computed exactly in polynomial time for log depth circuits $L = \mathcal O(\log(\poly(m)))$ if all the self-Kerr gates have rational parameters i.e.
        \begin{equation}
            x_i = p_i/q_i; \hspace{5mm} p_i \in \Z,q_i \in \Z_+^*, \forall i \in (1,\dots,L).
        \end{equation}
        \item Can be computed up to additive precision $\delta = 1/\mathcal O(\poly(m))$ in quasipolynomial time for log depth circuits $L = \mathcal O(\log(\poly(m)))$ for generic self-Kerr gates.
    \end{itemize}
    This allows for approximate strong simulation and in particular, approximate sampling
\end{theo}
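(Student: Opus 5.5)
The plan is to expand the Kerr gates into sums of Gaussian gates and sum up the resulting Gaussian overlaps. In the rational case, Lemma \ref{applem:kerr_decomp} writes each $\hat\kappa(p_i/q_i)$ as a linear combination of $q_i$ phase-shifters, with coefficients $g^{(e/o)}_{j,p_i,q_i}$ computable in time $\mathcal O(q_i^2)$. Substituting into $\hat U$ gives
\begin{equation}
\hat U=\sum_{J} g_J\, \hat G_J ,
\end{equation}
with at most $\prod_i q_i\le q^L$ terms. Here each $\hat G_J=\hat G_L R_{j_L}\cdots \hat G_1 R_{j_1}$ is a Gaussian unitary, and its symplectic matrix and displacement are obtained by multiplying $L$ matrices of size $2m$ in time $\poly(m)L$. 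Then
\begin{equation}
P(\bm\alpha_k)=\sum_{J,J'} g_J \bar g_{J'}\, \Tr\!\left[(\ket{\bm\alpha_k}\!\bra{\bm\alpha_k}\otimes\mathbb{I})\hat G_J\ket{0}\!\bra{0}\hat G_{J'}^\dagger\right].
\end{equation}
Each cross term is a Gaussian integral, computable exactly in $\poly(m)$ time by the techniques of \cite{Dias2024} for superpositions of Gaussian states. For $L=\mathcal O(\log\poly(m))$ and $q$ constant (or polynomially bounded with constant base in the exponent), we have $q^{2L}=\poly(m)$, so the total cost is polynomial. Marginals on the first $k$ modes are the same expression with partial heterodyne, which is still Gaussian.

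For generic parameters, I would first replace each $x_i$ by a continued-fraction convergent $p_i/q_i$ with $|x_i-p_i/q_i|<1/(\sqrt5 q_i^2)$. The gate $\hat\kappa(x_i)$ then equals $\hat\kappa(p_i/q_i)e^{i\pi\epsilon_i \hat N_1^2}$ with $|\epsilon_i|\le 1/q_i^2$. The error of dropping $e^{i\pi\epsilon_i\hat N_1^2}$ is controlled by Lemma \ref{lem:eps_kerr_gate_td} applied to the intermediate state $\ket{\psi_{i-1}}$ after $\hat G_{i-1}$. Its exponential energy is bounded via Theorem \ref{methodtheo:exp_energy_bound_simplified} by $t^{E}$ with $E=mL^2e^{2(2r+1)L}(|\alpha|^2+28r+9)$; since $L$ is logarithmic, $E=\poly(m)$ and $1/\log t=\poly(m)$. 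Kerr gates commute with $\hat N$, so the hypothesis of Lemma \ref{lem:eps_kerr_gate_td} holds.

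The errors then telescope by unitary invariance and the triangle inequality, giving a total trace-distance error of at most $\sum_i \mathcal O(\epsilon_i E^2\log_t^2(1/\epsilon_i^2))$. Requiring this to be at most $\delta/2$ forces $q_i=\poly(m)$, and the probability error is bounded by the trace distance. The number of terms then becomes $q^{2L}=\poly(m)^{\mathcal O(\log m)}=m^{\mathcal O(\log m)}$, which is quasipolynomial.

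I expect the main obstacle to be this parameter bookkeeping. One must check that the per-layer bounds from Theorem \ref{methodtheo:exp_energy_bound_simplified} apply to the intermediate states with the appropriate $t_i$, and that the resulting $E$ and $\log_t$ factors remain polynomial at logarithmic depth, so that $q$ is polynomial rather than larger. A further point is that continued-fraction convergents may skip values of $q$. I would handle this by taking the first convergent whose denominator exceeds the required threshold and bounding it by a constant multiple via standard recurrence growth; if that fails, I would fall back on Dirichlet approximation with $|x-p/q|\le 1/(qQ)$ and $q\le Q$, at the cost of a worse but still polynomial $q$. Finally, I would double-check that the odd-$q$ decomposition introduces only an extra global phase-shifter, which is absorbed into the Gaussian layers.
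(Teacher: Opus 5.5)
Your proposal is correct and follows essentially the paper's route: decompose each Kerr gate into phase shifters, evaluate the resulting $q^{2L}$ Gaussian cross terms with \cite{Dias2024}, and for generic parameters use rational approximation, with the per-gate error controlled by Lemma~\ref{lem:eps_kerr_gate_td} and the exponential-energy bound of Theorem~\ref{methodtheo:exp_energy_bound_simplified}, giving $q=\poly(m)$ and $q^{2L}=m^{\mathcal O(\log m)}$ (one caveat: symplectic matrices and displacement vectors alone lose the relative phases between branches, so apply the gates one by one with the phase-tracking rules of \cite{Dias2024}, as the paper does). Your explicit $q=\mathcal O(1)$ requirement in the rational case and your Dirichlet fallback ($q\le Q$, error below $1/Q$) are more careful than the paper, which sets $q^2=\pi/(\sqrt5\epsilon)$ directly from Hurwitz's theorem although no convergent of prescribed size need exist; for the same reason (unbounded partial quotients) your first idea of bounding the next convergent by a constant multiple fails, so the Dirichlet route, or equivalently the last convergent with denominator at most $Q$, is the one to use.
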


\noindent We split the results into two parts: in Section \ref{appsubsec:rational}, we give the formal statement as well as the proof about the simulation of bosonic circuits with rational self-Kerr gates, whereas in Section \ref{appsubsec:irrational}, we give the formal statement as well as the proof about the simulation of bosonic circuits with possibly irrational self-Kerr gates.
\subsection{Classical simulation of bosonic circuits with rational self-Kerr gates}\label{appsubsec:rational}
We have the following Theorem regarding classical simulation of bosonic circuits with rational self-Kerr gates:
\begin{theo}\label{apptheo:rational_Kerr_simulation}
    Given a quantum state generated by the action of $L$ layers of Gaussian and self-Kerr gates
    \begin{equation}
       \hat U_L = \hat G_L \hat \kappa( p_L/q_L)\dots  \hat G_1 \hat \kappa(p_1/q_1),
    \end{equation}
    where $p_i\in \Z,q_i \in \Z_+^*$, $\forall i \in (1,\dots,L)$ (without the loss of generality, we can assume $q_i \geq 0, \forall i \in (1,\dots,L)$), then the probability distribution by heterodyne detection on the first $k$ modes of the output state
    \begin{equation}
        P(\bm \alpha_k) = \frac{1}{\pi^k}\left|\bra{\alpha_1\alpha_2 \dots\alpha_k}\hat U_L \ket{0}^{\otimes m}\right|^2
    \end{equation}
    can be computed in time $\mathcal O(m^4 q^{2L}) \forall k \in (1,\dots,m)$, where 
    \begin{equation}
        q = \max_{i \in (1,\dots,L)} q_i.
    \end{equation}
\end{theo}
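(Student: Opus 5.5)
We use Lemma \ref{applem:kerr_decomp} to remove every non-Gaussian gate from the circuit. Each rational self-Kerr gate $\hat\kappa(p_i/q_i)$ is written as a linear combination of at most $q_i$ phase-shifters on the first mode, with explicitly computable coefficients $g^{(e/o)}_{j,-p_i,q_i}$. The sign convention of the lemma is handled by replacing $p$ with $-p$. In the odd case the extra global phase-shifter $\exp(-i\pi p\hat N_1/q)$ is kept as a Gaussian factor.

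Substituting these decompositions into $\hat U_L$ and expanding the product gives
\begin{equation}
\hat U_L=\sum_{J}g_J\,\hat G_J,\qquad \hat G_J=\hat G_L\hat R(\theta_{j_L})\cdots\hat G_1\hat R(\theta_{j_1}),
\end{equation}
where $J=(j_1,\dots,j_L)$ ranges over at most $q^L$ multi-indices and $g_J=\prod_i g_{j_i}$. Each $\hat R(\theta)$ is a phase-shifter, so every $\hat G_J$ is a Gaussian unitary. Each coefficient $g_{j,p,q}$ is a sum of $q$ terms, so all of them can be tabulated in time $\mathcal O(q^2)$ per gate. The symplectic matrix and displacement of every $\hat G_J$ are obtained by multiplying $L$ matrices of size $2m\times 2m$, at cost $\mathcal O(Lm^3)$ each. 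Reusing prefix products along a tree of partial index sequences keeps the total cost at $\mathcal O(q^L m^3)$.

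Next we write the marginal heterodyne probability as a partial trace. With $\ket{\Psi_J}=\hat G_J\ket{0}^{\otimes m}$ and $\Pi_{\bm\alpha_k}=\ket{\bm\alpha_k}\!\bra{\bm\alpha_k}\otimes\mathbb I_{m-k}$, we have
\begin{equation}
P(\bm\alpha_k)=\frac{1}{\pi^k}\sum_{J,J'}g_J\overline{g_{J'}}\,\bra{\Psi_{J'}}\Pi_{\bm\alpha_k}\ket{\Psi_J}.
\end{equation}
There are at most $q^{2L}$ cross terms. Each cross term is the overlap of two pure Gaussian states with a Gaussian projector, traced over the remaining modes. It is a Gaussian integral whose closed form, including the correct complex phase, is given by the Gaussian-superposition techniques of \cite{Dias2024}. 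Evaluating it requires a constant number of determinants and inversions of $\mathcal O(m)\times\mathcal O(m)$ complex matrices, which costs $\mathcal O(m^3)$ or, more conservatively, $\mathcal O(m^4)$ including the assembly of the covariance and displacement data. Summing over all pairs gives total time $\mathcal O(m^4q^{2L})$, and the computation is exact since no truncation is involved. For $L=\mathcal O(\log\poly(m))$ and bounded $q$ this is polynomial.

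The main obstacle is the cross-term evaluation. Unlike probabilities, the overlaps $\bra{\Psi_{J'}}\Pi\ket{\Psi_J}$ carry phases that depend on the Gaussian unitaries themselves, not only on their symplectic action. We therefore have to track the global phase of each $\hat G_J$, for example by representing every Gaussian state through a complex symmetric matrix and a displacement in the Bargmann/Fock-generating-function picture, and composing gates there. Getting this phase bookkeeping right, and confirming that the resulting cost is $\mathcal O(m^4)$ per pair, is where the care goes; we would first try the exact overlap formulas of \cite{Dias2024} and check them on the odd-$q$ case with its extra phase shift.
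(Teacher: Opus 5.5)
Your proposal is correct and follows essentially the paper's route. You decompose each rational Kerr gate into $q_i$ phase-shifters, expand the output as a superposition of at most $q^L$ Gaussian states, and evaluate the $q^{2L}$ cross terms with the overlap formulas of \cite{Dias2024}, which gives $\mathcal O(m^4q^{2L})$. The phase issue you flag is resolved in the paper the way you suggest: it propagates the phase-sensitive Gaussian-state description of \cite{Dias2024} gate by gate from the vacuum (Bloch--Messiah plus a Reck decomposition of the passive parts), rather than composing symplectic matrices, which on their own would lose the relative phases between branches.
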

\begin{proof}
We first describe the time complexity of describing the action of a single layer of Gaussian and self-Kerr gate
   \begin{equation}
        \hat G_i \hat \kappa (p_i/q_i)
   \end{equation}
   on a pure Gaussian state $\ket{G}$. By Lemma \ref{lem:kerr_decomp} and the Bloch--Messiah decomposition, we write
   \begin{eqnarray}
        \hat G_i \hat \kappa ( p_i/q_i) \ket{G} &=& \sum_{j=0}^{q_i - 1} g_{j,p_i,q_i} \hat G_i \hat R_1(2\pi j /q_i)\ket{G} \nonumber \\
        &=& \sum_{j=0}^{q_i - 1} g_{j,p_i,q_i} \hat V_i \hat D(\bm{\alpha}_i)\hat S(\bm{r}_i)\hat U_i \hat R_1(2\pi j /q_i) \ket{G},
   \end{eqnarray}
   where $\hat R_1 (\theta) \coloneqq e^{i\theta\hat N_1}$. Furthermore, from \cite{Reck1994}, any $m$-mode energy-preserving Gaussian unitary can be decomposed into $\mathcal{O}(m^2)$ phase-shifters and beam splitters. Now, from \cite[Theorem III.7]{Dias2024}, the action of a phase-shifter/beamsplitter/single-mode displacement operator on a Gaussian state can be described in time $\mathcal O(m)$, whereas the action of a single-mode squeezing can be described in time $\mathcal O(m^3)$. This gives the time complexity to describe the action of $\hat V_i \hat D(\bm{\alpha}_i)\hat S(\bm{r}_i)\hat U_i \hat R_1(2\pi j /q_i)$ on $\ket{G}$ to be $\mathcal O(m^3 + m^2 + m^4 + m^3 + m) =\mathcal O(m^4), \forall j$. We have to describe $q_i$ such actions, this gives the time complexity of describing $ \hat \kappa ( p_i/q_i) \hat G_i \ket{G}$ as a superposition of $q_i \leq q$ Gaussian state to be $\mathcal O (q_i m^4) = \mathcal O(qm^4)$. 

   Starting with the $m$-mode Gaussian state vacuum, we see from the previous discussion that a single layer of Gaussian and self-Kerr gate convert a Gaussian state into a superposition of $q$ Gaussian states. Therefore, after $i$ layers of Gaussian and self-Kerr gates
   \begin{equation}
       \ket{\psi_i} = \hat G_i  \hat \kappa ( p_i/q_i) \dots  \hat G_1 \hat \kappa ( p_1/q_1)
   \end{equation}
   can be written as a superposition of $q^i$ Gaussian states. We now give the time complexity of describing the action of $i+1$'th layer of Gaussian and self-Kerr gate on $\ket{\psi_i}$ given the description of $\ket{\psi_i}$ as a superposition of $q^i$ Gaussian states
   \begin{equation}
       \ket{\psi_i} = \sum_{k=1}^{q^i} c_k \ket{G_k},
   \end{equation}
   such that
 \begin{eqnarray}
        \hat \kappa ( p_{i+1}/q_{i+1}) \hat G_{i+1} \ket{\psi_i} &=& \sum_{k=1}^{q^i} c_k \hat \kappa (p_{i+1}/q_{i+1}) \hat G_{i+1} \ket{G_k}.
   \end{eqnarray}
   From the preceding discussion, the description of $ \hat G_{i+1} \hat \kappa ( p_{i+1}/q_{i+1}) \ket{G_k}$ as a superposition of $\mathcal O(q)$ Gaussian states can be obtained in time $\mathcal O(q m^4)$. There are $q^{i}$ such descriptions. This gives the time to describe $\hat \kappa ( p_{i+1}/q_{i+1}) \hat G_{i+1} \ket{\psi_i}$ as a superposition of $q^{i+1}$ Gaussian states to be $\mathcal O(q^{i+1}m^4)$. We do this process for $i \in (0,\dots,L-1)$ and this gives the time complexity of describing $\ket{\psi} = \hat \kappa (p_L/q_L) \hat G_L \dots \hat \kappa ( p_1/q_1) \hat G_1 \ket{0}^{\otimes m}$ as a superposition of $\mathcal O(q^L)$ Gaussian states
   \begin{equation}
       \ket{\psi} = \sum_{k=1}^{q^L} C_k \ket{G_k}
   \end{equation}
   in time $\mathcal{O}(Lm^4 (q + q^2 + q^3+\dots+q^L)) = \mathcal O(q^L m^4L)$. And once we have the description of $\ket{\psi}$, from \cite[Theorem III.8]{Dias2024} and \cite[Lemma IV.1]{Dias2024} we can calculate
   \begin{equation}
       P(\bm{\alpha}_k) = \frac{1}{\pi^k}|\braket{\alpha_1\dots\alpha_k|\psi}|^2
   \end{equation}
   in time $\mathcal O(q^{2L}m^3)$. This gives the total time complexity of the exact computation of probabilities generated by heterodyne detection on output state of the circuit composed of $m$-mode vacuum state fed to $L$ layers of Gaussian and rational self-Kerr gates to be $\mathcal{O}(q^{2L}m^3 + q^L m^4) = \mathcal{O}(q^{2L}m^4)$
\end{proof}
\noindent Therefore we see that Theorem \ref{apptheo:rational_Kerr_simulation} allows for efficient classical simulation of log depth $L = \mathcal{O}(\poly(m))$ bosonic circuits with rational self-Kerr gates.
\subsection{Classical simulation of Bosonic circuits with irrational self-Kerr gates}\label{appsubsec:irrational}
We have the following Theorem regarding classical simulation of bosonic circuits with possibly irrational self-Kerr gates:
\begin{theo}\label{apptheo:irrational_Kerr_simul}
        Given a quantum state generated by the action of $L = \mathcal O(\log(\poly(m)))$ layers of Gaussian and self-Kerr gates
    \begin{equation}\label{appeq:irrational_Kerr_circuit}
       \hat U_L = \hat G_L \hat \kappa(x_L) \dots \hat G_1 \hat \kappa(x_1),
    \end{equation}
    on the $m$-mode vacuum state, such that $c$ out of the $L$ self-Kerr gates have irrational parameters (up to $\pi$),
    then the probability distribution generated by heterodyne detection on the first $k$ modes of the output state
    \begin{equation}
        P(\bm \alpha_k) = \frac{1}{\pi^k}\left|\bra{\alpha_1\alpha_2 \dots\alpha_k}\hat U_L \ket{0}^{\otimes m}\right|^2
    \end{equation}
    can be computed up to $\delta = 1/\mathcal O(\poly(m))$ additive precision in quasipolynomial time.
\end{theo}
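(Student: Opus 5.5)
Replace each irrational self-Kerr parameter $x_i$ by a rational approximation $p_i/q_i$ from the continued fraction expansion, with $|x_i - p_i/q_i| < 1/(\sqrt5 q_i^2)$ (Hurwitz). Then apply Theorem \ref{apptheo:rational_Kerr_simulation} to the rationalised circuit $\hat U_L'$. The error is controlled by a hybrid argument. Write
\begin{equation}
\hat U_L\ket{0}^{\otimes m} - \hat U_L'\ket{0}^{\otimes m}
\end{equation}
as a telescoping sum over the $c$ irrational gates. At each step, the two circuits differ by the unitary $\hat\kappa(x_i)\hat\kappa(p_i/q_i)^\dagger = e^{i\pi(x_i - p_i/q_i)\hat N_1^2}$ acting on an intermediate state. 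Since unitaries preserve trace distance, the total trace distance is at most the sum of the $c$ terms given by Lemma \ref{lem:eps_kerr_gate_td}. Each such term is evaluated on a state $\hat\kappa(p_i/q_i)\hat G_{i-1}\cdots\ket{0}^{\otimes m}$ that lies in $\mathcal C_{i-1}$ (up to the energy-preserving Kerr factor, which commutes with $s^{\hat N}$, so the hypothesis of Lemma \ref{lem:eps_kerr_gate_td} holds).

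By Theorem \ref{methodtheo:exp_energy_bound_simplified}, every intermediate state satisfies $\bra{\psi_i}t_L^{\hat N}\ket{\psi_i} \le t_L^{E_L}$ with
\begin{equation}
E_L = mL^2 e^{2(2r+1)L}(|\alpha|^2+28r+9).
\end{equation}
I use the single value $t_L \le t_i$: $t_L$ is the smallest of the $t_i$ in the recursion, so the bound holds for all layers $i\le L$. For $L=\mathcal O(\log\poly(m))$ and constant $r,|\alpha|$, this gives $E_L=\poly(m)$ and $1/\log t_L=\poly(m)$.

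Lemma \ref{lem:eps_kerr_gate_td} with $\epsilon_i=\pi|x_i-p_i/q_i|$ bounds each term by $\mathcal O(\epsilon_i E_L^2\log_{t_L}^2(1/\epsilon_i^2))$. To make the total error at most $\delta/2$, with $\delta = 1/\poly(m)$, it suffices to take $\epsilon_i \le \delta/(c\,\poly(m)\,\mathrm{polylog})$, that is, $q_i = \mathcal O(\sqrt{1/\epsilon_i}) = \poly(m)$. The convergents have denominators growing at least geometrically, so such a $q_i$ is found after $\mathcal O(\log m)$ steps of the continued fraction. The output probability is then perturbed by at most the trace distance, which is $\le\delta$. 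Heterodyne densities need care here: I would bound the error in the marginal distribution in total variation, or argue pointwise using $|\langle\alpha|\cdot\rangle|\le 1$ with the $1/\pi^k$ factor.

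Theorem \ref{apptheo:rational_Kerr_simulation} then computes the rationalised probabilities exactly in time $\mathcal O(m^4 q^{2L})$ with $q=\poly(m)$ and $L=\mathcal O(\log m)$, which gives
\begin{equation}
m^{\mathcal O(\log m)} = e^{\mathcal O(\log^2 m)},
\end{equation}
i.e.\ quasipolynomial time.

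The main obstacle is the first step: obtaining a uniform exponential-energy bound for every intermediate state with a single $s=t_L$. I also need to check that $\log_{t_L}(1/\epsilon)$ and $E_L$ stay polynomial in $m$. The dependence $e^{\mathcal O(L)}$ is only polynomial because $L$ is logarithmic, and that is why the statement is restricted to log depth. A secondary subtlety is that $p_i/q_i$ need not be exactly rational with a sign convention fitting Lemma \ref{lem:kerr_decomp}; writing $\hat\kappa(x)$ as $\hat\kappa(-p/q)$ with $p\in\Z$ handles the sign.
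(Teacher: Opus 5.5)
Your proposal is correct and follows essentially the same route as the paper. Both use continued-fraction (Hurwitz) rationalisation of the irrational Kerr parameters, then a hybrid/triangle-inequality argument that charges each replacement via Lemma \ref{lem:eps_kerr_gate_td} with the uniform pair $(t_L,E_L)$ from Theorem \ref{methodtheo:exp_energy_bound_simplified}, and finally Theorem \ref{apptheo:rational_Kerr_simulation} with $q=\poly(m)$ and $L=\mathcal O(\log m)$. The one claim to soften, which the paper also makes, is $q_i=\mathcal O(\epsilon_i^{-1/2})$: large partial quotients can make convergent denominators jump past that scale, but for a target precision $\eta$ on $|x_i-p_i/q_i|$, taking the last convergent with $q_n\le 1/\eta$ gives $|x_i-p_n/q_n|<1/(q_nq_{n+1})<\eta$, so $q_i\le 1/\eta=\poly(m)$ and the running time is still quasipolynomial (as in the paper, you also implicitly need the originally rational gates to have $\poly(m)$ denominators, or to be rationalised in the same way).
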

\begin{proof}
Denoting
\begin{equation}
    \ket{\psi_{i}} = \hat G_i \hat \kappa (x_i) \dots \hat G_1 \hat \kappa (x_1) \ket{0}^{\otimes m}, \hspace{5mm} \forall i \in (1,\dots,L),
\end{equation}
we first note from Theorem \ref{theo:exp_energy_bound_simplified} that we can find a $t_i > 1$ and $E_i \in \R_{+}$ with scalings given by
\begin{equation}
    1/\log(t_i) = \mathcal O (\exp(L)), \hspace{5mm} E_i = \mathcal O(m\exp(L))
 \end{equation}
 and such that 
 \begin{equation}
     \bra{\psi_i} t_i^{\hat N} \ket{\psi_i} < t_i^{E_i}.
 \end{equation}
  With this knowledge, the proof of the Theorem proceeds as follows: Given the description of the circuit (Eq.~\ref{appeq:irrational_Kerr_circuit}), we first recognize the set of self-Kerr gates with irrational parameters up to $\pi$ and for each of these irrational self-Kerr gate parameters $x_i$, we use the Diophantine approximation with continued fractions \cite{hurwitz1891ueber} to approximate $x_i$ such that
   \begin{equation}
       \left|\pi x_i - \pi \frac{p_i}{q_i} \right| < \frac{\pi}{\sqrt 5q_i^2}=: \epsilon_i.
   \end{equation}
   To characterize the error accumulated by approximating the irrational self-Kerr gates as rational self-Kerr gates throughout the circuit, consider the following: Given two states $\ket{\psi_{i-1}}$ and $\ket{\phi_{i-1}}$ as input to the self-Kerr gate $\hat \kappa (x_i)$ ,we want to upper bound the trace distance between $\hat \kappa(x_i) \ket{\psi_{i-1}}$ and $\hat{\kappa}(\pi p_i/q_i) \ket{\phi_{i-1}}$. We write
   \begin{eqnarray}
       &&\frac12\|\hat \kappa(x_i) \ket{\psi_{i-1}} \bra{\psi_{i-1}}\hat\kappa^\dagger(x_i)  - \hat{\kappa}(p_i/q_i) \ket{\phi_{i-1}}\bra{\phi_{i-1}} \hat \kappa^\dagger ( p_i/q_i)\|_1  \nonumber \\
       &&\hspace{5mm}=\frac12\|\hat \kappa (\epsilon_i)\hat \kappa( p_i/q_i) \ket{\psi_{i-1}} \bra{\psi_{i-1}}\hat\kappa^\dagger(p_i/q_i) \hat \kappa^\dagger(\epsilon_i)  - \hat{\kappa}( p_i/q_i) \ket{\phi_{i-1}}\bra{\phi_{i-1}} \hat \kappa^\dagger ( p_i/q_i)\|_1 \nonumber \\
       &&\hspace{5mm}\leq \frac12\|\hat \kappa (\epsilon_i)\hat \kappa( p_i/q_i) \ket{\psi_{i-1}} \bra{\psi_{i-1}}\hat\kappa^\dagger( p_i/q_i) \hat \kappa^\dagger (\epsilon_i)  - \hat{\kappa}( p_i/q_i) \ket{\psi_{i-1}}\bra{\psi_{i-1}} \hat \kappa^\dagger ( p_i/q_i)\|_1 \nonumber\\
       &&\hspace{8mm}+\frac12\|\hat \kappa( p_i/q_i) \ket{\psi_{i-1}} \bra{\psi_{i-1}}\hat\kappa^\dagger( p_i/q_i)   - \hat{\kappa}(p_i/q_i) \ket{\phi_{i-1}}\bra{\phi_{i-1}} \hat \kappa^\dagger ( p_i/q_i)\|_1 \nonumber \\
       &&\hspace{5mm} = \epsilon_{\mathrm{prior}} + \frac12 \|\ket{\psi_{i-1}}\bra{\psi_{i-1}} - \ket{\phi_{i-1}}\bra{\phi_{i-1}}\|_1,
   \end{eqnarray}
   where $\epsilon_{\mathrm{prior}} = \mathcal O(\epsilon_{i-1} E_{i-1}^2 \log_{t_{i-1}}^2(1/\epsilon_{i-1}^2 ))$. For the first inequality, we use the triangle inequality whereas for the second inequality, we use Lemma \ref{lem:eps_kerr_gate_td} for the first term and the fact that unitary gates do not change the trace distance for the second inequality. Choosing $\epsilon_i = \epsilon, \forall i \in (1,\dots,L)$, then from Theorem \ref{theo:exp_energy_bound_simplified} and its subsequent proof in Section \ref{appsec:exp_energy_bound energy simplified}, its clear that $E_i \leq E_L$ and that $\log_{t_{i}}(1/\epsilon_i^2 )) \leq \log_{t_{L}}(1/\epsilon_i^2 ))$, $\forall i \in(1,\dots,L)$ (since $\log_x(a)$ is a decreasing function in x and the constant $t_i$ decreases with subsequent layers). Therefore, we see that whenever we are approximating an irrational self-Kerr gate by a rational one, we are adding a trace distance error of order $\mathcal O(\epsilon E_L^2 \log_{t_{L}}^2(1/\epsilon^2))$. We further upper bound $E_L^2 \log_{t_{L}}^2(1/\epsilon^2)$ by $E_L^3/\log^2(t_L)$ and later justify why this is a valid assumption. With $c$ such irrational self-Kerr gates, to ensure that the approximated output state $\ket{\Phi}$ is $\delta = 1/\mathcal O(\poly(m))$ close in trace distance to the actual output state $\ket{\psi} = \hat{U}_L \ket{0}^{\otimes m}$, we ensure that
   \begin{eqnarray}
       \epsilon E_L^3/\log^2(t_L)  &=& \frac{\delta}{c},\nonumber \\
       \frac{1}{\epsilon} = \frac{cE_L^3}{\delta\log^2(t_L)}.
   \end{eqnarray}
   Now, from Theorem \ref{theo:exp_energy_bound_simplified}, $E_L = \mathcal O(m \exp(L)) = \mathcal O(\poly(m))$ for $L = \log(\poly(m))$ and similarly, $1/\log(t_L) = \mathcal O(\exp(L)) = \mathcal O(\poly(m))$. With $c \leq L$ , we see that we require
   \begin{equation}
       \frac{1}{\epsilon} = \mathcal O(\poly(m)).
   \end{equation}
   This justifies our earlier assumption, since for $E_L,1/\epsilon = \mathcal O(\poly(m))$, $E_L^2 \log_s^2(1/\epsilon^2) \leq E_L^3/\log^2(s)$ since polynomial grows faster than a logarithmic. With this choice of $\epsilon$, the Diophantine approximation requires the denominator of the approximated rational self-Kerr gates to be such that
   \begin{equation}
       q^2  = \frac{\pi}{\sqrt{5}\epsilon} 
   \end{equation}
   for the output state $\ket{\Phi}$ to be $\delta = 1/\mathcal O(\poly(m))$ close in trace distance to the actual output state. Furthermore, from the property of trace distance, defining
   \begin{equation}
       \tilde P(\bm\alpha_k) = \frac{1}{\pi^k} \left|\braket{\alpha_1\alpha_2\dots\alpha_k|\Phi}\right|^2,
   \end{equation}
   it follows that
   \begin{equation}
       \left|P(\bm\alpha_k) - \tilde P(\bm\alpha_k)\right| \leq \frac12\|\ket{\psi}\bra{\psi} - \ket{\Phi}\bra{\Phi}\|_1 \leq \delta.
   \end{equation}
   And from Theorem \ref{apptheo:rational_Kerr_simulation}, $\tilde P(\bm\alpha_k)$ can be evaluated in time $\mathcal{O}(q^{2L}m^4)$, where $q = \max_{i \in (1,\dots,L)} q_i$. Assuming that the integers in the denominator of the originally rational self-Kerr gates are smaller than the integers in the denominator of the approximately rational self-Kerr gates, we get that $\tilde P(\bm\alpha_k)$ can be computed in time 
   \begin{equation}
       \mathcal O\left(m^4 \frac{\pi^L}{5^{L/2}\epsilon^L}\right) = \mathcal O(\poly(m)^L),
   \end{equation}
   Putting $L = \log(\poly(m))$, we see that we can compute (possibly marginal) probabilites upto inverse polynomial precision in quasipolynomial time.
   \end{proof}
\section{Bound of the 1-norms of the self-Kerr gate decomposition coefficients}\label{appsec:two-norm_bounds}
We have the following result on 1-norms of the self-Kerr gate decomposition coefficients:
\begin{theo}
Given the decomposition of a self-Kerr gate $\kappa(p/q)$ as a sum of $q$ phase-shifters (Lemma \ref{lem:kerr_decomp}), it follows that 
\begin{equation}
    \|g_q\|_1 = \sum_{j=0}^{q-1} |g_{j,p,q}^{(e/o)}| \leq \sqrt{q}.
\end{equation}
\end{theo}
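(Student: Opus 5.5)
The plan is to combine Cauchy--Schwarz with Parseval for the discrete Fourier transform. The coefficients in Lemma~\ref{lem:kerr_decomp} are, up to the factor $1/q$, the discrete Fourier transform of a unimodular sequence. For even $q$ set $h_n = \exp(-i\pi p n^2/q)$, and for odd $q$ set $h_n=\exp(-i\pi p n(n-1)/q)$, for $n=0,\dots,q-1$. Then
\begin{equation}
g_{j,p,q}^{(e/o)} = \frac1q\sum_{n=0}^{q-1} e^{2\pi i jn/q}\, h_n .
\end{equation}

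First, apply Cauchy--Schwarz to the $q$-term sum defining the 1-norm:
\begin{equation}
\|g_q\|_1=\sum_{j=0}^{q-1}|g_{j,p,q}^{(e/o)}|\le \sqrt{q}\Bigl(\sum_{j=0}^{q-1}|g_{j,p,q}^{(e/o)}|^2\Bigr)^{1/2}.
\end{equation}

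Second, compute the 2-norm by Parseval. Expand $|g_j|^2$ as a double sum over $n,n'$ and sum over $j$ using the orthogonality relation Eq.~\eqref{eq:kron_delta}, $\sum_j e^{2\pi i j(n-n')/q}=q\,\delta_{n,n'}$ for $0\le n,n'\le q-1$. This gives
\begin{equation}
\sum_j|g_j|^2=\frac1{q^2}\cdot q\sum_{n}|h_n|^2=\frac1{q^2}\cdot q\cdot q=1,
\end{equation}
since each $|h_n|=1$. Combining the two steps yields $\|g_q\|_1\le\sqrt q$.

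As a consistency check, I would note that the 2-norm identity can also be read off physically. The operator identity $\hat\kappa=\sum_j g_j e^{-i2\pi j\hat N_1/q}$, restricted to the Fock states $n=0,\dots,q-1$, is exactly the unitary DFT relation. Unitarity of the Kerr gate forces the vector $(h_n)$ to have norm $\sqrt q$, which matches $\sum_j|g_j|^2=1$.

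There is no substantial obstacle. The only point requiring care is that in the odd case the phase-shifters carry an extra global factor $e^{-i\pi p\hat N_1/q}$. This factor multiplies every term uniformly and does not affect the $|g_j|$, so the same argument applies verbatim. I would also state explicitly that the index range $0\le n,n'\le q-1$ makes the Kronecker delta exact rather than modulo $q$.
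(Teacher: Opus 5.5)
Your proposal is correct and follows essentially the same route as the paper's proof: Cauchy--Schwarz reduces $\|g_q\|_1$ to $\sqrt{q}$ times the $2$-norm of the coefficient vector. The paper then shows that this $2$-norm equals $1$ by expanding $|g_{j,p,q}^{(e/o)}|^2$ as a double sum and applying the orthogonality relation Eq.~\eqref{eq:kron_delta}, exactly as you do.
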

\begin{proof}
     We show the proof for odd $q$, the proof for even $q$ is similar. Using the Cauchy-Scwartz inequality,
    \begin{equation}
        ||g_q^{(o)}||_1 = \sum_{j=0}^{q-1} |g_{j,p,q}^{(o)}| \leq \left(\sum_{j=0}^{q-1} |g_{j,p,q}^{(o)}|^2\right)^{1/2} \left(\sum_{j=0}^{q-1} 1\right)^{1/2} = \sqrt{q} \left(\sum_{j=0}^{q-1} |g_j^{(o)}|^2\right)^{1/2}.
    \end{equation}
    We note that
    \begin{eqnarray}
        \sum_{j=0}^{q-1} |g_{j,p,q}^{(o)}|^2 &=& \sum_{j=0}^{q-1}  g_{j,p,q}^{(o)*}g_{j,p,q}^{(o)} = \frac{1}{q^2} \sum_{j=0}^{q-1} \sum_{m,n=0}^{q-1} \exp\left(-i\pi \frac{p}{q}\left(n(n-1) - m(m-1)\right)\right)\exp\left(\frac{2\pi i j}{q} (n-m)\right) \nonumber \\
        &=& \frac{1}{q^2} \sum_{m,n=0}^{q-1} \exp\left(-i\pi \frac{p}{q}\left(n(n-1) - m(m-1)\right)\right) \sum_{j=0}^{q-1} \exp\left(\frac{2\pi i j}{q} (n-m)\right) \nonumber \\
        &=& \frac{1}{q} \sum_{m,n=0}^{q-1} \exp\left(-i\pi \frac{p}{q}\left(n(n-1) - m(m-1)\right)\right) \delta_{n-m,0} = \frac{1}{q} \times q = 1,
    \end{eqnarray}
    where in the third line, we are using Eq.~\ref{eq:kron_delta}. Therefore,
    \begin{equation}
        ||g_q^{(o)}||_1 \leq \sqrt{q}.
    \end{equation}
\end{proof}
\end{document}